\newcommand{\unitint}{I}
\renewcommand{\vec}[1]{\mathbf{#1}}
\newcommand{\restricted}{\upharpoonright}
\newcommand{\N}{\mathbb{N}}
\newcommand{\R}{\mathbb{R}}
\newcommand{\Q}{\mathbb{Q}}
\newcommand{\B}{\mathbb{B}}
\newcommand{\Z}{\mathbb{Z}}
\newcommand{\enc}[1]{\langle #1\rangle}
\newcommand{\defref}[1]{Definition~\ref{#1}}
\newcommand{\figref}[1]{Figure~\ref{#1}}
\newcommand{\lemref}[1]{Lemma~\ref{#1}}
\newcommand{\thref}[1]{Theorem~\ref{#1}}
\newcommand{\dom}{\operatorname{dom}}
\newcommand{\bigO}[1]{\mathcal{O}\left(#1\right)}
\definecolor{darkgreen}{rgb}{0.1,0.6,0.1}
\definecolor{darkyellow}{rgb}{0.6,0.6,0}
\title{On The Complexity of Bounded Time and Precision Reachability for Piecewise Affine Systems\thanks{This work was partially supported by DGA Project CALCULS.}}
\author{Hugo Bazille\inst{3} \and Olivier Bournez\inst{1} \and Walid
  Gomaa\inst{2,4} \and Amaury Pouly\inst{1} }
\institute{École Polytechnique, LIX, 91128 Palaiseau Cedex, France
    \and Egypt Japan University of Science and Technology, CSE, Alexandria, Egypt 
    \and ENS Cachan/Bretagne et Université Rennes 1, France
\and Faculty of Engineering, Alexandria University, Alexandria, Egypt}
\begin{document}

\maketitle

\begin{abstract}
Reachability for piecewise affine systems is known to be undecidable,
starting from dimension $2$. 
In this paper we investigate the exact complexity of several decidable
variants of reachability and control questions
for piecewise affine systems. 
We show in particular that the region-to-region bounded time versions
leads to NP-complete or co-NP-complete problems, starting from
dimension $2$.
We also prove that a bounded precision version leads to
$PSPACE$-complete problems. 


 \end{abstract}

\section{Introduction}

A (discrete time) dynamical system $\mathcal{H}$ is given by some space $X$ and a function $f: X \to X$.
A trajectory of the system
starting from $x_0$ is a sequence $x_0,x_1,x_2,\dots$ etc.,  with
$x_{i+1}=f(x_i)=f^{[i+1]}(x_0)$ where $f^{[i]}$ stands for $i^{th}$ iterate
of $f$. A crucial problem in such systems is the \textit{reachability question}:
given a system $\mathcal{H}$ and $R_0,R \subseteq X$, determine if there is a trajectory starting
from a point of $R_0$ that falls in $R$.
Reachabilty is known to be \textit{undecidable} for very simple functions $f$.
Indeed, it is well-known that various types of dynamical systems, such as hybrid systems, piecewise affine systems, or saturated linear
systems, can simulate Turing machines, see e.g., \cite{KCG94,HKPV95,Moo91,SS95}.

This question is at the heart of \textit{verification} of systems.
Indeed, a safety property corresponds to the determination if there is a trajectory
starting from some set $R_0$ of possible initial states to the set $R$
of bad states.
The industrial and economical impact of having
efficient computer tools, that are able to guarantee that a given system does satisfy its
specification, have indeed generated very important literature. Particularly, many undecidability and
complexity-theoretic results about the hardness of verification of safety properties have been obtained
in the model checking community. 
However, as far as we know, the exact complexity of \textit{natural
restrictions} of the reachability question for systems as simple as
piecewise affine maps are not known, despite their practical interest.

Indeed, existing results mainly focus on the frontier between decidability 
and undecidability. For example, it is known that reachability is undecidable for piecewise constant derivative systems
of dimension $3$, whereas it is decidable for dimension $2$
\cite{AMP95}.
It is known that piecewise affine maps of dimension $2$
can simulate Turing machines \cite{KCGComputability}, whereas the question 
for dimension $1$ is still open and can be related to other natural problems \cite{AsaSch02,Asarin:2001:DRP,BC13}.
Variations of such problems
over the integers have recently been investigated
\cite{ben2013mortality}.

Some complexity facts  follow immediately from these (un)computability
results: for example,
point-to-point bounded time reachability for piecewise affine maps is
$P$-complete as it corresponds to configuration to configuration
reachability for Turing machines. 

However, there remain many natural variants of reachability questions
for 
whose complexity have not yet been established.

For example, in the context of verification,  proving the safety of a
system can often be reduced to a reachability question of the form point-to-region or
region-to-region reachability. These variants are more general 
questions than point-to-point reachability. Their complexities do
not follow from existing results. 

In this paper we choose to restrict to the case of piecewise affine maps
and we consider 
the following natural variant of the problem.
\medskip

\noindent\textbf{
BOUNDED TIME:}
we want to know if region $R$ is reached in less than some prescribed time $T$,
with $f$ assumed to be continuous.

\bigskip

\noindent \textbf{FIXED PRECISION:}  given some precision
  $\varepsilon=2^{-n}$, we want to know if region $R$ is reached by 
  $\tilde{f}_{\varepsilon}$ such that $\tilde{f}_{\varepsilon}(x) =
  \lfloor \frac{f(x)}{\varepsilon}\rfloor\varepsilon$. This
  corresponds to truncating $f$ at precision $2^{-n}$ on each coordinate, or equivalently
  assuming that computations happen at some precision not more that
  $2^{-n}$, for some given $n$. In other words, one wants to know if
  $R$ is reached by the dynamics where precision operation is applied
  at each iteration. 

\begin{remark}
We consider a version where everything is rounded downwards to a
multiple of epsilon. Variants could also be considered, with for
example closest or upper rounding. This would not change the
complexity. 
\end{remark}

\begin{remark}
A variant could also be \emph{chain reachability}: Deciding the existence of
sequence $x_n$ from initial region to target region such that at any
intermediate step $i$, $||x_{i+1}-f(x_i) || \le \epsilon$.  We do not
know the complexity of variants based on this idea. 
\end{remark}

\begin{remark}
We consider piecewise affine maps over the domain $[0,1]^d$,
The case of integer domains has been studied in \cite{ben2013mortality} and turns out to be
quite different. We also assume $f$ to be continuous. This makes the hardness result more natural. 
\end{remark}

In an orthogonal way, control of systems or constructions of controllers for systems often yield to dual questions. 
Instead of asking if some trajectory reaches region $R$, one wants to know if all trajectories reach $R$.
The questions of stability, mortality, or nilpotence for piecewise affine maps and saturated linear systems
have been established in \cite{BBKPC00}.
Still in this context, the
complexity of the problem when restricting to bounded time or fixed precision is not known.

This paper provides an exact characterization of the \textit{algorithmic complexity}
of those two types of reachability for discrete time dynamical systems.
Let $PAF_d$ denote the set of piecewise-affine \emph{continuous} functions
over $[0,1]^d$.
At the end we get the following picture.

\begin{remark} Notice that we expect in
several statements time to be given in unary, in order to get a
completeness result. We do not know about the complexity of the
variants where time would be given in binary. 
\end{remark}

\bigskip

\noindent\textbf{Problem:} \texttt{REACH-REGION}\\
\textbf{Inputs:} a continuous $PAF_d$ $f$ and two regions  $R_0$ and $R$ in $\dom(f)$\\
\textbf{Output:} $\exists x_0\in R_0,t\in\N, 
f^{[t]}(x_0)\in R$?
\begin{theorem}[\cite{KCGComputability}]
Problem \texttt{REACH-REGION} is undecidable (and is 
recursively enumerable-complete). 
\end{theorem}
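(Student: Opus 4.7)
The plan is to prove both containment in $\Sigma_1^0$ (recursively enumerable) and $\Sigma_1^0$-hardness. The cited reference does the hardness part, so the proof is really an assembly: upper bound by a semi-algebraic enumeration, lower bound by reduction from the halting problem using a continuous piecewise-affine simulation of Turing machines.

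For the upper bound, I would argue as follows. A $PAF_d$ function with rational data can be described by a finite collection of closed rational polytopes $P_1,\dots,P_k$ covering $[0,1]^d$ together with rational affine maps $A_1,\dots,A_k$; similarly, the input regions $R_0$ and $R$ are given as finite unions of rational polytopes. For each fixed $t\in\N$ the set $\{x_0\in R_0 : f^{[t]}(x_0)\in R\}$ is a semi-algebraic (in fact semi-linear) set whose defining formula can be written down from the description of $f$ by unrolling $t$ iterations and case-splitting on the piece each intermediate image lies in. Nonemptiness of such a set is decidable by quantifier elimination over the ordered field of the rationals, so one obtains a semi-decision procedure by enumerating $t=0,1,2,\ldots$ and checking each such formula.

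For hardness I would reduce from the halting problem for deterministic Turing machines, which is $\Sigma_1^0$-complete. Given a machine $M$ and input $w$, I would encode each configuration $(q,\ell,r)$ (state, left tape, right tape) as a point of $[0,1]^2$ in a standard way: interpret the two tape halves as the digit expansions of two numbers in $[0,1]$ over a sufficiently large integer base, and use a finite affine change of coordinates to also carry the state and the scanned symbol. A single step of $M$ corresponds to a shift on the digit expansions together with a rewriting of the top symbols, which is an affine map on each of finitely many axis-aligned rectangles (one per combination of state and scanned symbol). Taking $R_0$ to be the singleton (or tiny rational rectangle) coding the initial configuration and $R$ to be the union of rectangles coding halting configurations, the map $f$ reaches $R$ from $R_0$ iff $M$ halts on $w$. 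This is exactly the construction of~\cite{KCGComputability}.

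The main obstacle is the continuity requirement built into $PAF_d$. The naive digit-shift construction is only piecewise affine on a partition, and disagrees across boundaries when different local rules would apply. I would resolve this as in the continuous version of the KCG simulation: thicken the rectangles so that distinct transition rules act on disjoint interiors with thin ``buffer'' strips in between, and on each buffer strip define $f$ by affine interpolation between the two neighboring rules. Because legitimate Turing machine configurations are encoded by points whose coordinates avoid these buffers (one uses a base large enough, or enforces a forbidden digit between genuine tape symbols), the simulation of $M$ remains faithful while $f$ is now continuous on all of $[0,1]^2$. One must also ensure that $f$ maps $[0,1]^2$ to itself, which is handled by clipping the affine pieces at the boundary using the same buffer technique. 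With continuity thus secured, the reduction is computable and $M$ halts on $w$ iff the constructed instance of \texttt{REACH-REGION} is positive, giving $\Sigma_1^0$-completeness.
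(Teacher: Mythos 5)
The paper itself does not prove this theorem; it cites it from \cite{KCGComputability} and moves on, so there is no in-paper proof to compare against. Your two-part outline (r.e.\ membership by unrolling $t$ iterations into a semi-linear formula and applying quantifier elimination, r.e.-hardness by the Koiran--Cosnard--Garzon continuous piecewise-affine Turing-machine simulation with buffer strips) is the standard argument and is essentially what the cited reference establishes.

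One point deserves more care than you give it: the hardness direction is clean only if you commit to $R_0$ being a \emph{singleton} (a degenerate but perfectly valid rational polyhedron). The moment you instead take a ``tiny rational rectangle'' around $\enc{c_0}$, the region contains uncountably many points that do not encode legitimate configurations, the expanding digit-shift in the right-tape coordinate amplifies their offset at each step, and you then owe an argument that none of these garbage trajectories wanders into the accepting rectangle $R$ when $M$ does not halt. Your buffer-strip construction is aimed at continuity across piece boundaries, not at controlling the fate of invalid starting points, so as written it does not discharge this obligation. This is precisely the difficulty the paper flags in its introduction (``a (non-trivial) region consists mostly in invalid points \dots the trajectories of all (valid and invalid) points must be carefully controlled'') and is why the paper's own NP-hardness construction for the bounded-time variant works so hard on the saturation and flip regions. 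For the undecidability statement you are proving, the fix is simply to declare $R_0 = \{\enc{c_0}\}$, or to be explicit that the expansion sends perturbed points out of the valid band into a trap region from which $R$ is unreachable; stating only ``singleton or tiny rectangle'' without choosing leaves the second option unsupported.
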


\noindent\textbf{Problem:} \texttt{CONTROL-REGION}\\
\textbf{Inputs:} a continuous $PAF_d$ $f$ and two regions  $R_0$ and $R$ in $\dom(f)$\\
\textbf{Output:} $\forall x_0\in R_0,\exists t \in \N, 
f^{[t]}(x_0)\in R$?

\begin{theorem}[\cite{BBKPC00}]
Problem \texttt{CONTROL-REGION} is undecidable (and is co-recursively
enumerable complete) for $d\geqslant2$.
\end{theorem}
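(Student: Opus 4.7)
\medskip
\noindent\textbf{Proof sketch.}
The plan is to prove membership in co-r.e.\ and co-r.e.-hardness separately. For the upper bound I would semi-decide the complement, which asserts the existence of some $x_0 \in R_0$ with $f$-orbit avoiding $R$. A natural finite certificate is a non-empty rational polytope (or finite union thereof) $K$ such that $K \cap R_0 \ne \emptyset$, $K \cap R = \emptyset$, and $f(K) \subseteq K$; the three conditions are decidable since $f$ is piecewise affine with rational data. Soundness is immediate; completeness follows from a compactness argument: if some orbit in $R_0$ avoids $R$, its $\omega$-limit set is a non-empty compact $f$-invariant set disjoint from $R$, and for piecewise-affine $f$ with rational coefficients one can extract from it a rational semi-linear forward-invariant certificate. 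Enumerating polytopes and testing the conditions thus semi-decides the complement.

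For co-r.e.-hardness I would reduce from non-halting: given $(M, w)$, construct a continuous $f \in PAF_2$ and rational regions $R_0, R \subseteq [0,1]^2$ with ``$M$ does not halt on $w$'' iff ``every orbit from $R_0$ eventually enters $R$''. Starting from the classical continuous 2D simulation of Turing machines by piecewise-affine maps \cite{KCGComputability}, I would add (a) a ``sink'' mechanism that forces every still-running orbit into $R$ in finite time (for instance a scaling coordinate halving at each non-halting step, combined with a folding to keep the dynamics in $[0,1]^2$ and a threshold region $R$ crossed as soon as the scaling is small enough), and (b) a ``halting trap'': whenever the simulated TM enters a halting configuration, the orbit is diverted into a region disjoint from $R$ in which it remains forever. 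The region $R_0$ is chosen as a rational neighbourhood of the initial configuration, with $f$ designed so that every point of $R_0$ is funneled onto the canonical simulation orbit within a bounded number of steps. Then if $M$ never halts on $w$, every orbit eventually reaches $R$ via the sink mechanism, so \texttt{CONTROL-REGION} is YES; if $M$ halts on $w$, the canonical orbit is caught by the halting trap and never reaches $R$, so \texttt{CONTROL-REGION} is NO.

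The main obstacle will be reconciling the ``divert on halt'' behaviour with the continuity of $f$, since the halting predicate is naturally a discrete test. This is handled by the piecewise-linear interpolation techniques already developed for the continuous Turing simulation of \cite{KCGComputability}: the halting configurations form a closed rational subregion, and one can continuously blend the halting trap and the sink mechanism across its boundary. Dimension $2$ suffices because the underlying Turing simulation is itself two-dimensional, and the additional ``sink'' and ``funnel'' coordinates can be encoded within the same $[0,1]^2$ by further refining the piecewise structure of $f$.
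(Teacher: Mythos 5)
The paper does not actually prove this theorem; it is stated with a citation to \cite{BBKPC00}, so there is no ``paper's own proof'' to compare against. Evaluating your sketch on its own terms, both halves have genuine gaps.

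For the upper bound, the certificate scheme is sound but not complete, and your compactness argument does not repair this. A forward-invariant rational polytope $K$ with $K\cap R_0\neq\emptyset$ and $K\cap R=\emptyset$ need not exist even when a trajectory avoids $R$ forever. Concretely, take $f(x,y)=(x/2,\,y)$, $R=\{(x,y): 0\leqslant x\leqslant y\leqslant 1\}$ (a triangle with nonempty interior), and $R_0$ a small box around $(1,0)$. The orbit of $(1,0)$ is $(2^{-t},0)$, which never lies in $R$ but converges to $(0,0)\in R$; any closed $f$-invariant set containing $(1,0)$ must contain $(0,0)$, hence meets $R$. So no polytope certificate exists although the instance is a NO instance of \texttt{CONTROL-REGION}. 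Your appeal to the $\omega$-limit set does not help: the $\omega$-limit set need not meet $R_0$ at all, and more importantly there is no reason a compact $f$-invariant set disjoint from $R$ can be blown up to a rational forward-invariant polyhedron disjoint from $R$ --- in this example the only candidate limit set is $\{(0,0)\}\subset R$. Establishing co-r.e.\ membership requires a different argument (and indeed the precise topological formulation of $R$ matters here, which is part of why \cite{BBKPC00} is nontrivial).

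For the lower bound, the structure is reasonable (reduce from non-halting, add a sink plus a halting trap), but you wave away exactly the step the paper identifies as the real obstacle for region-to-region questions: almost all points of a nontrivial region $R_0$ do not encode valid TM configurations, and their trajectories under the classical continuous simulation of \cite{KCGComputability} are uncontrolled. Your claim that $R_0$ can be chosen so that ``every point of $R_0$ is funneled onto the canonical simulation orbit within a bounded number of steps'' is precisely what needs proof, and it does not follow from the classical encoding --- this is the same difficulty the present paper solves, for its bounded-time variants, by inventing a new encoding with explicitly controlled behaviour on all the extra points. Likewise, introducing a scaling/sink coordinate and then ``encoding it within the same $[0,1]^2$'' is a nontrivial construction that has to be reconciled with the two tape-halves already occupying the two coordinates. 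As it stands, the hardness half is a plausible plan rather than a proof.
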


\noindent\textbf{Problem:} \texttt{REACH-REGION-TIME}\\
\textbf{Inputs:} a time $T\in\N$ in unary, a continuous $PAF_d$ $f$ and two regions  $R_0$ and $R$ in $\dom(f)$\\
\textbf{Output:} $\exists x_0\in R_0, \exists t\leqslant T, f^{[t]}(x_0)\in R$?

\begin{theorem}[Theorems
  \ref{th:reach_np_hard} and \ref{th:reach_in_np}]\label{th:reach_reg_time_np}
\texttt{REACH-REGION-TIME} is NP-complete for $d\geqslant2$.
\end{theorem}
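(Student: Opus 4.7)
The theorem asserts two things: \texttt{REACH-REGION-TIME} is in NP, and it is NP-hard for $d \geq 2$. I would prove each direction separately.

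For NP membership, my plan is to guess the sequence of affine pieces that a successful trajectory traverses. The PAF $f$ is specified by polynomially many affine pieces $f_1,\ldots,f_k$ on polyhedral domains. A certificate is a pair $(t,(P_0,\ldots,P_{t-1}))$ with $t\leq T$ and $P_j\in\{1,\ldots,k\}$; this has polynomial size since $T$ is in unary. Composing the selected pieces gives an affine expression $x_j=A_j x_0+b_j$ for each intermediate point. The conditions $x_0\in R_0$, $x_j\in\dom(f_{P_j})$ for $0\leq j<t$, and $x_t\in R$ all become linear constraints on $x_0$, so feasibility reduces to linear programming and is decidable in polynomial time. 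A bit-length analysis (the numerators and a common denominator of $A_j,b_j$ grow at most linearly in $j$) ensures that the LP has polynomial size in the input.

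For NP-hardness, I would reduce from the standard NP-complete problem of deciding whether a nondeterministic Turing machine $M$ accepts an input $w$ within $T$ steps, with $T$ given in unary. First, convert $M$ into a deterministic TM $M'$ that reads an additional advice string $g\in\{0,1\}^T$ and simulates $M$ on $w$ using $g$ to resolve nondeterministic choices; thus $M$ accepts $w$ in $T$ steps iff some $g$ makes $M'(w,g)$ accept. Using the cited simulation of Turing machines by piecewise affine maps in dimension $2$, build a continuous PAF $f$ on $[0,1]^2$ implementing one step of $M'$. Take $R_0$ to be the polyhedral set of encodings of initial configurations of $M'$ on input $(w,g)$, with $w$ fixed and $g$ packed into the binary expansion of one coordinate of $x_0$; take $R$ to be the set of encodings of accepting configurations; and set the PAF time bound to be the polynomial running time of $M'$. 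Correctness then follows: $M$ accepts $w$ within $T$ steps iff some $x_0\in R_0$ reaches $R$ within the prescribed time bound.

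The main obstacle will be making the Turing machine simulation simultaneously continuous and capable of decoding the guess bits packed into $R_0$. Classical TM-to-PAF simulations are typically discontinuous, so one needs a continuous construction, for instance one that interpolates affinely on the complement of the encoded configurations in a way that does not disturb the true trajectories. Likewise, extracting the bits of $g$ must be done by a continuous piecewise affine gadget (e.g.\ a tent-like map that shifts the remainder while depositing the current bit into an auxiliary coordinate). Once a continuous simulation and a continuous bit-decoder are in hand, the remaining verification---that $R_0$ and $R$ are genuine polyhedral regions and that the number of iterations is polynomial---is routine bookkeeping.
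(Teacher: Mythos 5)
Your NP-membership argument is essentially the paper's: guess the signature of regions, compose the corresponding affine pieces, reduce to a polynomial-size linear feasibility check, and invoke a bit-growth bound to control the size of the composed matrices (the paper's Lemmas~\ref{lem:affine_to_linear}, \ref{lem:sz_paf_it}, \ref{lem:sys_poly_size_sol}). That half is fine.

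The NP-hardness half has a genuine gap, and it is exactly the gap the paper is written to avoid. You propose to encode the nondeterministic choices $g$ into the binary expansion of one coordinate and take $R_0$ to be the polyhedral set of all such encodings. But a polyhedron in $[0,1]^2$ whose coordinate ranges over all binary guess strings is essentially an interval, and almost every point of that interval is \emph{not} the encoding of any finite guess string, nor of any valid TM configuration. Region-to-region reachability asks whether \emph{some} point of $R_0$ reaches $R$, so the construction must guarantee that no such ``garbage'' point accidentally reaches $R$ --- otherwise the reduction is unsound. This is precisely what the paper calls out: ``a (non-trivial) region consists mostly of invalid points,'' and classical TM-to-PAF encodings offer no control over their trajectories. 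You acknowledge the obstacle (``one needs a continuous construction \ldots that does not disturb the true trajectories,'' a ``continuous bit-decoder'') but then declare the rest ``routine bookkeeping,'' which assumes away the hard part. The paper instead sidesteps TM simulation entirely and reduces from \texttt{SUBSET-SUM}: a tailor-made two-dimensional map with base $\beta=5$ on the guess coordinate, explicit extra subregions $R_{i,0}, R_{i,2}, R_{i,3}^{lin}, R_{i,3}^{sat}$ with carefully chosen affine formulas to keep $f$ continuous and stable, a ``flip'' trick giving every invalid point a meaningful interpretation inside an extended problem, and a reversibility lemma (Lemma~\ref{lem:simul_rev}) plus a backward-forward argument (Lemma~\ref{lem:backward_forward}) that forces any trajectory landing in $R_{fin}$ to come from a genuine subset-sum witness. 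None of this machinery has an analogue in your sketch, and without it the soundness direction (``$\mathcal{J}$ satisfiable $\Rightarrow$ $\mathcal{I}$ satisfiable'') is unproved. If you want to pursue a TM-based reduction you would need, at minimum, a concrete argument bounding where the non-configuration points of $R_0$ can go under a \emph{continuous} piecewise-affine extension, which is the crux the paper devotes Section~3 to.
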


\noindent\textbf{Problem:} \texttt{CONTROL-REGION-TIME}\\
\textbf{Inputs:} a time $T\in\N$ in unary, a continuous $PAF_d$ $f$ and two regions  $R_0$ and $R$ in $\dom(f)$\\
\textbf{Output:} $\forall x_0\in R_0, \exists t\leqslant T, f^{[t]}(x_0)\in R$?

\begin{theorem}[Theorems
  \ref{th:coreach_conp_hard} and \ref{th:coreach_conp}]
\label{th:ctl_reg_time_conp}
\texttt{CONTROL-REGION-TIME} is coNP-complete for $d\geqslant2$.
\end{theorem}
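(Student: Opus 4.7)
The plan is to show both coNP-membership and coNP-hardness for \texttt{CONTROL-REGION-TIME}, in both cases by dualizing the arguments used to prove \thref{th:reach_reg_time_np}.

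For coNP-membership, I would exhibit an NP algorithm for the complement, namely $\exists x_0\in R_0,\ \forall t\leq T,\ f^{[t]}(x_0)\notin R$. Mirroring the NP-membership proof for \texttt{REACH-REGION-TIME}, I would nondeterministically guess, for each $i\leq T$, the affine piece of $f$ containing $f^{[i]}(x_0)$. Since $T$ is given in unary and $f$ has only polynomially many pieces, this combinatorial certificate has polynomial size. Conditioned on the guess, the conjunction of constraints ``$x_0\in R_0$'', ``$f^{[i]}(x_0)$ lies in its prescribed piece for every $i$'' and ``$f^{[t]}(x_0)\notin R$ for every $t\leq T$'' is a system of polynomially many linear inequalities in the unknown $x_0$; its feasibility is decided by a polynomial-size linear program, which can be verified in polynomial time.

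For coNP-hardness, I would reduce from \textsc{Tautology}. The key observation is that the NP-hardness construction of \thref{th:reach_np_hard} is naturally \emph{pointwise}: each $x_0\in R_0$ encodes a Boolean assignment $\alpha(x_0)$ to the variables of the input formula $\varphi$, and the PAF $f$ simulates the evaluation of $\varphi$ on this assignment so that the orbit of $x_0$ reaches $R$ within time $T$ if and only if $\alpha(x_0)$ satisfies $\varphi$. Handed the same instance $(f,R_0,R,T)$, \texttt{CONTROL-REGION-TIME} now asks whether \emph{every} assignment satisfies $\varphi$, which is exactly \textsc{Tautology}. Dimension $2$ is inherited from the reduction for \thref{th:reach_np_hard}.

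The main obstacle is ensuring that the pointwise semantics of the reduction is preserved under the continuity requirement on $f$. The $2^n$ cells of $R_0$ encoding different assignments must be glued continuously, so on their shared boundaries the encoded assignment becomes ambiguous. For the NP direction it is enough that \emph{some} satisfying assignment has an interior witness, but in the coNP direction every $x_0$, boundary points included, must reach $R$ whenever $\varphi$ is a tautology. A standard remedy is to arrange that a boundary trajectory accepts as long as one of its adjacent encoded assignments does---for instance, by making $R$ an absorbing trap along accepted trajectories, or by designing the simulation so that the evaluation of $\varphi$ along a boundary orbit dominates the truth values of both neighboring assignments. Carefully verifying this uniform acceptance at all boundary points is the technically delicate step.
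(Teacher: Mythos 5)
Your coNP-membership argument is essentially the same as the paper's: nondeterministically guess the signature (the sequence of affine pieces traversed), form a polynomial-size linear system using the bounds on the size of iterated matrix products, and verify feasibility. Both you and the paper gloss over the same small technical point — the constraint ``$f^{[t]}(x_0)\notin R$'' is a disjunction (one facet of the polyhedron $R$ must be violated), not a single linear inequality, so you also need to guess, for each $t$, which defining inequality of $R$ is violated and possibly deal with the resulting strict inequality. This is a routine fix; the overall approach matches.

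Your coNP-hardness argument, however, has a genuine gap, and it is precisely the difficulty the paper flags in the introduction. The ``key observation'' you rely on — that each $x_0\in R_0$ encodes an assignment and its orbit reaches $R$ \emph{iff} that assignment is accepting — is false for the paper's construction. The initial region $R_0=[0,2^{-p-1}]\times[0,1]$ consists almost entirely of points that encode no configuration at all; the valid encodings $\enc{c}$ are a finite (hence measure-zero) subset. The simulation is carefully extended so that these invalid points wander into trap subregions $R_{i,0}$, $R_{i,2}$, $R_{i,3}$ and never reach $R_{fin}$, regardless of whether the underlying instance is satisfiable. So if you simply hand the instance $(f,R_0,R_{fin},T)$ to \texttt{CONTROL-REGION-TIME}, the answer is trivially ``no'' for every non-degenerate input, and the reduction collapses. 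Your worry about boundary points is thus far too optimistic: the failure is on a full-measure interior set, not just on lower-dimensional seams. (Also, as a secondary issue, the construction simulates \texttt{SUBSET-SUM}, not Boolean formula evaluation, so the natural source is \texttt{NOSUBSET-SUM} rather than \textsc{Tautology}; but that is cosmetic.)

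The paper's fix requires a genuinely new idea you are missing: rather than asking whether every point reaches the accepting region, one changes the \emph{target} region so that it acts as a default sink. Concretely, $R_{n+1}$ is split into $R_{low}$, $R_{fin}$, $R_{high}$ with $f(R_{low})\subseteq R_{low}$, $f(R_{fin})\subseteq R_{fin}$, $f(R_{high})\subseteq R_{low}$, and the control target is taken to be $R'_{fin}=R_{low}$. Then every trajectory from $R_0$ eventually lands in $R_{low}=R'_{fin}$ \emph{unless} it first enters $R_{fin}$ (which is $f$-stable and disjoint from $R'_{fin}$). So ``some trajectory never reaches $R'_{fin}$'' is equivalent to ``some trajectory reaches $R_{fin}$'', and the backward-forward lemma converts that back into the existence of a subset summing to $B$. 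This inversion — controlling into the complement-sink rather than into the accepting set — is the step your proposal does not contain and cannot be patched by any local treatment of boundaries.
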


\noindent\textbf{Problem:} \texttt{REACH-REGION-PRECISION}\\
\textbf{Inputs:} a continuous $PAF_d$ $f$, two regions $R_0$ and $R$ in $\dom(f)$
and $\varepsilon=2^{-n}$, $n$ given in unary\\
\textbf{Output:} $\exists x_0\in R_0,t\in\N, (\tilde{f}_\varepsilon)^{[t]}(x_0)\in R$?\\

\noindent\textbf{Problem:} \texttt{CONTROL-REGION-PRECISION}\\
\textbf{Inputs:} a continuous $PAF_d$ $f$, two regions $R_0$ and $R$ in $\dom(f)$
and $\varepsilon=2^{-n}$, $n$ given in unary\\
\textbf{Output:} $\forall x_0\in R_0,\exists t\in\N, (\tilde{f}_\varepsilon)^{[t]}(x_0)\in R$?

\begin{theorem}[Theorems \ref{th:reach_reg_prec_in_pspace} and \ref{th:reach_reg_prec_in_pspace}]
\label{th:approx_reg_pspace}
\texttt{REACH-REGION-PRECISION} is $PSPACE$-complete for
$d\geqslant2$.
\end{theorem}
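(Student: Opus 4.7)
The plan is to establish PSPACE membership and PSPACE-hardness separately.

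For membership, observe that once $\tilde{f}_\varepsilon$ has been applied even once, every iterate lies on the finite grid $G_\varepsilon = (\varepsilon\Z)^d \cap [0,1]^d$ of size $(2^n+1)^d$. Each grid point has an $O(nd)$-bit description, polynomial because $n$ is given in unary; given such a point $x$, computing $\tilde{f}_\varepsilon(x)$ reduces to locating the affine piece of $f$ containing $x$, evaluating that piece, and coordinatewise truncation, all in polynomial time from the input. Reachability on the induced directed graph can then be decided by nondeterministically guessing a path of length at most $|G_\varepsilon|$ while storing only the current vertex and an $O(nd)$-bit step counter; this is NPSPACE and hence PSPACE by Savitch's theorem. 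The very first step, which asks whether some $x_0\in R_0$ maps into the chosen starting grid cell under $\tilde{f}_\varepsilon$, is a polyhedral nonemptiness test and is polynomial-time.

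For hardness, I reduce from acceptance of a fixed polynomial-space Turing machine $M$ on input $w$; let $s=s(|w|)$ be its space bound. A configuration of $M$ is encoded as a point $(x,y)\in[0,1]^2$ in which $x$ stores the reversed left portion of the tape and $y$ the right portion starting under the head, both in a fixed small base, with the machine state and the scanned symbol living in the leading digits. One step of $M$ is an affine shift-plus-rewrite and can be realised by finitely many linear pieces of some $f\in PAF_2$, the active piece being selected by those leading digits. Choosing $\varepsilon = 2^{-p(|w|)}$ with $p$ slightly larger than $s$, letting $R_0$ encode the initial configuration and $R$ the set of encodings of accepting configurations, one verifies that $\tilde{f}_\varepsilon^{[t]}(x_0)\in R$ for some $t$ iff $M$ accepts $w$. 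The whole construction is logspace-computable, so PSPACE-hardness follows.

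The main obstacle is ensuring that $f$ is \emph{continuous}, since the textbook TM-to-PWA simulations are discontinuous at the boundaries between pieces that rewrite to different symbols or enter different states. I expect to use the standard remedy of working in a larger base, say base $4$, and reserving only the odd digits $\{1,3\}$ to represent the tape symbols $\{0,1\}$; the even digits then form buffer strips on which $f$ is defined by affine interpolation between the neighbouring pieces. Because every valid encoding lives on the $\varepsilon$-grid at odd digit positions, no reachable iterate ever enters a buffer strip, and continuity across shared faces is built into the interpolation. With this in place, truncating to precision $\varepsilon$ becomes exactly the operation of discarding tape positions past the polynomial space bound of $M$, closing the loop between the bounded-precision dynamics and polynomial-space computation.
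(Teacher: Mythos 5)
Your proposal follows essentially the same approach as the paper: for membership, reduce to graph reachability on the $\varepsilon$-grid and invoke Savitch; for hardness, simulate a space-bounded Turing machine by a continuous piecewise-affine map in dimension 2, spacing the encoding digits in a larger base so that encodings of distinct configurations sit at positive distance (allowing a continuous interpolating extension), and choosing $\varepsilon$ fine enough that truncation is the identity on all reachable encodings. The paper reduces from the LBA acceptance problem \texttt{LINSPACE-WORD} and uses even digits $2w_i$ in base $2^\gamma$, whereas you use a generic polynomial-space machine and odd digits in base $4$, but these are cosmetic variations on the same argument.
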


\begin{theorem}[Theorems \ref{th:reach_reg_prec_in_pspace} and \ref{th:reach_reg_prec_in_pspace}]
\texttt{CONTROL-REGION-PRECISION} is $PSPACE$-complete for
$d\geqslant2$.
\end{theorem}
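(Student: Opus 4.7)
The plan is to establish PSPACE membership and PSPACE-hardness independently, leveraging the finite-grid structure that the truncation $\tilde f_\varepsilon$ imposes after one step. The key observation is that every point in the image of $\tilde f_\varepsilon$ lies on the grid $G = (\varepsilon \Z)^d \cap [0,1]^d$, a set of $(2^n+1)^d$ points each stored in $O(nd)$ bits; since $n$ is given in unary, these coordinates have polynomial bit-size.

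\emph{Membership.} I would show the complement lies in NPSPACE and invoke Savitch's theorem. The complement asks whether there exists $x_0 \in R_0$ whose $\tilde f_\varepsilon$-orbit never meets $R$. A nondeterministic polynomial-space procedure nondeterministically either witnesses a point $x_0 \in R_0 \setminus R$ at which $\tilde f_\varepsilon(x_0) \in R_0 \cap G$ cycles in $R^c$, or, equivalently and more cleanly, guesses a grid point $g \in G \setminus R$, certifies that $g = \tilde f_\varepsilon(x_0)$ for some $x_0 \in R_0$ (a linear feasibility test on the polytope $R_0 \cap f^{-1}\bigl(\prod_i [g_i, g_i + \varepsilon)\bigr)$, solvable in P since $f$ is piecewise affine on a polyhedral subdivision), and then simulates the deterministic orbit $g, \tilde f_\varepsilon(g), \tilde f_\varepsilon^{[2]}(g), \ldots$ for at most $|G|$ steps using a polynomial-size counter. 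Because the orbit is deterministic on a finite set, if it has not entered $R$ after $|G|$ iterations it must be trapped in a cycle avoiding $R$ forever, and the machine accepts.

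\emph{Hardness.} I would reduce from the PSPACE-complete problem of acceptance of a deterministic polynomial-space Turing machine $M$ on input $w$. Reusing the construction behind \thref{th:approx_reg_pspace}, one builds a continuous piecewise affine $f : [0,1]^2 \to [0,1]^2$, a precision $\varepsilon = 2^{-n}$ with $n = \Theta(p(|w|))$ given in unary, a singleton $R_0 = \{c_0\}$ that encodes the initial configuration of $M$ as a grid point, and a region $R$ that encodes the accepting configurations, such that $M$ accepts $w$ iff the $\tilde f_\varepsilon$-orbit of $c_0$ hits $R$. Because $R_0$ is a single point and $\tilde f_\varepsilon$ is a function, the universal quantifier in CONTROL-REGION-PRECISION is vacuous, so this instance yields the same answer as the REACH-REGION-PRECISION instance; hence CONTROL-REGION-PRECISION is PSPACE-hard already for $d = 2$.

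The main obstacle is the hardness construction itself: one must design a continuous two-variable piecewise affine map whose $\varepsilon$-truncation precisely realises one transition step of a polynomial-space TM on a grid encoding of configurations. That technical simulation is inherited verbatim from the proof of \thref{th:approx_reg_pspace}, and the only additional observation required here is that the reduction produces a singleton $R_0$, so reach and control coincide. On the membership side, the only subtle point is handling an arbitrary polyhedral $R_0$ without enumerating its exponentially large image under $\tilde f_\varepsilon$, which is resolved by replacing that set with on-the-fly per-grid-point LP feasibility checks.
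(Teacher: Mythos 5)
Your hardness argument coincides with the paper's own (Theorem~\ref{th:reach_reg_prec_pspace_hard}): the \texttt{LINSPACE-WORD} reduction produces a singleton (or tiny-ball) $R_0$, so the universal quantifier in the control version is vacuous and the very same instance establishes hardness of both \texttt{REACH-REGION-PRECISION} and \texttt{CONTROL-REGION-PRECISION}.

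For membership you take a genuinely different route. The paper (Theorem~\ref{th:reach_reg_prec_in_pspace}) builds an explicit finite graph over the $N^d$ cells $C_\alpha$, with an edge $(\alpha,\beta)$ whenever $f(C_\alpha)\cap C_\beta\neq\emptyset$, source set $S$ the cells met by $f(R_0)$, target set $T$ the cells met by $R$, and then solves graph accessibility (``every $S$-vertex reaches $T$'' for control) in $\bigO{\log^2(N^d)}$ space via Savitch. You instead guess a single grid point $g$, certify by a linear-feasibility test (one LP per affine piece of $f$) that $g\in\tilde f_\varepsilon(R_0)$, and then deterministically iterate $\tilde f_\varepsilon$ from $g$ with an $\bigO{nd}$-bit step counter, using that $\tilde f_\varepsilon$ restricted to the grid is single-valued and must enter a cycle within $|G|$ steps. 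Your formulation has a slight advantage in fidelity: once past the first step the true dynamics is a deterministic map on $G$, and your simulation tracks it exactly, whereas the paper's cell-level edge relation and cell-level target set can admit extra transitions beyond those of the actual orbit of a corner point. Both arguments ultimately hinge on $NPSPACE=PSPACE$ and closure of $PSPACE$ under complement.

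One detail to repair: in your ``cleaner'' reformulation you drop the constraint $x_0\notin R$. Since $t=0$ is allowed, any $x_0\in R_0\cap R$ already satisfies the control objective and so cannot witness its failure, yet your LP test would accept such an $x_0$. Reinstate $x_0\in R_0\setminus R$, as your first phrasing does: the complement of $R$ is a finite union of half-spaces, so the feasibility check on $(R_0\setminus R)\cap f^{-1}\left(\prod_i[g_i,g_i+\varepsilon)\right)$ remains a finite disjunction of linear programs, one per affine piece of $f$ and per defining half-space of the complement, and stays in $P$.
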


All our problems are 
region-to-region reachability questions, and requires new proof techniques.

Indeed, classical tricks to simulate a Turing machine using a piecewise affine
maps  encode a Turing machine configuration by a
point, and 
assume that all the points of the trajectories encode (possibly ultimately) valid Turing
machines configurations. 

 This is not a problem in the context of
point-to-point reachability, but this can not be extended to region-to-region reachability. Indeed,  a (non-trivial) region   
consists mostly in invalid points: almost all points do not correspond to
encoding of Turing machines for all the encodings considered in
\cite{KCGComputability,BBKPC00}. 

In order to establish hardness results, the trajectories of all (valid
and invalid) points must be carefully
controlled. This turns out not to be easily possible using the classical
encodings.

Let us insist on the fact that we restrict our results to continuous dynamics.   In this context,
this is an  additional source of difficulties:  
Dealing with
points and trajectories not corresponding to valid configurations or
evolutions.

A short version of this paper has been presented at the conference
``Reachability Problems 2014'' \cite{RP14}. The current journal version
contains full proofs for all statements, and is also providing new
results: bounded precision variants (problems \texttt{REACH-REGION-PRECISION}
and \texttt{CONTROL-REGION-PRECISION}) were not considered in short
version \cite{RP14}.

\section{Preliminaries}

\subsection{Notations}

The set of non-negative integers is denoted $\N$ and the set of the first $n$
naturals is denoted $\N_n=\{0,1,\ldots,n-1\}$.
For any finite set $\Sigma$, let $\Sigma^*$ denote the set of finite words over $\Sigma$. For any word
$w\in\Sigma^*$, let $|w|$ denote the length of $w$. 
Finally, let $\lambda$ denote the empty word.
If $w$ is a word, let $w_1$ denote its first character, $w_2$ the second one and so on.
For any $i,j \in \N$, let $w_{i\ldots j}$ denote the subword $w_iw_{i+1}\ldots w_j$.
For any $\sigma\in\Sigma$, and $k\in\N$, let  $\sigma^k$ denote the word of length $k$
where all symbols are $\sigma$.
For any function $f$, let $f\restricted E$ denote the restriction of $f$ to $E$ and let $\dom(f)$ denote
the domain of definition of $f$.

\subsection{Piecewise affine functions}

Let $\unitint$ denote the unit interval $[0,1]$. Let $d \in \N$. 
A convex closed polyhedron in the space $\unitint^d$ is the solution set of some linear system of inequalities:
\begin{equation}
\begin{aligned}
\label{eqn: convex closed polyhedron}
A\vec{x} \le \vec{b}
\end{aligned}
\end{equation}
with coefficient matrix $A$ and offset vector $\vec{b}$. Let $PAF_d$
denote the set of piecewise-affine continuous functions over
$\unitint^d$: That is to say, 
 any $f\colon\unitint^d\to\unitint^d$ in $PAF_d$, $f$ satisfies:

\begin{itemize}
\item[\textbullet] $f$ is continuous,
\item[\textbullet] there exists a sequence $(P_i)_{1\le i\le p}$ of convex closed polyhedra
with nonempty interior such that $f_i = f\restricted P_i$ is affine,
$\unitint^d = \bigcup_{i=1}^p P_i$ and $\mathring{P_i} \cap \mathring{P_j} = \emptyset$ 
for $i\not=j$, where $\mathring{P}$ denotes the interior of $P$. 
\end{itemize}

In the following discussion we will always assume that any polyhedron $P$ can be defined by a finite set of linear inequalities,  where 
all the elements of $A$ and $\vec{b}$ in  (\ref{eqn: convex closed polyhedron}) are all rationals.
A polyhedron over which $f$ is affine will also be called a region. 

\subsection{Decision problems}

In this paper, we will show hardness results by reduction from known hard problems.
We give the statement of these latter problems in the following.

\medskip

\noindent\textbf{Problem:} \texttt{SUBSET-SUM}\\
\textbf{Inputs:} a goal $B\in\N$ and integers $A_1,\ldots,A_n\in\N$.\\
\textbf{Output:} $\exists I\subseteq\{1,\ldots,n\}, \sum_{i\in I}A_i=B$?

\begin{theorem}[\cite{GJ79}]\label{th:subset_sum_npc}
\texttt{SUBSET-SUM} is NP-complete.
\end{theorem}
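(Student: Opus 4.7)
The plan is to prove both membership in NP and NP-hardness.

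For the NP upper bound, I would exhibit a straightforward certificate: the witness is the subset $I \subseteq \{1,\ldots,n\}$, encoded as a bit-vector of length $n$. The verifier adds the corresponding $A_i$'s using standard binary arithmetic and checks equality with $B$. All arithmetic fits in a polynomial number of bits since the $A_i$ and $B$ are given in binary, so the total running time is in $\bigO{n \cdot (\log B + \max_i \log A_i)}$.

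For NP-hardness, I would reduce from \texttt{3-SAT}. Given a formula $\varphi$ with variables $x_1,\ldots,x_n$ and clauses $C_1,\ldots,C_m$, I will build $2n+2m$ integers whose base-$10$ representations have $n+m$ digits. The first $n$ digit positions track variable choices; the last $m$ positions track clause satisfaction. For each variable $x_i$, I introduce two numbers: one with a $1$ in the $i$-th variable position and additional $1$'s in the clause positions where $x_i$ appears positively, and its mirror image for the negative literal. For each clause $C_j$, I add two slack integers whose only nonzero digits are a $1$ and a $2$, respectively, at position $n+j$. The target $B$ has a $1$ in every variable position and a $4$ in every clause position.

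I would then establish the two directions. First, any satisfying assignment of $\varphi$ yields a subset summing to $B$: select the literal-integer matching each truth value, then top up each clause position to $4$ using the slack integers (feasible because $1$, $2$, and $3 = 1+2$ all represent valid slack totals, covering the cases where $3$, $2$, or $1$ literals of the clause are satisfied). Conversely, any subset summing to $B$ induces a satisfying assignment: the variable positions force exactly one of the two integers per variable to be chosen, and no clause position can reach $4$ using only slacks (which sum to at most $3$), so each clause must receive a contribution from at least one literal-integer, i.e., at least one literal is true.

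The main obstacle, as in any such digit-gadget reduction, is the possibility of carries propagating between columns and breaking the digit-by-digit correspondence. I will control this by observing that every column's total contribution is uniformly bounded: at most $2$ in the variable columns and at most $6$ in the clause columns (three literals plus slacks $1$ and $2$). Since both are strictly less than $10$, working in base $10$ guarantees carry-free addition, so a subset sums to $B$ if and only if the digits match position by position, which is precisely the correspondence described above. This completes the reduction and shows \texttt{SUBSET-SUM} is NP-hard, hence NP-complete together with the membership argument.
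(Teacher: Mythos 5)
Your proof is correct. Both directions of the reduction check out, and the carry-free argument is the key point: variable columns can receive at most $2$ and clause columns at most $3 + 1 + 2 = 6$, both strictly below the base $10$, so addition is coordinate-wise and a subset sums to $B$ if and only if every column matches. The NP membership argument is standard and correct.

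Note, however, that the paper does not actually prove this theorem --- it is stated as a citation to Garey and Johnson, so there is no in-paper proof to compare against. Your argument is the direct digit-gadget reduction from \texttt{3-SAT} in the style of Sipser's textbook, which is more self-contained than the original [GJ79] route (which proceeds through \texttt{3-DIMENSIONAL MATCHING} and \texttt{PARTITION} before reaching subset-sum-type problems). Either route is acceptable; yours is shorter if one is willing to take \texttt{3-SAT} as the hard starting point. One small point worth making explicit in a polished write-up: you should assume (without loss of generality) that each literal appears at most once per clause, so that the literal integer for $x_i$ contributes exactly $1$, not more, to any clause column; and that the clause columns stay bounded even for clauses with fewer than three literals (trivially true, since fewer contributions only lowers the total).
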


\noindent\textbf{Problem:} \texttt{NOSUBSET-SUM}\\
\textbf{Inputs:} a witness $B\in\N$ and integers $A_1,\ldots,A_n\in\N$.\\
\textbf{Output:} $\forall I\subseteq\{1,\ldots,n\}, \sum_{i\in I}A_i\neq B$?

\begin{theorem}\label{th:nosubset_sum_conpc}
\texttt{NOSUBSET-SUM} is coNP-complete.
\end{theorem}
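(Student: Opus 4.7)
The plan is to observe that \texttt{NOSUBSET-SUM} is, as a decision problem, the exact complement of \texttt{SUBSET-SUM}: the inputs are identical and the output $\forall I\subseteq\{1,\ldots,n\},\ \sum_{i\in I}A_i\neq B$ is the negation of $\exists I\subseteq\{1,\ldots,n\},\ \sum_{i\in I}A_i=B$. So the statement is an immediate corollary of \thref{th:subset_sum_npc} together with the general fact that a language $L$ is NP-complete (under polynomial-time many-one reductions) if and only if its complement $\overline{L}$ is coNP-complete.

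Concretely, I would proceed in two short steps. First, for membership in coNP, I exhibit a polynomial-time verifier for the complement language, namely \texttt{SUBSET-SUM} itself: given an instance $(B,A_1,\ldots,A_n)$ and a candidate witness $I\subseteq\{1,\ldots,n\}$ (encoded as a bit vector of length $n$), one checks in polynomial time whether $\sum_{i\in I}A_i=B$; this certificate exists for a ``no'' instance of \texttt{NOSUBSET-SUM}, which is exactly the coNP condition.

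Second, for coNP-hardness, I give the trivial reduction from \texttt{SUBSET-SUM} to \texttt{NOSUBSET-SUM}: the identity map on instances $(B,A_1,\ldots,A_n)$ is a polynomial-time many-one reduction from \texttt{SUBSET-SUM} to the complement of \texttt{NOSUBSET-SUM}. Since \texttt{SUBSET-SUM} is NP-hard by \thref{th:subset_sum_npc}, its complement is coNP-hard, and this coincides with \texttt{NOSUBSET-SUM}. Combining the two steps yields coNP-completeness.

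There is no real obstacle here; the statement is essentially a rephrasing of \thref{th:subset_sum_npc} under the duality $\mathrm{NP}\text{-complete} \leftrightarrow \mathrm{coNP}\text{-complete}$ for complementary languages. I include it explicitly only because our coNP-hardness reductions for \texttt{CONTROL-REGION-TIME} in \thref{th:ctl_reg_time_conp} will be stated as reductions from \texttt{NOSUBSET-SUM} rather than from \texttt{SUBSET-SUM}.
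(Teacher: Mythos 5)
Your argument is correct and is exactly the intent of the paper's one-line proof (``Basically the same proof as \thref{th:subset_sum_npc}''): \texttt{NOSUBSET-SUM} is the complement of \texttt{SUBSET-SUM}, and the duality between NP-completeness and coNP-completeness of complementary languages gives the result immediately. You simply spell out the verifier for membership in coNP and the identity reduction for hardness, which the paper leaves implicit.
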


\begin{proof}Basically the same proof as 
  \thref{th:subset_sum_npc} \cite{GJ79}.
\qed\end{proof}

\noindent\textbf{Problem:} \texttt{LINSPACE-WORD}\\
\textbf{Inputs:} A Linear Bounded  Automaton (i.e. a one-tape TM that does not use any space besides the input) $\mathcal{M}$ and a word $w\in\Sigma^*$.\\
\textbf{Output:} does $\mathcal{M}$ accept $w$?

\begin{theorem}[see e.g. \cite{karp1972reducibility}] 
\texttt{LINSPACE-WORD} is PSPACE-complete.
\end{theorem}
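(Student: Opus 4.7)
The proof is the classical two-part argument for $PSPACE$-completeness of linear-bounded computation, so my plan is to separately establish membership and hardness.

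For membership in $PSPACE$, I would simulate $\mathcal{M}$ on $w$ using a configuration record consisting of (i) the current state, (ii) the head position in $\{1,\ldots,|w|\}$ stored in $O(\log|w|)$ bits, and (iii) the tape contents, an element of $\Gamma^{|w|}$ stored in $|w|\log|\Gamma|$ bits. Thus each configuration occupies $O(|w|)$ space. The only subtlety is non-termination: an LBA may loop forever without accepting. To handle this I would maintain, alongside the simulated configuration, a binary counter that counts steps up to $N=|Q|\cdot|w|\cdot|\Gamma|^{|w|}$, the total number of distinct configurations. Since $\log N = O(|w|)$, the counter fits in linear space. If the simulation reaches an accepting configuration within $N$ steps we accept; otherwise a loop has occurred and we reject. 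Total space is $O(|w|)$.

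For $PSPACE$-hardness I would use the standard padding reduction. Let $L\in PSPACE$ be decided by a Turing machine $M$ using at most $p(n)$ tape cells on inputs of length $n$, for some polynomial $p$ with $p(n)\geq n$. Introduce a fresh symbol $\#$ and map an instance $x$ of $L$ to the pair $(\mathcal{M}',x\#^{p(|x|)-|x|})$, where $\mathcal{M}'$ is a fixed LBA that, on any input of the form $y\#^k$, first locates the boundary between $y$ and the padding and then faithfully simulates $M$ on $y$, using the entire padded region (of length exactly $p(|x|)$) as workspace. Since $\mathcal{M}'$ never leaves its input tape of length $p(|x|)$, it is indeed linear-bounded, and by construction $\mathcal{M}'$ accepts the padded word if and only if $M$ accepts $x$, i.e.\ $x\in L$. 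The reduction runs in polynomial time because $p$ is a polynomial.

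The main obstacle, such as it is, is the membership argument rather than hardness: one must argue correctly about looping LBAs, which forces the explicit $N$-step counter described above. The padding construction and the alphabet-separation trick in $\mathcal{M}'$ are routine. Putting both directions together yields the claimed $PSPACE$-completeness.
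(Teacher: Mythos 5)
Your proof is correct and follows the standard textbook argument (membership via a configuration counter bounding the number $|Q|\cdot|w|\cdot|\Gamma|^{|w|}$ of distinct IDs, hardness via padding an arbitrary $PSPACE$ computation down to linear space); the paper itself does not prove this theorem but only cites \cite{karp1972reducibility}, and your decomposition is precisely the classical route that the citation refers to. The only cosmetic imprecision is calling the counter ``linear space'': since $\mathcal{M}$ is part of the input, $\log N = O(|w|\log|\Gamma|)$ is polynomial rather than linear in the full input size, but this does not affect membership in $PSPACE$.
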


\section{Hardness of Bounded Time Reachability}

In this section, we will show that \texttt{REACH-REGION-TIME} is an NP-hard problem
by reducing \texttt{SUBSET-SUM} to it.

\subsection{Solving \texttt{SUBSET-SUM} by iteration}

We will now show how to solve the \texttt{SUBSET-SUM} problem by iterating a function.
Consider an instance $\mathcal{I}=(B,A_1,\ldots,A_n)$ of \texttt{SUBSET-SUM}. We will
need to introduce some notions before defining our piecewise affine function.
Our first notion is that of configurations, representing partial summation
of the number for a given choice of $I$.

\begin{remark}
Without loss of generality, we will only consider instances where $A_i\leqslant B$,
for all $i$. Indeed, if $A_i>B$, it will never be an element of a
solution to the instance and so we
can simply remove this variable from the problem.
This ensures that $A_i<B+1$ in everything that follows.
\end{remark}

\begin{definition}[Configuration]
A configuration of $\mathcal{I}$ is a tuple $(i,\sigma,\varepsilon_i,\ldots,\varepsilon_n)$
where $i\in\{1,\ldots,n+1\}$, $\sigma\in\{0,\ldots,B+1\}$, $\varepsilon_j\in\{0,1\}$ for
all $i \le j $.
Let $\mathcal{C}_\mathcal{I}$ be the set of all configurations of $\mathcal{I}$.
\end{definition}

The intuitive understanding of a configuration, made formal in the next definition, is the following:
$(i,\sigma,\varepsilon_i,\ldots,\varepsilon_n)$ represents a
situation where after having summed a subset of $\{A_1,\ldots,A_{i-1}\}$, we got a
sum $\sigma$  and $\varepsilon_j$ is $1$ if and only if we are to pick $A_j$ in the future. 

\begin{definition}[Transition function]\label{def:it_T_subset}
The transition function $T_\mathcal{I}:\mathcal{C}_\mathcal{I}\rightarrow\mathcal{C}_\mathcal{I}$, 
is defined as follows:
\[
T_{\mathcal{I}}(i,\sigma,\varepsilon_i,\ldots,\varepsilon_n)=
\begin{cases}
(i,\sigma)&\text{if }i=n+1\\
(i+1,\min\left(B+1,\sigma+\varepsilon_iA_i\right),\varepsilon_{i+1},\ldots,\varepsilon_n)&\text{otherwise}
\end{cases}
\]
\end{definition}

It should be clear, by definition of a subset sum that we have the following
simulation result.

\begin{lemma}\label{lem:subset_sum_it}
For any configuration $c=(i,\sigma,\varepsilon_i,\ldots,\varepsilon_n)$
and $k\in\{0,\ldots,n+1-i\}$,
\[T_\mathcal{I}^{[k]}(c)=(i+k,\min\left(B+1,\sigma+\Sigma_{j=i}^{i+k-1}\varepsilon_j A_j\right),
    \varepsilon_{i+k},\ldots,\varepsilon_n)\]
\end{lemma}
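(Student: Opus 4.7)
The plan is to prove \lemref{lem:subset_sum_it} by straightforward induction on $k$, the only technical point being that the $\min(B+1,\cdot)$ truncation inside $T_\mathcal{I}$ commutes properly with further additions of nonnegative quantities.

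For the base case $k=0$, the right-hand side of the claimed formula reduces to $(i,\min(B+1,\sigma),\varepsilon_i,\ldots,\varepsilon_n)$, and since any configuration satisfies $\sigma \le B+1$ by definition, this is just $c$ itself, which matches $T_\mathcal{I}^{[0]}(c)=c$. For the inductive step, assume the formula for some $k$ with $k+1 \le n+1-i$, so in particular $i+k \le n$. Write $T_\mathcal{I}^{[k+1]}(c) = T_\mathcal{I}(T_\mathcal{I}^{[k]}(c))$, plug in the inductive hypothesis, and observe that since the first coordinate $i+k$ is strictly less than $n+1$, the second branch of \defref{def:it_T_subset} applies. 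This yields first coordinate $i+k+1$ and last coordinates $\varepsilon_{i+k+1},\ldots,\varepsilon_n$ as required, while the middle coordinate becomes
\[
\min\Bigl(B+1,\;\min\bigl(B+1,\sigma+\textstyle\sum_{j=i}^{i+k-1}\varepsilon_j A_j\bigr)+\varepsilon_{i+k}A_{i+k}\Bigr).
\]

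The one step that needs a brief verification is to show that this equals $\min\bigl(B+1,\sigma+\sum_{j=i}^{i+k}\varepsilon_j A_j\bigr)$. This follows from the elementary fact that for any $x \in \N$ and any $y \ge 0$, $\min(B+1,\min(B+1,x)+y) = \min(B+1,x+y)$: if $x \le B+1$ then both sides coincide immediately, and if $x > B+1$ then both sides equal $B+1$ because $\min(B+1,x)+y = B+1+y \ge B+1$ and $x+y \ge x > B+1$. Applied with $x=\sigma+\sum_{j=i}^{i+k-1}\varepsilon_j A_j$ and $y=\varepsilon_{i+k}A_{i+k} \ge 0$, this gives exactly the desired equality.

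I do not expect any real obstacle here; the statement is essentially a bookkeeping lemma expressing that iterating $T_\mathcal{I}$ accumulates the partial sum $\sum_{j=i}^{i+k-1}\varepsilon_j A_j$ and shifts the indices correctly. The only subtlety is checking that the saturation at $B+1$ done one step at a time agrees with a single saturation of the full partial sum, which is handled by the nonnegativity of each $\varepsilon_j A_j$ as above.
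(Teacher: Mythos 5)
Your proof is correct and is exactly the induction the paper invokes (the paper's proof is literally just ``By induction''); you have simply spelled out the base case, the inductive step using the non-terminal branch of Definition~\ref{def:it_T_subset}, and the key saturation identity $\min(B+1,\min(B+1,x)+y)=\min(B+1,x+y)$ for $y\geqslant 0$, which is indeed the only point worth checking.
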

\begin{proof}By induction.\qed\end{proof}

A consequence of this simulation is that we can reformulate
satisfiability in terms of reachability.

\begin{lemma}\label{lem:subset_sum_simul} 
$\mathcal{I}$ is a satisfiable instance (\emph{i.e.,} admits a subset
sum equal to the target value) 
if and only if there exists a configuration $c=(1,0,\varepsilon_1,\ldots,\varepsilon_n)\in\mathcal{C}_\mathcal{I}$
such that $T_\mathcal{I}^{[n]}(c)=(n+1,B)$.
\end{lemma}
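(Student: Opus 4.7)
The plan is to obtain this equivalence as a direct corollary of \lemref{lem:subset_sum_it}. The key observation is that iterating $T_\mathcal{I}$ exactly $n$ times starting from a configuration of the form $(1,0,\varepsilon_1,\ldots,\varepsilon_n)$ consumes all the $\varepsilon_j$'s and accumulates precisely the subset sum they encode.

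Concretely, first I would instantiate \lemref{lem:subset_sum_it} with $i=1$, $\sigma=0$, and $k=n$, yielding
\[
T_\mathcal{I}^{[n]}(1,0,\varepsilon_1,\ldots,\varepsilon_n)
=\bigl(n+1,\ \min(B+1,\ \textstyle\sum_{j=1}^{n}\varepsilon_j A_j)\bigr).
\]
So $T_\mathcal{I}^{[n]}(c)=(n+1,B)$ holds if and only if $\min(B+1,\sum_{j=1}^n\varepsilon_j A_j)=B$, and since the minimum ranges over the two values $B+1$ and $\sum_{j=1}^n\varepsilon_j A_j$, this equality is equivalent to $\sum_{j=1}^{n}\varepsilon_j A_j=B$.

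From here both directions of the equivalence are immediate. For the forward direction, given a satisfying subset $I\subseteq\{1,\ldots,n\}$ with $\sum_{j\in I}A_j=B$, I set $\varepsilon_j=1$ if $j\in I$ and $\varepsilon_j=0$ otherwise, and the configuration $c=(1,0,\varepsilon_1,\ldots,\varepsilon_n)$ satisfies $T_\mathcal{I}^{[n]}(c)=(n+1,B)$ by the computation above. Conversely, given such a $c$, the set $I=\{j\in\{1,\ldots,n\}\mid \varepsilon_j=1\}$ is a witness to satisfiability.

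There is no real obstacle here: the only tiny point that needs mentioning is why $\min(B+1,\sigma)=B$ forces $\sigma=B$ rather than being satisfied by some larger $\sigma$; but this is clear because the only way the minimum equals $B$ is that the non-$B{+}1$ argument equals $B$, since the assumption $A_j\le B$ made in the preceding remark prevents no subtlety here — the truncation at $B+1$ in $T_\mathcal{I}$ is only used to keep configurations in a finite set and cannot be mistaken for the target value $B$.
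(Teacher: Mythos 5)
Your proof is correct and follows essentially the same approach as the paper: both apply \lemref{lem:subset_sum_it} with $i=1$, $\sigma=0$, $k=n$, reduce the condition $T_\mathcal{I}^{[n]}(c)=(n+1,B)$ to $\min(B+1,\sum_j\varepsilon_jA_j)=B$, i.e. $\sum_j\varepsilon_jA_j=B$, and then translate between the subset $I$ and the bit-vector $(\varepsilon_j)$. Your version merely factors the common equivalence out in front of the two directions, which is a cosmetic reorganization of the same argument.
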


\begin{proof}
The ``only if'' direction is the simplest: assume there exists $I\subseteq\{1,\ldots,n\}$
such that $\sum_{i\in I}A_i=B$. Define $\varepsilon_i=1$ if $i\in I$ and $0$ otherwise.
We get that $\sum_{i=1}^n \varepsilon_iA_i=B$. Apply \lemref{lem:subset_sum_it} to get that:
\begin{align*}
T_\mathcal{I}^{[n]}(1,0,\varepsilon_1,\ldots,\varepsilon_n)
    &=(n+1,\min\left(B+1,0+\sum_{i=1}^n\varepsilon_iA_i\right))\\
    &=(n+1,\min(B+1,B))=(n+1,B)
\end{align*}
The ``if'' direction is very similar: assume that there exists
$c=(1,0,\varepsilon_1,\ldots,\varepsilon_n)$
such that $T_\mathcal{I}^{[n]}(c)=(n+1,B)$.   \lemref{lem:subset_sum_it} gives:

\[T_\mathcal{I}^{[n]}(1,0,\varepsilon_1,\ldots,\varepsilon_n)=
(n+1,\min\left(B+1,0+\sum_{i=1}^n\varepsilon_iA_i\right))\]

We can easily conclude that $\sum_{i=1}^n\varepsilon_iA_i=B$
and thus by
defining $I=\{i\thinspace|\thinspace\varepsilon_i=1\}$ we get that $\sum_{i\in I}A_i=B$. Hence, $\mathcal{I}$
is satisfiable.
\qed\end{proof}

\subsection{Solving \texttt{SUBSET-SUM} with a piecewise affine function}\label{sec:subset_sum_affine}

In this section, we explain how to simulate the function $T_\mathcal{I}$
using a piecewise affine function and some encoding of the configurations for
a given $\mathcal{I}=(B,A_1,\ldots,A_n)$. Since the reduction is quite technical,
we start with some intuitions. In order to simulate the function $T_\mathcal{I}$,
we first need to encode configurations with real numbers. Let $c=(i,\sigma,\varepsilon_i,\ldots,\varepsilon_n)$
be a configuration, we encode it using two real numbers in $[0,1]$: the first one encodes $i$
and $\sigma$ and the second one encodes $\varepsilon_i,\ldots,\varepsilon_n$.
A simple approach is to encode them as digits of dyadics numbers, as depicted below:
\[\enc{c}=\begin{pmatrix}
    0.\raisebox{-0.4ex}{\begin{tikzpicture}[scale=1]
        \fill[draw,color=blue!50!white,fill=blue!30!white] (0,0) rectangle (1.5,0.5);
        \fill[draw,color=red!50!white,fill=red!30!white] (1.5,0) rectangle (3,0.5);
        \node[inner sep=0] at (0.75,0.2) {$i$};
        \node[inner sep=0] at (2.25,0.2) {$\sigma$};
        \end{tikzpicture}}\\
    0.\raisebox{-0.4ex}{\begin{tikzpicture}[scale=1]
        \fill[draw,color=darkgreen!50!white,fill=darkgreen!30!white] (0,0) rectangle (3,0.5);
        \foreach \x in {0.5,1,1.5,2.5} \draw[color=darkgreen!50!white] (\x,0) -- (\x,0.5);
        \node[inner sep=0] at (0.25,0.2) {$0$};
        \node[inner sep=0] at (0.75,0.2) {$\ldots$};
        \node[inner sep=0] at (1.25,0.2) {$\varepsilon_i$};
        \node[inner sep=0] at (2,0.2) {$\ldots$};
        \node[inner sep=0] at (2.75,0.2) {$\varepsilon_n$};\end{tikzpicture}}
    \end{pmatrix}
    =\begin{pmatrix}
        i2^{-p}+\sigma2^{-q}\\\varepsilon_i2^{-1}+\varepsilon_{i+1}2^{-2}+\cdots
    \end{pmatrix}.
\]
In the above encoding, we allocate $p$ bits to $i$ and $q-p$ bits to $\sigma$. We
simply need to choose them large enough to accomodate the largest possible value.
The rationale behind this encoding is that it is
easy to implement the transition function $f_\mathcal{I}$ with a linear function.
Note that for technical reasons explained later, we need to encode the second coordinate
in basis $\beta=5$ instead of $2$. We now encode $0$ as $0^\star=1$
and $1$ as $1^\star=4$. Graphically, the action of $f$ is very simple:
\[\text{\textbf{if} }\varepsilon_i=0^\star\text{ \textbf{then} } f_\mathcal{I}\begin{pmatrix}
    0.\raisebox{-0.4ex}{\begin{tikzpicture}[scale=1]
        \fill[draw,color=blue!50!white,fill=blue!30!white] (0,0) rectangle (1.5,0.5);
        \fill[draw,color=red!50!white,fill=red!30!white] (1.5,0) rectangle (3,0.5);
        \node[inner sep=0] at (0.75,0.2) {$i$};
        \node[inner sep=0] at (2.25,0.2) {$\sigma$};
        \end{tikzpicture}}\\
    0.\raisebox{-0.4ex}{\begin{tikzpicture}[scale=1]
        \fill[draw,color=darkgreen!50!white,fill=darkgreen!30!white] (0,0) rectangle (3,0.5);
        \foreach \x in {0.5,1,1.5,2,2.5} \draw[color=darkgreen!50!white] (\x,0) -- (\x,0.5);
        \node[inner sep=0] at (0.25,0.2) {$0$};
        \node[inner sep=0] at (0.75,0.2) {$\ldots$};
        \node[inner sep=0] at (1.25,0.2) {$\varepsilon_i$};
        \node[inner sep=0] at (1.75,0.2) {$\scriptscriptstyle\varepsilon_{i+1}$};
        \node[inner sep=0] at (2.25,0.2) {$\ldots$};
        \node[inner sep=0] at (2.75,0.2) {$\varepsilon_n$};\end{tikzpicture}}
    \end{pmatrix}
    =
    \begin{pmatrix}
    0.\raisebox{-0.4ex}{\begin{tikzpicture}[scale=1]
        \fill[draw,color=blue!50!white,fill=blue!30!white] (0,0) rectangle (1.5,0.5);
        \fill[draw,color=red!50!white,fill=red!30!white] (1.5,0) rectangle (3,0.5);
        \node[inner sep=0] at (0.75,0.2) {$i+1$};
        \node[inner sep=0] at (2.25,0.2) {$\sigma$};
        \end{tikzpicture}}\\
    0.\raisebox{-0.4ex}{\begin{tikzpicture}[scale=1]
        \fill[draw,color=white,fill=white] (2.5,0) rectangle (3,0.5);
        \fill[draw,color=darkgreen!50!white,fill=darkgreen!30!white] (0,0) rectangle (3,0.5);
        \foreach \x in {0.5,1,1.5,2,2.5} \draw[color=darkgreen!50!white] (\x,0) -- (\x,0.5);
        \node[inner sep=0] at (0.25,0.2) {$0$};
        \node[inner sep=0] at (0.75,0.2) {$\ldots$};
        \node[inner sep=0] at (1.25,0.2) {$0$};
        \node[inner sep=0] at (1.75,0.2) {$\scriptscriptstyle\varepsilon_{i+1}$};
        \node[inner sep=0] at (2.25,0.2) {$\ldots$};
        \node[inner sep=0] at (2.75,0.2) {$\varepsilon_n$};\end{tikzpicture}}
    \end{pmatrix},\]
\[\text{\textbf{if} }\varepsilon_i=1^\star\text{ \textbf{then} } f_\mathcal{I}\begin{pmatrix}
    0.\raisebox{-0.4ex}{\begin{tikzpicture}[scale=1]
        \fill[draw,color=blue!50!white,fill=blue!30!white] (0,0) rectangle (1.5,0.5);
        \fill[draw,color=red!50!white,fill=red!30!white] (1.5,0) rectangle (3,0.5);
        \node[inner sep=0] at (0.75,0.2) {$i$};
        \node[inner sep=0] at (2.25,0.2) {$\sigma$};
        \end{tikzpicture}}\\
    0.\raisebox{-0.4ex}{\begin{tikzpicture}[scale=1]
        \fill[draw,color=darkgreen!50!white,fill=darkgreen!30!white] (0,0) rectangle (3,0.5);
        \foreach \x in {0.5,1,1.5,2,2.5} \draw[color=darkgreen!50!white] (\x,0) -- (\x,0.5);
        \node[inner sep=0] at (0.25,0.2) {$0$};
        \node[inner sep=0] at (0.75,0.2) {$\ldots$};
        \node[inner sep=0] at (1.25,0.2) {$\varepsilon_i$};
        \node[inner sep=0] at (1.75,0.2) {$\scriptscriptstyle\varepsilon_{i+1}$};
        \node[inner sep=0] at (2.25,0.2) {$\ldots$};
        \node[inner sep=0] at (2.75,0.2) {$\varepsilon_n$};\end{tikzpicture}}
    \end{pmatrix}
    =
    \begin{pmatrix}
    0.\raisebox{-0.4ex}{\begin{tikzpicture}[scale=1]
        \fill[draw,color=blue!50!white,fill=blue!30!white] (0,0) rectangle (1.5,0.5);
        \fill[draw,color=red!50!white,fill=red!30!white] (1.5,0) rectangle (3,0.5);
        \node[inner sep=0] at (0.75,0.2) {$i+1$};
        \node[inner sep=0] at (2.25,0.2) {$\sigma+A_i$};
        \end{tikzpicture}}\\
    0.\raisebox{-0.4ex}{\begin{tikzpicture}[scale=1]
        \fill[draw,color=white,fill=white] (2.5,0) rectangle (3,0.5);
        \fill[draw,color=darkgreen!50!white,fill=darkgreen!30!white] (0,0) rectangle (3,0.5);
        \foreach \x in {0.5,1,1.5,2,2.5} \draw[color=darkgreen!50!white] (\x,0) -- (\x,0.5);
        \node[inner sep=0] at (0.25,0.2) {$0$};
        \node[inner sep=0] at (0.75,0.2) {$\ldots$};
        \node[inner sep=0] at (1.25,0.2) {$0$};
        \node[inner sep=0] at (1.75,0.2) {$\scriptscriptstyle\varepsilon_{i+1}$};
        \node[inner sep=0] at (2.25,0.2) {$\ldots$};
        \node[inner sep=0] at (2.75,0.2) {$\varepsilon_n$};\end{tikzpicture}}
    \end{pmatrix}.
\]
For technical reasons, we will in fact split the case
$\varepsilon_i=1^\star$ into two, depending on whether $\sigma+A_i$ becomes greater
then $B+1$ or not. This is similar to what we did in \defref{def:it_T_subset} with
$\min(B+1,\sigma+A_i)$:
\[\text{\textbf{if} }\sigma+A_i> B\text{ \textbf{then} } f_\mathcal{I}\begin{pmatrix}
    0.\raisebox{-0.4ex}{\begin{tikzpicture}[scale=1]
        \fill[draw,color=blue!50!white,fill=blue!30!white] (0,0) rectangle (1.5,0.5);
        \fill[draw,color=red!50!white,fill=red!30!white] (1.5,0) rectangle (3,0.5);
        \node[inner sep=0] at (0.75,0.2) {$i$};
        \node[inner sep=0] at (2.25,0.2) {$\sigma$};
        \end{tikzpicture}}\\
    0.\raisebox{-0.4ex}{\begin{tikzpicture}[scale=1]
        \fill[draw,color=darkgreen!50!white,fill=darkgreen!30!white] (0,0) rectangle (3,0.5);
        \foreach \x in {0.5,1,1.5,2,2.5} \draw[color=darkgreen!50!white] (\x,0) -- (\x,0.5);
        \node[inner sep=0] at (0.25,0.2) {$0$};
        \node[inner sep=0] at (0.75,0.2) {$\ldots$};
        \node[inner sep=0] at (1.25,0.2) {$1^\star$};
        \node[inner sep=0] at (1.75,0.2) {$\scriptscriptstyle\varepsilon_{i+1}$};
        \node[inner sep=0] at (2.25,0.2) {$\ldots$};
        \node[inner sep=0] at (2.75,0.2) {$\varepsilon_n$};\end{tikzpicture}}
    \end{pmatrix}
    =
    \begin{pmatrix}
    0.\raisebox{-0.4ex}{\begin{tikzpicture}[scale=1]
        \fill[draw,color=blue!50!white,fill=blue!30!white] (0,0) rectangle (1.5,0.5);
        \fill[draw,color=red!50!white,fill=red!30!white] (1.5,0) rectangle (3,0.5);
        \node[inner sep=0] at (0.75,0.2) {$i+1$};
        \node[inner sep=0] at (2.25,0.2) {$B+1$};
        \end{tikzpicture}}\\
    0.\raisebox{-0.4ex}{\begin{tikzpicture}[scale=1]
        \fill[draw,color=white,fill=white] (2.5,0) rectangle (3,0.5);
        \fill[draw,color=darkgreen!50!white,fill=darkgreen!30!white] (0,0) rectangle (3,0.5);
        \foreach \x in {0.5,1,1.5,2,2.5} \draw[color=darkgreen!50!white] (\x,0) -- (\x,0.5);
        \node[inner sep=0] at (0.25,0.2) {$0$};
        \node[inner sep=0] at (0.75,0.2) {$\ldots$};
        \node[inner sep=0] at (1.25,0.2) {$0$};
        \node[inner sep=0] at (1.75,0.2) {$\scriptscriptstyle\varepsilon_{i+1}$};
        \node[inner sep=0] at (2.25,0.2) {$\ldots$};
        \node[inner sep=0] at (2.75,0.2) {$\varepsilon_n$};\end{tikzpicture}}
    \end{pmatrix}.
\]

We can formulate the ``if $\varepsilon_i=\alpha$'' by using regions. Indeed, the set
of encodings such that $\varepsilon_i=\alpha$ is
\[\left\{\begin{pmatrix}
    0.\raisebox{-0.4ex}{\begin{tikzpicture}[scale=1]
        \fill[draw,color=blue!50!white,fill=blue!30!white] (0,0) rectangle (1.5,0.5);
        \fill[draw,color=red!50!white,fill=red!30!white] (1.5,0) rectangle (3,0.5);
        \node[inner sep=0] at (0.75,0.2) {$i$};
        \node[inner sep=0] at (2.25,0.2) {$\sigma$};
        \end{tikzpicture}}\\
    0.\raisebox{-0.4ex}{\begin{tikzpicture}[scale=1]
        \fill[draw,color=darkgreen!50!white,fill=darkgreen!30!white] (0,0) rectangle (3,0.5);
        \foreach \x in {0.5,1,1.5,2,2.5} \draw[color=darkgreen!50!white] (\x,0) -- (\x,0.5);
        \node[inner sep=0] at (0.25,0.2) {$0$};
        \node[inner sep=0] at (0.75,0.2) {$\ldots$};
        \node[inner sep=0] at (1.25,0.2) {$\alpha$};
        \node[inner sep=0] at (1.75,0.2) {$\scriptscriptstyle\varepsilon_{i+1}$};
        \node[inner sep=0] at (2.25,0.2) {$\ldots$};
        \node[inner sep=0] at (2.75,0.2) {$\varepsilon_n$};\end{tikzpicture}}
    \end{pmatrix}:\sigma\in\N,\varepsilon_j\in\{0,1\}
    \right\}
    \subset\begin{array}{@{}l}
        \big[i2^{-p},i2^{-p}+2^{-p-1}\big]\\
        \times\big[\alpha\beta^{-i},(\alpha+1)\beta^{-i}\big]
    \end{array}.\]
It is crucial to note that the statement of the problems only allows for
polyhedral regions. This is why in the above equation, we had to overapproximate
the region by intervals on each coordinate. This overapproximation is the root
of all difficulties. Indeed, the region now contains many points that do not
correspond to encodings anymore. We now go to the details of the construction.

\begin{definition}[Encoding]
Define $p=\lceil\log_2 (n+2)\rceil$, $\omega=\lceil\log_2 (B+2)\rceil$, $q=p+\omega+1$
and $\beta=5$. Also define $0^\star=1$ and $1^\star=4$.
For any configuration $c=(i,\sigma,\varepsilon_i,\ldots,\varepsilon_n)$, define the encoding of $c$ as follows:

\[\enc{c}=\left(i2^{-p}+\sigma2^{-q},0^\star\beta^{-n-1}+\sum_{j=i}^n\varepsilon_j^\star\beta^{-j}\right)\]
Also define the following regions for any $i\in\{1,\ldots,n+1\}$ and $\alpha\in\{0,\ldots,\beta-1\}$:

\[R_0=[0,2^{-p-1}]\times[0,1]\qquad R_i=[i2^{-p},i2^{-p}+2^{-p-1}]\times[0,\beta^{-i+1}]\quad(i\geqslant1)\]
\[R_{i,\alpha}=\big[i2^{-p},i2^{-p}+2^{-p-1}\big]\times\big[\alpha\beta^{-i},(\alpha+1)\beta^{-i}\big]
\qquad R_i=\cup_{\alpha\in\N_\beta}R_{i,\alpha}\]
\[R_{i,1^\star}^{lin}=\big[i2^{-p},i2^{-p}+(B+1-A_i)2^{-q}\big]\times\big[1^\star\beta^{-i},5\beta^{-i}\big]
\qquad R_{i,1^\star}^{sat}=R_{i,1^\star}\setminus R_{i,1^\star}^{lin}\]
\end{definition}

As noted before, we use basis $\beta=5$ on the second component to get some ``space''
between consecutive encodings. The choice of the value $1$ and $4$ for the encoding of
$0$ and $1$, although not crucial, has been made to simplify the proof as much as possible.

The region $R_0$ is for initialization purposes and is defined differently from
the other $R_i$. The regions $R_i$ correspond to the
different values of $i$ in the configuration (the current number). Each $R_i$ is
further divided into the $R_{i,\alpha}$ corresponding to all the possible values
of the next $\varepsilon$ variable (recall that it is encoded in basis $\beta$).
In the special case of $\varepsilon=1$, we cut the region $R_{i,1^\star}$ into
a linear part and a saturated part. This is needed to emulate the $\min(\sigma+A_i,B+1)$
in \defref{def:it_T_subset}: the linear part corresponds to $\sigma+A_i$
and the saturated part to $B+1$.
\figref{fig:regions} and \figref{fig:region_zoom} give a graphical representation of the regions.

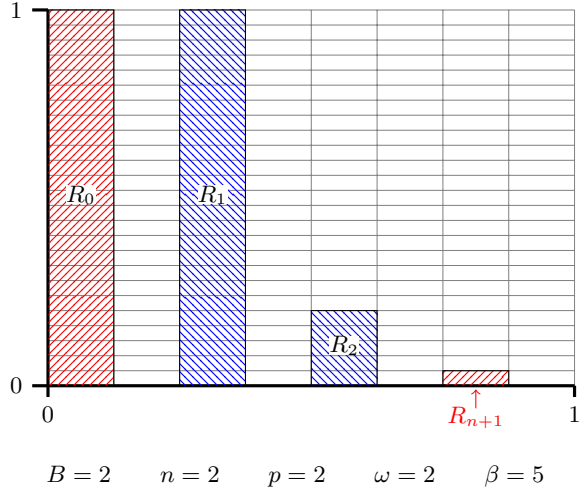
\begin{figure}
\begin{center}
\begin{tikzpicture}[xscale=7,yscale=5]
  \pgfmathsetmacro{\n}{2}
  \pgfmathsetmacro{\maxn}{\n+1}
  \pgfmathsetmacro{\p}{ceil(log2(\n+2))}
  \pgfmathsetmacro{\B}{2}
  \pgfmathsetmacro{\vomega}{ceil(log2(\B+2))}
  \pgfmathsetmacro{\q}{\p+\vomega+1}
  \pgfmathsetmacro{\vbeta}{5}
  \pgfmathsetmacro{\xstep}{pow(2,-\p-1)}
  \pgfmathsetmacro{\ystep}{pow(\vbeta,-\maxn+1)}
  \draw[very thin,color=gray,xstep=\xstep,ystep=\ystep] (0.0,0.0) grid (1,1);
  \draw[very thick] (0,0) -- (1, 0);
  \draw[very thick] (0,0) -- (0, 1);
  \draw[very thick] (0,0) -- (0,-0.03) node[below] {$0$};
  \draw[very thick] (1,0) -- (1,-0.03) node[below] {$1$};
  \draw[very thick] (0,1) -- (-0.03,1) node[left] {$1$};
  \draw[very thick] (0,0) -- (-0.03,0) node[left] {$0$};
  \draw[pattern color=red,pattern=north east lines] (0,0) rectangle (\xstep,1);
  \draw (\xstep/2,.5-0.01) node[anchor=base,fill=white,inner sep=0]{$\displaystyle R_0$};
  \pgfmathsetmacro{\x}{\maxn*\xstep*2};
  \pgfmathsetmacro{\y}{pow(\vbeta,-\maxn+1)};
  \draw[pattern color=red,pattern=north east lines] (\x,0) rectangle (\x+\xstep,\y);
  \draw (\x+\xstep/2,-0.1) node[red,anchor=base,fill=white,inner sep=0]{$\displaystyle R_{n+1}$};
  \draw[red,->] (\x+\xstep/2,-0.06) -- (\x+\xstep/2,-0.01);
  \foreach \i in {1,...,\n}
  {
      \pgfmathsetmacro{\y}{pow(\vbeta,-\i+1)};
      \draw[pattern color=blue,pattern=north west lines] (\i*\xstep*2,0) rectangle (\i*\xstep*2+\xstep,\y);
      \draw (\i*\xstep*2+\xstep/2,\y/2-0.01) node[anchor=base,fill=white,inner sep=0]{$\displaystyle R_{\i}$};
  }
\end{tikzpicture}
\[B=2\qquad n=2\qquad p=2\qquad\omega=2\qquad\beta=5\]
\end{center}
\caption{Graphical representation of the regions\label{fig:regions}}
\end{figure}

\begin{lemma}\label{lem:encoding_prop}
For any configuration $c=(i,\sigma,\varepsilon_i,\ldots,\varepsilon_n)$,
if $i=n+1$ then $\enc{c}\in R_{n+1,0^\star}$, otherwise $\enc{c}\in R_{i,\varepsilon_i^\star}$.
Furthermore if $\varepsilon_i=1$ and $\sigma+A_i\leqslant B+1$, then $\enc{c}\in R_{i,1^\star}^{lin}$,
otherwise $\enc{c}\in R_{i,1^\star}^{sat}$.
\end{lemma}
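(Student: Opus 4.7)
The plan is to verify each coordinate of $\enc{c}$ separately against the defining interval of the relevant region. For the first coordinate $x = i2^{-p} + \sigma 2^{-q}$, I use $\sigma \leq B+1$ and $\omega = \lceil \log_2(B+2)\rceil$, which gives $B+1 < 2^\omega$ and hence $\sigma 2^{-q} < 2^{\omega - q} = 2^{-p-1}$. Thus $x \in [i2^{-p},\, i2^{-p}+2^{-p-1}]$, the horizontal slab common to all $R_{i,\alpha}$ and to both $R_{i,1^\star}^{lin}$ and $R_{i,1^\star}^{sat}$; this discharges every ``$x$-in-slab'' requirement at once.

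For the second coordinate $y = 0^\star\beta^{-n-1} + \sum_{j=i}^n \varepsilon_j^\star \beta^{-j}$, I split on $i$. When $i = n+1$ the sum is empty, so $y = 0^\star\beta^{-n-1}$ sits trivially inside $[0^\star\beta^{-(n+1)},(0^\star+1)\beta^{-(n+1)}]$, giving $\enc{c} \in R_{n+1,0^\star}$. When $i \leq n$ I isolate the head $\varepsilon_i^\star \beta^{-i}$ and bound the tail $T = \sum_{j=i+1}^n \varepsilon_j^\star \beta^{-j} + 0^\star\beta^{-n-1}$. The lower bound $T \geq 0$ is immediate. For the upper bound I use $\varepsilon_j^\star \leq 4 = \beta-1$ together with the geometric sum $(\beta-1)\sum_{j=i+1}^n \beta^{-j} = \beta^{-i} - \beta^{-n}$, which yields $T \leq \beta^{-i} - \beta^{-n} + \beta^{-n-1} < \beta^{-i}$. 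So $y$ lies strictly inside $[\varepsilon_i^\star \beta^{-i}, (\varepsilon_i^\star+1)\beta^{-i}]$ and $\enc{c}\in R_{i,\varepsilon_i^\star}$.

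Finally, when $\varepsilon_i = 1$, the previous step already places $y$ in $[1^\star \beta^{-i}, 5\beta^{-i}]$, so only the first coordinate distinguishes $R_{i,1^\star}^{lin}$ from $R_{i,1^\star}^{sat}$. The linear region's constraint $x \leq i2^{-p}+(B+1-A_i)2^{-q}$ rearranges to $\sigma + A_i \leq B+1$, which handles the first case directly. Conversely, $\sigma + A_i > B+1$ upgrades by integrality to $\sigma \geq B + 2 - A_i$, putting $x$ strictly past the right edge of $R_{i,1^\star}^{lin}$ and hence into $R_{i,1^\star}^{sat}$. I do not foresee a substantive obstacle; the one point worth a moment of care is the strict tail inequality $T < \beta^{-i}$, which genuinely uses the base-$5$ encoding with $1^\star = 4 = \beta-1$ to leave the ``gap between consecutive encodings'' that the surrounding discussion cites as the reason for choosing $\beta = 5$.
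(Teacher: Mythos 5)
Your proof is correct and follows essentially the same coordinate-by-coordinate strategy as the paper. The only differences are cosmetic: you evaluate the tail geometric sum exactly as $(\beta-1)\sum_{j=i+1}^n\beta^{-j}=\beta^{-i}-\beta^{-n}$ and retain the $0^\star\beta^{-n-1}$ term to get a strict bound, whereas the paper over-approximates every tail digit by $1^\star$ and sums a longer series; and you spell out the integrality argument for the saturated case, which the paper leaves implicit as the complement of the linear case.
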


\begin{proof}
Recall
that $\omega=\lceil\log_2(B+2)\rceil$ so $B+1<2^\omega$, and $q=p+\omega+1$.
Since $\sigma\leqslant B+1$ by definition,
$(n+1)2^{-p}\leqslant\enc{c}_1\leqslant (n+1)2^{-p}+(B+1)2^{-p-1-\omega}\leqslant(n+1)2^{-p}+2^{-p-1}$.
This shows the result for the first component. In the case where $\sigma+A_i\leqslant B+1$
then $\sigma2^{-q}\leqslant (B+1-A_i)2^{-q}$ yielding the result for the second
part of the result for the first component.

If $i=n+1$, then $\enc{c}_2=0^\star\beta^{-p-1}$ and it trivially belongs to 
$[0^\star\beta^{-n-1},(0^\star+1)\beta^{-n-1}]$.
Otherwise, 
\begin{multline*}
\varepsilon_i^\star\beta^{-i}\leqslant \enc{c}_2\leqslant \varepsilon_i^\star\beta^{-i}+
\sum_{j=i+1}^{n+1}1^\star\beta^{-j} 
\leqslant
\varepsilon_i^\star\beta^{-i}+1^\star\beta^{-i-1}\frac{1-\beta^{n-i}}{1-\beta^{-1}} \\
\leqslant
\varepsilon_i^\star\beta^{-i}+4\beta^{-i-1}\frac{\beta}{\beta-1} 
\leqslant \varepsilon_i^\star\beta^{-i}+\beta^{-i} 
\leqslant (\varepsilon_i^\star+1)\beta^{-i}.
\end{multline*} This shows the result when $i<n+1$,
for the second component of the result.
\qed\end{proof}

We can now define a piecewise affine function that will mimic the behavior of
$T^\mathcal{I}$. The region $R_0$ is here to ensure that we start from a ``clean'' value
on the first coordinate.

\begin{definition}[Piecewise affine simulation]\label{def:subset_paf}
\[
f_\mathcal{I}(a,b)=\begin{cases}
(2^{-p},b)&\text{if }(a,b)\in R_0\\
(a,b)&\text{if }(a,b)\in R_{n+1}\\
(a+2^{-p},b-0^\star\beta^{-i})&\text{if }(a,b)\in R_{i,0^\star}\\
(a+2^{-p}+A_i2^{-q},b-1^\star\beta^{-i})&\text{if }(a,b)\in R_{i,1^\star}^{lin}\\
((i+1)2^{-p}+(B+1)2^{-q},b-1^\star\beta^{-i})&\text{if }(a,b)\in R_{i,1^\star}^{sat}\\
\end{cases}
\]
\end{definition}

\begin{lemma}[Simulation is correct]\label{lem:simul_correct}
For any configuration $c\in\mathcal{C}_\mathcal{I}$,
$\enc{T_\mathcal{I}(c)}=f_\mathcal{I}(\enc{c})$.
\end{lemma}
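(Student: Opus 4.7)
The plan is a direct case analysis on the shape of $c = (i, \sigma, \varepsilon_i, \ldots, \varepsilon_n)$. At each step I would use \lemref{lem:encoding_prop} to identify which region of \defref{def:subset_paf} contains $\enc{c}$, then compute the corresponding affine image and match it coordinatewise against $\enc{T_\mathcal{I}(c)}$ as dictated by \defref{def:it_T_subset}.

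If $i = n+1$, the configuration is a fixed point of $T_\mathcal{I}$, and \lemref{lem:encoding_prop} places $\enc{c}$ in $R_{n+1}$, on which $f_\mathcal{I}$ is the identity. Otherwise $i \le n$, and I split on $\varepsilon_i$ (and, when $\varepsilon_i = 1$, on whether $\sigma + A_i \le B+1$). If $\varepsilon_i = 0$, then $\enc{c} \in R_{i,0^\star}$; adding $2^{-p}$ to the first coordinate increments the $i$-slot to $i+1$, and subtracting $0^\star \beta^{-i}$ from the second coordinate deletes exactly the leading digit $\varepsilon_i^\star = 0^\star$, leaving $\sum_{j=i+1}^n \varepsilon_j^\star \beta^{-j} + 0^\star \beta^{-n-1}$, as required. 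If $\varepsilon_i = 1$ and $\sigma + A_i \le B+1$, then $\enc{c} \in R_{i,1^\star}^{lin}$; the linear branch of $f_\mathcal{I}$ increments $i$, adds $A_i 2^{-q}$ to the $\sigma$-slot, and erases the leading $1^\star$ digit, so the first coordinate becomes $(i+1)2^{-p} + (\sigma + A_i) 2^{-q}$, which coincides with $\enc{T_\mathcal{I}(c)}_1$ since $\min(B+1, \sigma + A_i) = \sigma + A_i$ in this regime. Finally, if $\sigma + A_i > B+1$, then $\enc{c} \in R_{i,1^\star}^{sat}$; the saturated branch overwrites the first coordinate to $(i+1)2^{-p} + (B+1) 2^{-q}$, matching $\min(B+1, \sigma + A_i) = B+1$, while the second coordinate is updated as before.

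There is no genuine obstacle: every piece of cleverness has been front-loaded into the encoding. The choice $q = p + \omega + 1$ guarantees that additions of $A_i 2^{-q}$ and values up to $(B+1) 2^{-q}$ do not overflow the $\sigma$-slot into the $i$-slot; the base-$\beta = 5$ encoding with $0^\star = 1$, $1^\star = 4$ makes the erasure of the leading digit on the second coordinate a single affine subtraction. Consequently, each of the four subcases reduces to a one-line arithmetic verification, and the mild care required is only in the saturated case, where one checks that $((i+1)2^{-p} + (B+1)2^{-q}, \enc{c}_2 - 1^\star \beta^{-i})$ is indeed the encoding of $(i+1, B+1, \varepsilon_{i+1}, \ldots, \varepsilon_n)$, which is immediate from the definition.
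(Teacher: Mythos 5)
Your proposal is correct and follows exactly the same case analysis as the paper: identity on $R_{n+1}$, then the three subcases $\varepsilon_i = 0$, $\varepsilon_i = 1$ with $\sigma + A_i \le B+1$, and $\varepsilon_i = 1$ with $\sigma + A_i > B+1$, each time invoking \lemref{lem:encoding_prop} to locate $\enc{c}$ in the right region of \defref{def:subset_paf} and then verifying the affine formula coordinatewise. No gaps; this is the paper's proof.
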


\begin{proof}

Let $c=(i,\sigma,\varepsilon_i,\ldots,\varepsilon_n)$. There are several cases to consider:
if $i=n+1$ then $T_\mathcal{I}(c)=c$, also by \lemref{lem:encoding_prop}, $\enc{c}\in R_{n+1,0^\star}$.
Thus by definition of $f$, $f_\mathcal{I}(\enc{c})=\enc{c}=\enc{T(c)}$
and this shows the result.
If $i<n+1$, we have three more cases to consider: the case where we don't take the value ($\varepsilon_i=0$)
and the two cases where we take it ($\varepsilon_i=1$) with and without saturation.
\begin{itemize}
\item If $\varepsilon_i=0$ then $T_\mathcal{I}(c)=(i+1,\sigma,\varepsilon_{i+1},\ldots,\varepsilon_n)$.
On the other hand, $\enc{c}=(a,b)=(i2^{-p}+\sigma2^{-q},0^\star\beta^{-i}+\sum_{j=i+1}^n\varepsilon_j\beta^{-j}+0^\star\beta^{-n-1})$.
By \lemref{lem:encoding_prop}, $\enc{c}\in R_{i,0^\star}$ so by definition of $f$:
\begin{align*}
f_\mathcal{I}(\enc{c})&=(a+2^{-p},b-0^\star\beta^{-i})\\
    &=((i+1)2^{-p}+\sigma2^{-q},\sum_{j=i+1}^n\varepsilon_j\beta^{-j}+0^\star\beta^{-n-1})\\
    &=\enc{(i+1,\sigma,\varepsilon_{i+1},\ldots,\varepsilon_n)}=\enc{T_\mathcal{I}(c)}
\end{align*}
\item If $\varepsilon_i=1$ and $\sigma+A_i\leqslant B+1$ then
$T_\mathcal{I}(c)=(i+1,\sigma+A_i,\varepsilon_{i+1},\ldots,\varepsilon_n)$.
On the other hand, $\enc{c}=(a,b)=(i2^{-p}+\sigma2^{-q},1^\star\beta^{-i}+\sum_{j=i+1}^n\varepsilon_j\beta^{-j}+0^\star\beta^{-n-1})$.
By \lemref{lem:encoding_prop}, $\enc{c}\in R_{i,1^\star}^{lin}$ so by definition of $f$:
\begin{align*}
f_\mathcal{I}(\enc{c})&=(a+2^{-p}+A_i2^{-q},b-1^\star\beta^{-i})\\
    &=((i+1)2^{-p}+(\sigma+A_i)2^{-q},\sum_{j=i+1}^n\varepsilon_j\beta^{-j}+0^\star\beta^{-n-1})\\
    &=\enc{(i+1,\sigma+A_i,\varepsilon_{i+1},\ldots,\varepsilon_n)}=\enc{T_\mathcal{I}(c)}
\end{align*}
\item If $\varepsilon_i=1$ and $\sigma+A_i>B+1$ then
$T_\mathcal{I}(c)=(i+1,B+1,\varepsilon_{i+1},\ldots,\varepsilon_n)$.
By \lemref{lem:encoding_prop}, $\enc{c}\in R_{i,1^\star}^{sat}$ so by definition of $f$:
\begin{align*}
f_\mathcal{I}(\enc{c})&=((i+1)2^{-p}+(B+1)2^{-q},b-1^\star\beta^{-i})\\
    &=\enc{(i+1,B+1,\varepsilon_{i+1},\ldots,\varepsilon_n)}=\enc{T_\mathcal{I}(c)}
\end{align*}
\end{itemize}
\qed\end{proof}

\subsection{Making the simulation stable}

In the previous section, that we have defined $f_\mathcal{I}$ over a subset of the entire space and it is clear
that this subspace is not stable in any way\footnote{For example $R_{1,1}\subseteq f(R_0)$
but $f$ is not defined over $R_{1,1}$.}. In order to match the definition
of a piecewise affine function, we need to define $f$ over the entire space or
a stable subspace (containing the initial region). We follow this second approach
and extend the definition of $f$ on some more regions. More precisely,
we need to define $f$ over $R_i= R_{i,0}\cup R_{i,1}\cup R_{i,2}\cup
R_{i,3}\cup R_{i,41
}$
and at the moment we have only defined $f$ over $R_{i,1}=R_{i,0^\star}$ and $R_{i,4}=R_{i,1^\star}$.
Also note that $R_{i,4}=R_{i,4}^{lin}\cup R_{i,4}^{sat}$ and we define $f$ separately
on those two subregions.
In order to correctly and continuously extend $f$, we will need to further split the
region $R_{i,3}$ into linear and saturated parts
$R_{i,3}^{lin}$ and $R_{i,3}^{sat}$: see \figref{fig:region_zoom}.

Before jumping into the technical details of the extension, we start with the intuition.
First, it is crucial to understand that our main constraint is continutity: since
we already defined $f$ over $R_{i,1}$ and $R_{i,4}$, our extension need to agree
with $f$ on the borders of those regions. Furthermore, $f$ still needs to be affine,
leaving us with little flexibility. Second, the behaviour of $f$ on those regions
needs to be carefully chosen. Indeed, as we mentioned before, we had to overapproximate
regions in several places and our simulation now includes extra points. We do not
want these extra points to have completely unpredictable trajectories, otherwise
they might reach the final region by chance and break the reduction. Therefore,
our strategy is to define $f$ in such a way that its behaviour on those ``wrong
points'' still has a valid interpretation in the original SUBSET-SUM problem. We
detail this idea for the various region right after.

Let $(a,b)\in R_{i,0}\cup R_{i,2}\cup R_{i,3}$: intuitively, this point corresponds to a configuration
$c=(i,\sigma,\varepsilon_i,\ldots,\varepsilon)$ where $\varepsilon_i\notin\{0^\star,1^\star\}$.
We know by construction that this point
does not correspond to a proper trajectory so it is tempting to simply discard it.
A very simple way of discarding point is to send them to a point $x$ that is (i) stable by $f$ ($f(x)=x$)
and (ii) not in the accepting region. That way we trap the trajectory of invalid points
into a useless region of the space. Let us illustrate this on $R_{i,0}$:
let $(a,b)\in R_{i,0}$ and take $f_\mathcal{I}(a,b)=(a,b)$. It is now trivial
that the point is stuck in $R_{i,0}$. Unfortunately, $f_\mathcal{I}$ is not continuous
anymore. Indeed, for $(a,b)=(a,\beta^{-i})\in R_{i,0}\cap R_{i,1}$ we have a discontinuity on the first coordinate.
Indeed, $f_1(a,b)=a$ on one side but $f_1(a,b)=a+2^{-p}$ on the other. A simple fix
is to take $f(a,b)=(a+2^{-p},0)$, this corresponds to:
\[\text{\textbf{if} }\varepsilon_i=0\text{ \textbf{then} } f_\mathcal{I}\begin{pmatrix}
    0.\raisebox{-0.4ex}{\begin{tikzpicture}[scale=1]
        \fill[draw,color=blue!50!white,fill=blue!30!white] (0,0) rectangle (1.5,0.5);
        \fill[draw,color=red!50!white,fill=red!30!white] (1.5,0) rectangle (3,0.5);
        \node[inner sep=0] at (0.75,0.2) {$i$};
        \node[inner sep=0] at (2.25,0.2) {$\sigma$};
        \end{tikzpicture}}\\
    0.\raisebox{-0.4ex}{\begin{tikzpicture}[scale=1]
        \fill[draw,color=darkgreen!50!white,fill=darkgreen!30!white] (0,0) rectangle (3,0.5);
        \foreach \x in {0.5,1,1.5,2,2.5} \draw[color=darkgreen!50!white] (\x,0) -- (\x,0.5);
        \node[inner sep=0] at (0.25,0.2) {$0$};
        \node[inner sep=0] at (0.75,0.2) {$\ldots$};
        \node[inner sep=0] at (1.25,0.2) {$\varepsilon_i$};
        \node[inner sep=0] at (1.75,0.2) {$\scriptscriptstyle\varepsilon_{i+1}$};
        \node[inner sep=0] at (2.25,0.2) {$\ldots$};
        \node[inner sep=0] at (2.75,0.2) {$\varepsilon_n$};\end{tikzpicture}}
    \end{pmatrix}
    =
    \begin{pmatrix}
    0.\raisebox{-0.4ex}{\begin{tikzpicture}[scale=1]
        \fill[draw,color=blue!50!white,fill=blue!30!white] (0,0) rectangle (1.5,0.5);
        \fill[draw,color=red!50!white,fill=red!30!white] (1.5,0) rectangle (3,0.5);
        \node[inner sep=0] at (0.75,0.2) {$i+1$};
        \node[inner sep=0] at (2.25,0.2) {$\sigma$};
        \end{tikzpicture}}\\
    0.\raisebox{-0.4ex}{\begin{tikzpicture}[scale=1]
        \fill[draw,color=white,fill=white] (2.5,0) rectangle (3,0.5);
        \fill[draw,color=darkgreen!50!white,fill=darkgreen!30!white] (0,0) rectangle (3,0.5);
        \foreach \x in {0.5,1,1.5,2,2.5} \draw[color=darkgreen!50!white] (\x,0) -- (\x,0.5);
        \node[inner sep=0] at (0.25,0.2) {$0$};
        \node[inner sep=0] at (0.75,0.2) {$\ldots$};
        \node[inner sep=0] at (1.25,0.2) {$0$};
        \node[inner sep=0] at (1.75,0.2) {$0$};
        \node[inner sep=0] at (2.25,0.2) {$\ldots$};
        \node[inner sep=0] at (2.75,0.2) {$0$};\end{tikzpicture}}
    \end{pmatrix}.
\]

One can check that $f$ is also continuous on the second coordinate. The reason
why this choice is clever is because $f(R_{i,0})\subseteq R_{i+1,0}$. Although
the invalid points are not stable, they are now stuck in the bottom regions $R_{j,0}$
for the rest of the simulation.

Unfortunately, this trick now does not work on $R_{i,2}$ because of the continuity requirement
on the second coordinate. Indeed, on $R_{i,1}\cap R_{i,2}$ we have that $f(a,b)=(a+2^{-p},\beta^{-i})$.
To understand what it means, imagine a configuration where $\varepsilon_i=2$ and
all the remaining $\varepsilon_j$ are $0$, then
its image by $f$ corresponds to a configuration were all the remaining
$\varepsilon_j$ are $4=1^\star$ plus an error. Indeed, $\beta^{-i}=\sum_{n=i+1}^\infty 4\beta^{-n}$.
In other words, we have:
\[\text{\textbf{if} }\varepsilon_i=2\text{ \textbf{then} } f_\mathcal{I}\begin{pmatrix}
    0.\raisebox{-0.4ex}{\begin{tikzpicture}[scale=1]
        \fill[draw,color=blue!50!white,fill=blue!30!white] (0,0) rectangle (1.5,0.5);
        \fill[draw,color=red!50!white,fill=red!30!white] (1.5,0) rectangle (3,0.5);
        \node[inner sep=0] at (0.75,0.2) {$i$};
        \node[inner sep=0] at (2.25,0.2) {$\sigma$};
        \end{tikzpicture}}\\
    0.\raisebox{-0.4ex}{\begin{tikzpicture}[scale=1]
        \fill[draw,color=darkgreen!50!white,fill=darkgreen!30!white] (0,0) rectangle (3,0.5);
        \foreach \x in {0.5,1,1.5,2,2.5} \draw[color=darkgreen!50!white] (\x,0) -- (\x,0.5);
        \node[inner sep=0] at (0.25,0.2) {$0$};
        \node[inner sep=0] at (0.75,0.2) {$\ldots$};
        \node[inner sep=0] at (1.25,0.2) {$\varepsilon_i$};
        \node[inner sep=0] at (1.75,0.2) {$0$};
        \node[inner sep=0] at (2.25,0.2) {$\ldots$};
        \node[inner sep=0] at (2.75,0.2) {$0$};\end{tikzpicture}}
    \end{pmatrix}
    =
    \begin{pmatrix}
    0.\raisebox{-0.4ex}{\begin{tikzpicture}[scale=1]
        \fill[draw,color=blue!50!white,fill=blue!30!white] (0,0) rectangle (1.5,0.5);
        \fill[draw,color=red!50!white,fill=red!30!white] (1.5,0) rectangle (3,0.5);
        \path (3,0) rectangle (4,0.5);
        \node[inner sep=0] at (0.75,0.2) {$i+1$};
        \node[inner sep=0] at (2.25,0.2) {$\sigma$};
        \end{tikzpicture}}\\
    0.\raisebox{-0.4ex}{\begin{tikzpicture}[scale=1]
        \fill[draw,color=white,fill=white] (2.5,0) rectangle (4,0.5);
        \fill[draw,color=darkgreen!50!white,fill=darkgreen!30!white] (0,0) rectangle (4,0.5);
        \foreach \x in {0.5,1,1.5,2,2.5,3,3.5} \draw[color=darkgreen!50!white] (\x,0) -- (\x,0.5);
        \node[inner sep=0] at (0.25,0.2) {$0$};
        \node[inner sep=0] at (0.75,0.2) {$\ldots$};
        \node[inner sep=0] at (1.25,0.2) {$0$};
        \node[inner sep=0] at (1.75,0.2) {$4$};
        \node[inner sep=0] at (2.25,0.2) {$\ldots$};
        \node[inner sep=0] at (2.75,0.2) {$4$};
        \node[inner sep=0] at (3.25,0.2) {$4$};
        \node[inner sep=0] at (3.75,0.2) {$\ldots$};\end{tikzpicture}}
    \end{pmatrix}.
\]
Furthermore, thinking about the future a bit, on $R_{i,3}\cap R_{i,4}$ we have that $f(a,b)=(a+2^{-p},0)$.
In other words, somewhere on $R_{i,2}\cup R_{i,3}$, the second coordinate has to go
from $\beta^{-i}$ to $0$ in a continuous fashion. This is where the clever tricks
comes in: we can continuously change the second coordinate from $1$ to $0$ such
that the action of $f$ looks like all $\varepsilon_j$ were ``flipped'': $0$
is exchange with $4$ and $1$ with $2$. To visualise how this possible, simply
think about the configuration as having an infinite number of $\varepsilon_j$ (although
we use a finite amount of them) that gets turned into an infinite number of $\mu_j$ where
$\mu_j=4-\varepsilon_j$:
\[f_\mathcal{I}\begin{pmatrix}
    0.\raisebox{-0.4ex}{\begin{tikzpicture}[scale=1]
        \fill[draw,color=blue!50!white,fill=blue!30!white] (0,0) rectangle (1.5,0.5);
        \fill[draw,color=red!50!white,fill=red!30!white] (1.5,0) rectangle (3,0.5);
        \path (3,0) rectangle (4,0.5);
        \node[inner sep=0] at (0.75,0.2) {$i+1$};
        \node[inner sep=0] at (2.25,0.2) {$\sigma$};
        \end{tikzpicture}}\\
    0.\raisebox{-0.4ex}{\begin{tikzpicture}[scale=1]
        \fill[draw,color=white,fill=white] (2.5,0) rectangle (4,0.5);
        \fill[draw,color=darkgreen!50!white,fill=darkgreen!30!white] (0,0) rectangle (4,0.5);
        \foreach \x in {0.5,1,1.5,2,2.5,3,3.5} \draw[color=darkgreen!50!white] (\x,0) -- (\x,0.5);
        \node[inner sep=0] at (0.25,0.2) {$0$};
        \node[inner sep=0] at (0.75,0.2) {$\ldots$};
        \node[inner sep=0] at (1.25,0.2) {$2$};
        \node[inner sep=0] at (1.75,0.2) {$\scriptstyle\varepsilon_{i+1}$};
        \node[inner sep=0] at (2.25,0.2) {$\ldots$};
        \node[inner sep=0] at (2.75,0.2) {$\varepsilon_{n}$};
        \node[inner sep=0] at (3.3,0.2) {$\scriptstyle\varepsilon_{n+1}$};
        \node[inner sep=0] at (3.75,0.2) {$\ldots$};\end{tikzpicture}}
    \end{pmatrix}
    =
    \begin{pmatrix}
    0.\raisebox{-0.4ex}{\begin{tikzpicture}[scale=1]
        \fill[draw,color=blue!50!white,fill=blue!30!white] (0,0) rectangle (1.5,0.5);
        \fill[draw,color=red!50!white,fill=red!30!white] (1.5,0) rectangle (3,0.5);
        \path (3,0) rectangle (4,0.5);
        \node[inner sep=0] at (0.75,0.2) {$i+1$};
        \node[inner sep=0] at (2.25,0.2) {$\sigma$};
        \end{tikzpicture}}\\
    0.\raisebox{-0.4ex}{\begin{tikzpicture}[scale=1]
        \fill[draw,color=white,fill=white] (2.5,0) rectangle (4,0.5);
        \fill[draw,color=darkgreen!50!white,fill=darkgreen!30!white] (0,0) rectangle (4,0.5);
        \foreach \x in {0.5,1,1.5,2,2.5,3,3.5} \draw[color=darkgreen!50!white] (\x,0) -- (\x,0.5);
        \node[inner sep=0] at (0.25,0.2) {$0$};
        \node[inner sep=0] at (0.75,0.2) {$\ldots$};
        \node[inner sep=0] at (1.25,0.2) {$0$};
        \node[inner sep=0] at (1.75,0.2) {$\scriptstyle\mu_{i+1}$};
        \node[inner sep=0] at (2.25,0.2) {$\ldots$};
        \node[inner sep=0] at (2.75,0.2) {$\mu_{n}$};
        \node[inner sep=0] at (3.3,0.2) {$\scriptstyle\mu_{n+1}$};
        \node[inner sep=0] at (3.75,0.2) {$\ldots$};\end{tikzpicture}}
    \end{pmatrix}.
\]
This might seem convoluted at first until one realises why this is helpful. Imagine
an extended SUBSET-SUM simulation where we now have three actions instead of two:
\begin{itemize}
\item $\varepsilon_i=0$: go to next number,
\item $\varepsilon_i=1$: add $A_i$ and go to next number,
\item $\varepsilon_i=2$: flip all remaining $\varepsilon_j$ and go to next number.
\end{itemize}
Our simulation corresponds to this extended problem, which is still NP-complete.
The remaining region $R_{i,3}$ follows the same principle as $R_{i,0}$, with
a slight complication because of the saturated sum. \figref{fig:region_zoom_informal}
gives the interpretation of the definition of $f$ on each subregion.

\begin{definition}[Extended region splitting]\label{def:ext_regions}
For $i\in\{1,\ldots,n\}$ and $\alpha\in\{0,\ldots,\beta-1\}$, define:
\[R_{i,3}^{lin}=R_{i,3}\cap\left\{(a,b)\thinspace\big|\thinspace
    b\beta^{i}-3\leqslant \frac{2^{-p-1}+i2^{-p}-a}{2^{-p-1}-(B+1-A_i)2^{-q}}\right\}
\qquad R_{i,3}^{sat}=R_{i,3}\setminus R_{i,3}^{lin}\]
\end{definition}

It should be clear by definition that $R_{i,3}^{sat}=R_{i,3}^{lin}\cup
R_{i,3}^{sat}$ and that the two subregions are disjoint except on the border.

\begin{definition}[Extended piecewise affine simulation]\label{def:subset_paf_ex}
\[
f_\mathcal{I}(a,b)=\begin{cases}
(a+2^{-p},0)&\text{if }(a,b)\in R_{i,0}\\
(a+2^{-p},3\beta^{-i}-b)&\text{if }(a,b)\in R_{i,2}\\
(a+2^{-p}+A_i2^{-q}(b\beta^{i}-3),0)&\text{if }(a,b)\in R_{i,3}^{lin}\\
((i+\frac{3}{2})2^{-p}-(b\beta^i-3)(2^{-p-1}-(B+1)2^{-q}),0)&\text{if }(a,b)\in R_{i,3}^{sat}\\
\end{cases}
\]
\end{definition}

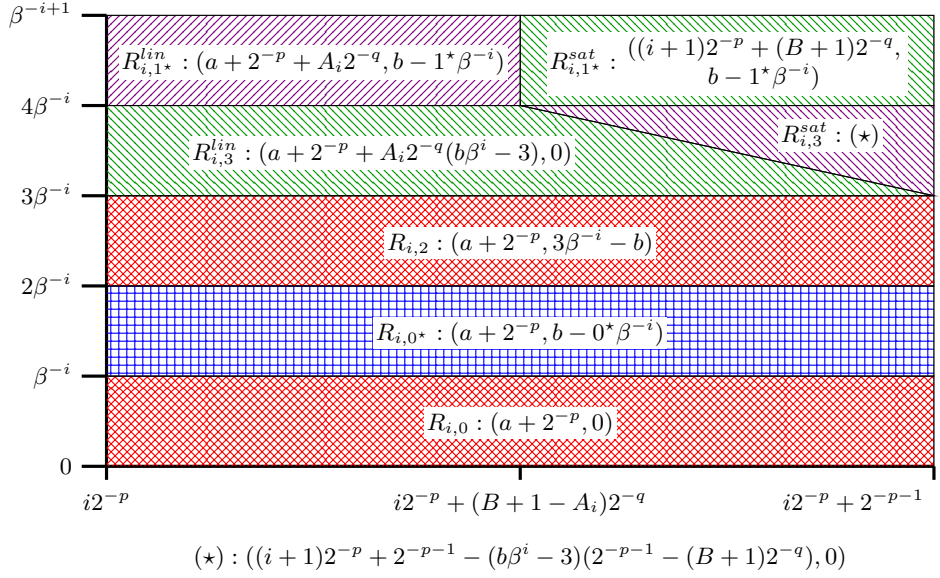
\begin{figure}
\begin{center}
\begin{tikzpicture}[xscale=11,yscale=6]
  \pgfmathsetmacro{\n}{2}
  \pgfmathsetmacro{\maxn}{\n+1}
  \pgfmathsetmacro{\p}{ceil(log2(\n+2))}
  \pgfmathsetmacro{\B}{2}
  \pgfmathsetmacro{\vomega}{ceil(log2(\B+2))}
  \pgfmathsetmacro{\q}{\p+\vomega+1}
  \pgfmathsetmacro{\vbeta}{5}
  \pgfmathsetmacro{\xstep}{1}
  \pgfmathsetmacro{\ystep}{pow(\vbeta,-1)}
  \pgfmathsetmacro{\xsat}{0.5}
  \draw[very thin,color=gray,xstep=\xstep,ystep=\ystep] (0.0,0.0) grid (1,1);
  \draw[very thick] (0,0) -- (1, 0);
  \draw[very thick] (0,0) -- (0, 1);
  \draw[very thick] (0,0) -- (0,-0.03) node[below] {$i2^{-p}$};
  \draw[very thick] (1,0) -- (1,-0.03) node[below left] {$i2^{-p}+2^{-p-1}$};
  \draw[very thick] (0,1) -- (-0.03,1) node[left] {$\beta^{-i+1}$};
  \draw[very thick] (0,0) -- (-0.03,0) node[left] {$0$};
  \draw[very thick] (\xsat,0) -- (\xsat,-0.03) node[below] {$i2^{-p}+(B+1-A_i)2^{-q}$};
  \colorlet{mygreen}{green!60!black}
  \colorlet{mymagenta}{blue!50!red}
  \draw[pattern color=red,pattern=crosshatch] (0,0) rectangle (1,\ystep);
  \draw (.5,\ystep/2-0.02) node[anchor=base,fill=white,inner sep=1pt]{$R_{i,0}:(a+2^{-p},0)$};
  \draw[pattern color=blue,pattern=grid] (0,\ystep) rectangle (1,2*\ystep);
  \draw (.5,\ystep+\ystep/2-0.02) node[anchor=base,fill=white,inner sep=1pt]{
      $R_{i,0^\star}:(a+2^{-p},b-0^\star\beta^{-i})$};
  \draw[pattern color=red,pattern=crosshatch] (0,2*\ystep) rectangle (1,3*\ystep);
  \draw (.5,2*\ystep+\ystep/2-0.02) node[anchor=base,fill=white,inner sep=1pt]{$R_{i,2}:(a+2^{-p},3\beta^{-i}-b)$};
  \draw[pattern color=mygreen,pattern=north west lines] (0,3*\ystep) -- (1,3*\ystep) -- (\xsat,4*\ystep) -- (0,4*\ystep) -- (0,3*\ystep);
  \draw (2*\xsat/3,3*\ystep+\ystep/2-0.02) node[anchor=base,fill=white,inner sep=1pt]{
      $R_{i,3}^{lin}:(a+2^{-p}+A_i2^{-q}(b\beta^i-3),0)$};
  \draw[pattern color=mymagenta,pattern=north west lines] (1,3*\ystep) -- (1,4*\ystep) -- (\xsat,4*\ystep) -- (1,3*\ystep);
  \draw (\xsat/4+3/4.,3*\ystep+\ystep/2+0.02) node[anchor=base,fill=white,inner sep=1pt]{
      $R_{i,3}^{sat}:(\star)$}; 
  \draw[pattern color=mymagenta,pattern=north east lines] (0,4*\ystep) rectangle (\xsat,5*\ystep);
  \draw (\xsat/2,4*\ystep+\ystep/2-0.02) node[anchor=base,fill=white,inner sep=1pt]{
      $R_{i,1^\star}^{lin}:(a+2^{-p}+A_i2^{-q},b-1^\star\beta^{-i})$};
  \draw[pattern color=mygreen,pattern=north west lines] (\xsat,4*\ystep) rectangle (1,5*\ystep);
  \draw (.5+\xsat/2,4*\ystep+\ystep/2-0.02) node[anchor=base,fill=white,inner sep=1pt]{
      $R_{i,1^\star}^{sat}:\begin{array}{c}((i+1)2^{-p}+(B+1)2^{-q},\\b-1^\star\beta^{-i})\end{array}$};

  \draw[very thick] (0,\ystep) -- (-0.03,\ystep) node[left] {$\beta^{-i}$};
  \foreach \valpha in {2,...,4}
  {
      \pgfmathsetmacro{\y}{\valpha*\ystep};
      \draw[very thick] (0,\y) -- (-0.03,\y) node[left] {$\valpha\beta^{-i}$};
  }
  \draw (.5,-0.2) node {$(\star):((i+1)2^{-p}+2^{-p-1}-(b\beta^i-3)(2^{-p-1}-(B+1)2^{-q}),0)$};
\end{tikzpicture}
\end{center}
\caption{Zoom on one $R_i$ with the subregions and formulas.\label{fig:region_zoom}}
\end{figure}

\begin{figure}
\begin{center}
\begin{tikzpicture}[xscale=11,yscale=6]
  \pgfmathsetmacro{\n}{2}
  \pgfmathsetmacro{\maxn}{\n+1}
  \pgfmathsetmacro{\p}{ceil(log2(\n+2))}
  \pgfmathsetmacro{\B}{2}
  \pgfmathsetmacro{\vomega}{ceil(log2(\B+2))}
  \pgfmathsetmacro{\q}{\p+\vomega+1}
  \pgfmathsetmacro{\vbeta}{5}
  \pgfmathsetmacro{\xstep}{1}
  \pgfmathsetmacro{\ystep}{pow(\vbeta,-1)}
  \pgfmathsetmacro{\xsat}{0.5}
  \draw[very thin,color=gray,xstep=\xstep,ystep=\ystep] (0.0,0.0) grid (1,1);
  \draw[very thick] (0,0) -- (1, 0);
  \draw[very thick] (0,0) -- (0, 1);
  \draw[very thick] (0,0) -- (0,-0.03) node[below] {$i2^{-p}$};
  \draw[very thick] (1,0) -- (1,-0.03) node[below left] {$i2^{-p}+2^{-p-1}$};
  \draw[very thick] (0,1) -- (-0.03,1) node[left] {$\beta^{-i+1}$};
  \draw[very thick] (0,0) -- (-0.03,0) node[left] {$0$};
  \draw[very thick] (\xsat,0) -- (\xsat,-0.03) node[below] {$i2^{-p}+(B+1-A_i)2^{-q}$};
  \colorlet{mygreen}{green!60!black}
  \colorlet{mymagenta}{blue!50!red}
  \draw[pattern color=red,pattern=crosshatch] (0,0) rectangle (1,\ystep);
  \draw (.5,\ystep/2-0.02) node[anchor=base,fill=white,inner sep=1pt]{
    $R_{i,0}:\text{set all remaining $\varepsilon_j$ to $0$}$};
  \draw[pattern color=blue,pattern=grid] (0,\ystep) rectangle (1,2*\ystep);
  \draw (.5,\ystep+\ystep/2-0.02) node[anchor=base,fill=white,inner sep=1pt]{
      $R_{i,0^\star}:\text{go to next number}$};
  \draw[pattern color=red,pattern=crosshatch] (0,2*\ystep) rectangle (1,3*\ystep);
  \draw (.5,2*\ystep+\ystep/2-0.02) node[anchor=base,fill=white,inner sep=1pt]{
    $R_{i,2}:\text{flip all remaining $\varepsilon_j$}$};
  \draw[pattern color=mygreen,pattern=north west lines] (0,3*\ystep) -- (1,3*\ystep) -- (\xsat,4*\ystep) -- (0,4*\ystep) -- (0,3*\ystep);
  \draw (1.8*\xsat/3,3*\ystep+\ystep/2-0.02) node[anchor=base,fill=white,inner sep=1pt]{
      $R_{i,3}^{lin}:\left\{\text{\begin{tabular}{l@{}}set all remaining $\varepsilon_j$ to $0$\\
        add a funny amount\textsuperscript{$\dagger$}\end{tabular}}\right.$};
  \draw[pattern color=mymagenta,pattern=north west lines] (1,3*\ystep) -- (1,4*\ystep) -- (\xsat,4*\ystep) -- (1,3*\ystep);
  \draw (\xsat/4+3/4.,3*\ystep+\ystep/2+0.02) node[anchor=base,fill=white,inner sep=1pt]{
      $R_{i,3}^{sat}:(\star)$}; 
  \draw[pattern color=mymagenta,pattern=north east lines] (0,4*\ystep) rectangle (\xsat,5*\ystep);
  \draw (\xsat/2,4*\ystep+\ystep/2-0.02) node[anchor=base,fill=white,inner sep=1pt]{
      $R_{i,1^\star}^{lin}:\text{add $A_i$}$};
  \draw[pattern color=mygreen,pattern=north west lines] (\xsat,4*\ystep) rectangle (1,5*\ystep);
  \draw (.5+\xsat/2,4*\ystep+\ystep/2-0.02) node[anchor=base,fill=white,inner sep=1pt]{
      $R_{i,1^\star}^{sat}:\text{set sum to $B+1$}$};

  \draw[very thick] (0,\ystep) -- (-0.03,\ystep) node[left] {$\beta^{-i}$};
  \foreach \valpha in {2,...,4}
  {
      \pgfmathsetmacro{\y}{\valpha*\ystep};
      \draw[very thick] (0,\y) -- (-0.03,\y) node[left] {$\valpha\beta^{-i}$};
  }
  \draw (0,-0.2) node[anchor=west] {$(\star):\left\{\text{\begin{tabular}{l@{}}set all remaining $\varepsilon_j$ to $0$\\
  add a funny amount\textsuperscript{$\dagger$}\end{tabular}}\right.$};
  \draw (.5,-0.2) node[anchor=west] {
    \begin{tabular}{@{}l@{}}{}\textsuperscript{$\dagger$}\thinspace The amount depends on $(a,b)$ and\\
        has no simple interpretation.\end{tabular}};
\end{tikzpicture}
\end{center}
\caption{Interpretation of the behaviour of $f$ on each suregion. See also \figref{fig:region_zoom}
and Section~\ref{sec:subset_sum_affine}. All regions include an implicit ``go to next number''.\label{fig:region_zoom_informal}}
\end{figure}
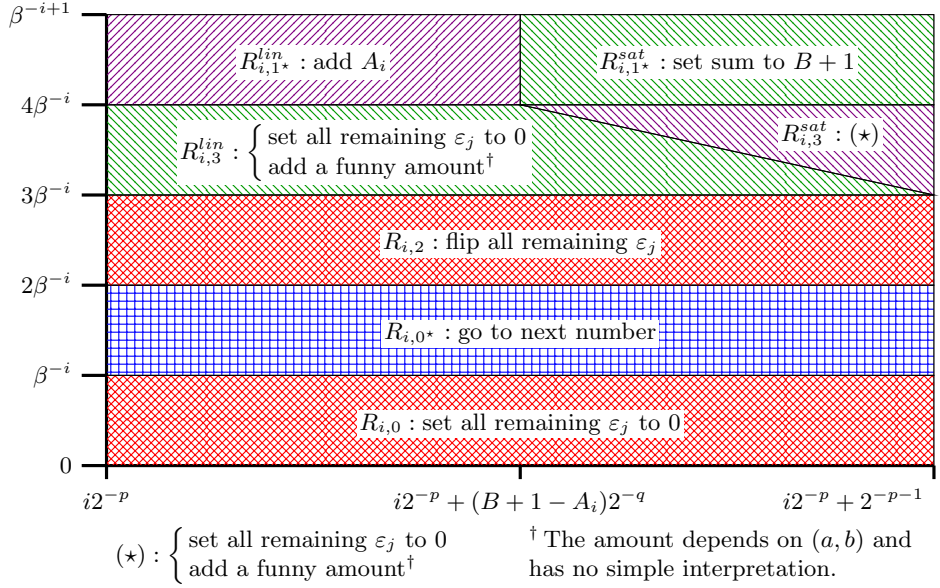

This extension was carefully chosen for its properties. In particular, we will
see that $f$ is still continuous, Also, the domain of definition of $f$ is $f$-stable
(i.e. $f(\dom f)\subseteq\dom f$).
And finally, we will see that $f$ is somehow ``reversible''.

\begin{lemma}[Simulation is continuous]\label{lem:simul_cont}
For any $i\in\{1,\ldots,n\}$, $f_\mathcal{I}(R_i)$ is well-defined and continuous over $R_i$.
\end{lemma}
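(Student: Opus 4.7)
The plan is to separate the claim into two parts: well-definedness of $f_\mathcal{I}$ as a function on $R_i$ (the piecewise cases must agree whenever they overlap), and continuity on $R_i$ (the affine branches must glue continuously across their common boundaries). First I would record that the seven subregions $R_{i,0},R_{i,0^\star},R_{i,2},R_{i,3}^{lin},R_{i,3}^{sat},R_{i,1^\star}^{lin},R_{i,1^\star}^{sat}$ have pairwise disjoint interiors and together cover $R_i=\bigcup_{\alpha\in\N_\beta}R_{i,\alpha}$; since each individual branch is affine, continuity on the whole of $R_i$ reduces to checking that, on every pair of adjacent subregions, the two formulas for $f_\mathcal{I}$ agree on the shared boundary.

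I would then enumerate the boundaries and verify each by direct substitution. The horizontal boundaries lie at $b=\alpha\beta^{-i}$ for $\alpha\in\{1,2,3,4\}$:
\begin{itemize}
\item At $b=\beta^{-i}$: the $R_{i,0}$ branch gives $(a+2^{-p},0)$, and the $R_{i,0^\star}$ branch gives $(a+2^{-p},\beta^{-i}-0^\star\beta^{-i})=(a+2^{-p},0)$.
\item At $b=2\beta^{-i}$: the two branches yield $(a+2^{-p},b-\beta^{-i})=(a+2^{-p},\beta^{-i})$ and $(a+2^{-p},3\beta^{-i}-b)=(a+2^{-p},\beta^{-i})$.
\item At $b=3\beta^{-i}$: one has $b\beta^{i}-3=0$, which forces the point into $R_{i,3}^{lin}$, and the formula collapses to $(a+2^{-p},0)$, matching $R_{i,2}$.
\item At $b=4\beta^{-i}$: setting $\lambda=b\beta^{i}-3=1$, the $R_{i,3}^{lin}$ branch gives $(a+2^{-p}+A_i2^{-q},0)$, matching $R_{i,1^\star}^{lin}$, while the $R_{i,3}^{sat}$ branch gives $((i+1)2^{-p}+(B+1)2^{-q},0)$, matching $R_{i,1^\star}^{sat}$.
\end{itemize}
It remains to check the internal vertical boundary inside $R_{i,1^\star}$ at $a=i2^{-p}+(B+1-A_i)2^{-q}$, which is immediate since $a+2^{-p}+A_i2^{-q}=(i+1)2^{-p}+(B+1)2^{-q}$ there.

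The main obstacle is the slanted internal boundary between $R_{i,3}^{lin}$ and $R_{i,3}^{sat}$ introduced in \defref{def:ext_regions}. On this line we have $\lambda=b\beta^i-3=(2^{-p-1}+i2^{-p}-a)/(2^{-p-1}-(B+1-A_i)2^{-q})$, equivalently $a=(i+\tfrac{1}{2})2^{-p}-\lambda\bigl(2^{-p-1}-(B+1-A_i)2^{-q}\bigr)$. Substituting into the $R_{i,3}^{lin}$ formula $(a+2^{-p}+A_i2^{-q}\lambda,0)$ and simplifying gives
\[\bigl((i+\tfrac{3}{2})2^{-p}-\lambda(2^{-p-1}-(B+1-A_i)2^{-q})+A_i2^{-q}\lambda,\,0\bigr)=\bigl((i+\tfrac{3}{2})2^{-p}-\lambda(2^{-p-1}-(B+1)2^{-q}),\,0\bigr),\]
which is exactly the $R_{i,3}^{sat}$ formula. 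This is the one algebraic identity that actually pins down the correct shape of the splitting line; the rest of the proof is a boundary-by-boundary check.

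Finally, well-definedness also requires that the image of $R_i$ lies in $[0,1]^2$. For the first coordinate, each branch maps $a\in[i2^{-p},i2^{-p}+2^{-p-1}]$ into $[(i+1)2^{-p},(i+1)2^{-p}+2^{-p-1}]$ (the saturated formulas cap it at $(i+1)2^{-p}+(B+1)2^{-q}\leqslant(i+1)2^{-p}+2^{-p-1}$ using $B+1<2^{\omega}$ and $q=p+\omega+1$), and for $i\leqslant n$ this stays in $[0,1]$ by the definition of $p$. For the second coordinate, the image is either $0$ or a translation of $b$ by $-0^\star\beta^{-i}$ or $-1^\star\beta^{-i}$ or the reflection $3\beta^{-i}-b$, all of which keep the value in $[0,\beta^{-i}]\subseteq[0,1]$. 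Combined with the continuity check above, this completes the lemma.
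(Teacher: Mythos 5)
Your proof is correct and follows essentially the same approach as the paper: enumerate the pairwise boundaries between the seven subregions of $R_i$ and verify by direct substitution that the affine formulas from Definitions~\ref{def:subset_paf} and~\ref{def:subset_paf_ex} agree on each one, with the slanted $R_{i,3}^{lin}/R_{i,3}^{sat}$ boundary requiring the key algebraic cancellation of the $A_i 2^{-q}$ terms. The only (harmless) addition beyond the paper's argument is your final paragraph checking that the image of $R_i$ lies in $[0,1]^2$, which the paper defers to Lemma~\ref{lem:simul_stable}.
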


\begin{proof}
As outlined on \figref{fig:region_zoom}, we need to check that the definitions of $f$
match at the borders of each subregions of $R_i$. More precisely, we need to check
that \defref{def:subset_paf} and \defref{def:subset_paf_ex} agree on all borders.
\begin{itemize}
\item $(a,b)\in R_{i,0}\cap R_{i,0^\star}$: the first component is computed using the same
    formula so is clearly continuous, the second component is always $0$ on both side of the border
    because $b-0^\star\beta^{-i}=0$ for $b=\beta^{-i}$
\item $(a,b)\in R_{i,0^\star}\cap R_{i,2}$: the first component is computed using the same
    formula so is clearly continuous, the second component is always $\beta^{-i}$ on both side of the border
    because $b-0^\star\beta^{-i}=3\beta^{-i}-b=\beta^{-i}$ for $b=2\beta^{-i}$
\item $(a,b)\in R_{i,2}\cap R_{i,3}^{lin}$: the first component is $a+2^{-p}$ and 
    the second component is $0$ on both side of the border because $3\beta^{-i}-b=b\beta^{i}-3=0$ for $b=3\beta^{-i}$
\item $(a,b)\in R_{i,3}^{lin}\cap R_{i,1^\star}^{lin}$: the first component is $a+2^{-p}+A_i2^{-q}$ 
    and the second component is $0$ on both side of the border because $b\beta^{i}-3=1$ and 
    $b-1^\star\beta^{-i}=0$ for $b=4\beta^{-i}$
\item $(a,b)\in R_{i,3}^{lin}\cap R_{i,3}^{sat}$: the second component is always $0$ on both regions
    so is clearly continuous. From \defref{def:ext_regions} one can see
    that $b\beta^i-3=\frac{Y}{X}$ holds on the border where $Y=2^{-p-1}+i2^{-p}-a$ and
    $X=2^{-p-1}-(B+1-A_i)2^{-q}$. Consequently,
    if we compute the difference between the two expression at the borders, we get:
    \begin{align*}
    &\left((i+1)2^{-p}+2^{-p-1}-(b\beta^i-3)(2^{-p-1}-(B+1)2^{-q})\right)\\
    &-\left(a+2^{-p}+A_i2^{-q}(b\beta^i-3)\right)\\
    &=i2^{-p}+2^{-p-1}-a-\frac{Y}{X}(2^{-p-1}-(B+1)2^{-q}+A_i2^{-q})\\
    &=Y-\frac{Y}{X}X=0
    \end{align*}
    this proves that they are equal.
\item $(a,b)\in R_{i,3}^{sat}\cap R_{i,1^\star}^{sat}$: the first component is $(i+1)2^{-p}+(B+1)2^{-q}$
    and the second component is $0$ on both side of the border because $b\beta^i-3=1$ and $b-1^\star\beta^{-i}=0$ 
    for $\beta=4\beta^{-i}$
\item $(a,b)\in R_{i,1^\star}^{lin}\cap R_{i,1^\star}^{sat}$: the first component is $(i+1)2^{-p}+(B+1)2^{-q}$
    on both side of the border because $a=i2^{-p}+(B+1-A_i)2^{-q}$, and the second component is computed using
    the same formula so is clearly continuous
\end{itemize}
\qed\end{proof}

\begin{lemma}[Simulation is stable]\label{lem:simul_stable}
For any $i\in\{1,\ldots,n\}$, $f_\mathcal{I}(R_i)\subseteq R_{i+1}$. Furthermore,
$f(R_0)\subseteq R_1$ and $f(R_{n+1})\subseteq R_{n+1}$.
\end{lemma}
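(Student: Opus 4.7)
The plan is to handle the three special cases ($R_0$, $R_{n+1}$, and the constant ``trap'' images) first, and then do a case analysis on the subregions $R_{i,0}$, $R_{i,0^\star}$, $R_{i,2}$, $R_{i,3}^{lin}$, $R_{i,3}^{sat}$, $R_{i,1^\star}^{lin}$, $R_{i,1^\star}^{sat}$ partitioning $R_i$ for $1\le i\le n$. For each subregion I will verify separately that (i) the first coordinate of $f_\mathcal{I}(a,b)$ lies in $[(i+1)2^{-p},(i+1)2^{-p}+2^{-p-1}]$ and (ii) the second coordinate lies in $[0,\beta^{-i}]=[0,\beta^{-(i+1)+1}]$. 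The statements $f(R_0)\subseteq R_1$ and $f(R_{n+1})\subseteq R_{n+1}$ are immediate from \defref{def:subset_paf}: the image of $R_0$ is $\{2^{-p}\}\times[0,1]\subseteq R_1$, and on $R_{n+1}$ the function is the identity.

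For the ``easy'' subregions the argument is purely arithmetic. On $R_{i,0}$, $R_{i,0^\star}$ and $R_{i,2}$ the first coordinate is simply $a+2^{-p}$, which automatically satisfies (i) since $a\in[i2^{-p},i2^{-p}+2^{-p-1}]$; the second coordinate is $0$, $b-\beta^{-i}$, or $3\beta^{-i}-b$, respectively, each trivially inside $[0,\beta^{-i}]$ by the defining interval for $b$. For $R_{i,1^\star}^{sat}$ the first coordinate is the constant $(i+1)2^{-p}+(B+1)2^{-q}$ and the second is $b-4\beta^{-i}\in[0,\beta^{-i}]$.

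The remaining cases rest on a single numerical inequality, which I will establish up front:
\[
(B+1)2^{-q}\;\le\;2^{-p-1},
\]
a direct consequence of $q=p+\omega+1$ and $\omega=\lceil\log_2(B+2)\rceil$, hence $2^\omega\ge B+2>B+1$. With this in hand, $R_{i,1^\star}^{lin}$ is straightforward: the first coordinate is $a+2^{-p}+A_i2^{-q}$ and the constraint $a\le i2^{-p}+(B+1-A_i)2^{-q}$ on $R_{i,1^\star}^{lin}$ makes the image bounded by $(i+1)2^{-p}+(B+1)2^{-q}\le(i+1)2^{-p}+2^{-p-1}$; the second coordinate is again $b-4\beta^{-i}\in[0,\beta^{-i}]$. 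For $R_{i,3}^{sat}$ the first coordinate is $(i+\tfrac{3}{2})2^{-p}-(b\beta^i-3)\,(2^{-p-1}-(B+1)2^{-q})$; since $b\beta^i-3\in[0,1]$ and, by the same inequality, $2^{-p-1}-(B+1)2^{-q}\ge 0$, the first coordinate ranges in $[(i+1)2^{-p}+(B+1)2^{-q},(i+1)2^{-p}+2^{-p-1}]\subseteq R_{i+1}$.

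The main obstacle will be the region $R_{i,3}^{lin}$, which is precisely the place where the seemingly ad hoc definition from \defref{def:ext_regions} pays off. Here the image is $(a+2^{-p}+A_i2^{-q}(b\beta^i-3),0)$, and I must show $A_i2^{-q}(b\beta^i-3)\le 2^{-p-1}+i2^{-p}-a$. Writing $Y=2^{-p-1}+i2^{-p}-a$ and $X=2^{-p-1}-(B+1-A_i)2^{-q}$, the defining inequality of $R_{i,3}^{lin}$ states $b\beta^i-3\le Y/X$. Combined with $A_i2^{-q}\le X$ (a direct rewriting of $(B+1)2^{-q}\le 2^{-p-1}$), this gives $A_i2^{-q}(b\beta^i-3)\le A_i2^{-q}\cdot Y/X\le Y$, which is exactly what is needed for (i); the second coordinate is $0$ so (ii) is trivial. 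Combining all cases yields $f_\mathcal{I}(R_i)\subseteq R_{i+1}$ as required.
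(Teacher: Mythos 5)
Your proof is correct and follows essentially the same strategy as the paper's: a subregion-by-subregion case analysis in which the two special cases are dispatched first, and all the nontrivial bounds on the first coordinate hinge on the single inequality $(B+1)2^{-q}\le 2^{-p-1}$. Factoring that inequality out up front and using $A_i2^{-q}\le X$ directly (rather than dividing by $A_i2^{-q}$) is a modest cleanup, and your formulas for the images on $R_0$ and $R_{i,1^\star}^{sat}$ actually match Definition~\ref{def:subset_paf} more faithfully than the paper's own proof text.
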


\begin{proof}
We need to examinate all possible cases for $(a,b)\in R_i$. Since $R_i=\bigcup_{\alpha=0}^{4}R_{i,\alpha}$
and that $R_{i,\alpha}=R_{i,\alpha}^{lin}\cup R_{i,\alpha}^{sat}$ we indeed cover all cases.
\begin{itemize}
\item If $(a,b)\in R_0$: then $f_\mathcal{I}(a,b)=(a+2^{-p},b)$ so 
$f_\mathcal{I}(R_0)=f_\mathcal{I}([0,2^{-p-1}]\times[0,1])=[2^{-p},2^{-p}+2^{-p-1}]\times[0,1]=R_1$.
\item If $(a,b)\in R_{n+1}$: then $f_\mathcal{I}(a,b)=(a,b)$ so $f_\mathcal{I}(R_{n+1})=R_{n+1}$.
\item If $(a,b)\in R_{i,0}$: then $f_\mathcal{I}(a,b)=(a+2^{-p},0)$ so
$f_\mathcal{I}(R_{i,0})=f_\mathcal{I}([i2^{-p},i2^{-p}+2^{-p-1}]\times[0,\beta^{-i}])
=[(i+1)2^{-p},(i+1)2^{-p}+2^{-p-1}]\times\{0\}\subseteq R_{i+1}$.
\item If $(a,b)\in R_{i,1}=R_{i,0^\star}$: then $f_\mathcal{I}(a,b)=(a+2^{-p},b-0^\star\beta^{-i})$ so
$f_\mathcal{I}(R_{i,1})=f_\mathcal{I}([i2^{-p},i2^{-p}+2^{-p-1}]\times[\beta^{-i},2\beta^{-i}])
=[(i+1)2^{-p},(i+1)2^{-p}+2^{-p-1}]\times[0,\beta^{-i}]=R_{i+1}$.
\item If $(a,b)\in R_{i,2}$: then $f_\mathcal{I}(a,b)=(a+2^{-p},3\beta^{-i}-b)$ so
$f_\mathcal{I}(R_{i,2})=f_\mathcal{I}([i2^{-p},i2^{-p}+2^{-p-1}]\times[2\beta^{-i},3\beta^{-i}])
=[(i+1)2^{-p},(i+1)2^{-p}+2^{-p-1}]\times[0,\beta^{-i}]=R_{i+1}$.
\item If $(a,b)\in R_{i,3}^{lin}$: the image of the second component is always $0$ so it's easy for this one,
also from \defref{def:ext_regions}, $b\beta^i-3\leqslant\frac{2^{-p-1}+i2^{-p}-a}{2^{-p-1}-(B+1-A_i)2^{-q}}
\leqslant \frac{2^{-p-1}+i2^{-p}-a}{A_i2^{-q}}$ because $2^{-p-1}-(B+1)2^{-q}\geqslant0$ 
since $(B+1)2^{-q}\leqslant2^\omega2^{-q}\leqslant2^{-p-1}$. Consequently, for the first coordinate we get that
$f_\mathcal{I}(a,b)_1\leqslant a+2^{-p}+A_i2^{-q}\frac{2^{-p-1}+i2^{-p}-a}{A_i2^{-q}}\leqslant(i+1)2^{-p}+2^{-p-1}$.
Also, since $i2^{-p}\leqslant a\leqslant i2^{-p}+2^{-p-1}$, it is clear that $f_\mathcal{I}(a,b)_1\geqslant (i+1)2^{-p}$.
So finally, $f_\mathcal{I}(R_{i,3}^{lin})\subseteq [(i+1)2^{-p},(i+1)2^{-p}+2^{-p-1}]\times\{0\}\subset R_{i+1}$.
\item If $(a,b)\in R_{i,3}^{sat}$: the image of the second component is always $0$ so it's easy for this one,
also from \defref{def:ext_regions}, $b\beta^i-3\geqslant\frac{2^{-p-1}+i2^{-p}-a}{2^{-p-1}-(B+1-A_i)2^{-q}}
\geqslant\frac{2^{-p-1}+i2^{-p}-a}{2^{-p-1}-(B+1)2^{-q}}$ because $A_i\geqslant0$. Consequently, for the first coordinate we get that
$f_\mathcal{I}(a,b)_1\leqslant (i+1)2^{-p}+2^{-p-1}-(2^{-p-1}-(B+1)2^{-q})\frac{2^{-p-1}+i2^{-p}-a}{2^{-p-1}-(B+1)2^{-q}}
\leqslant (i+1)2^{-p}+2^{-p-1}+i2^{-p}+2^{-p-1}-a\leqslant (i+1)2^{-p}+2^{-p-1}$ since $a\leqslant i2^{-p}+2^{-p-1}$.
Also since $b\beta^i-3\leqslant1$ we get that 
$f_\mathcal{I}(a,b)_1\geqslant (i+1)2^{-p}+2^{-p-1}-(2^{-p-1}-(B+1)2^{-q})\times 1\geqslant (i+1)2^{-p}+(B+1)2^{-q}$.
So finally, $f_\mathcal{I}(R_{i,3}^{sat})\subseteq [(i+1)2^{-p}+(B+1)2^{-q},(i+1)2^{-p}+2^{-p-1}]\times\{0\}\subset R_{i+1}$.
\item If $(a,b)\in R_{i,4}^{lin}=R_{i,1^\star}^{lin}$: then $f_\mathcal{I}(a,b)=(a+2^{-p}+A_i2^{-q},b-1^\star\beta^{-i})$ so
$f_\mathcal{I}(R_{i,4}^{lin})=f_\mathcal{I}([i2^{-p},i2^{-p}+(B+1-A_i)2^{-q}]\times[4\beta^{-i},5\beta^{-i}])
=[(i+1)2^{-p}+A_i2^{-q},(i+1)2^{-p}+(B+1)2^{-q}]\times[0,\beta^{-i}]\subseteq R_{i+1}$ because $(B+1)2^{-q}\leqslant 2^{-p-1}$.
\item If $(a,b)\in R_{i,4}^{sat}$: then $f_\mathcal{I}(a,b)=((i+1)2^{-p}+(B+1)2^{-q},0)$ so
$f_\mathcal{I}(R_{i,4}^{sat})=\{(i+1)2^{-p}+(B+1)2^{-q}\}\times\{0\}\subseteq R_{i+1}$.
\end{itemize}
\qed\end{proof}

We now get to the core lemma of the simulation. Up to this point, we were only
interested in forward simulation: that is given a point, what are the iterates of $x$.
In order to prove the NP-hardness result, we need a backward result: given a point,
what are the possible preimages of it. To this end, we introduce new subregions
$R_i^{unsat}$ of the $R_i$, that we call \emph{unsaturated}. Intuitively, $R_i^{unsat}$ corresponds
to the encodings where $\sigma\leqslant B$, that is the sum did not saturate at $B+1$.
We also introduce the $R_{fin}$ region, that will be the region to reach. We will
be interested in the preimages of $R_{fin}$.

\begin{definition}[Unsaturated regions]
For $i\in\{1,\ldots,n+1\}$, define
\[R_i^{unsat}=[i2^{-p},i2^{-p}+B2^{-q}]\times[\beta^{-n-1},\beta^{-i+1}-\beta^{-n-1}]\]
\[R_{fin}=[(n+1)2^{-p}+B2^{-q}-2^{-q-1},(n+1)2^{-p}+B2^{-q}]\times[\beta^{-n-1},2\beta^{-n-1}]\]
\end{definition}

\begin{lemma}[Simulation is reversible]\label{lem:simul_rev}
Let $i\in\{2,\ldots,n\}$ and $(a,b)\in R_i^{unsat}$
Then the only points $\vec{x}$ such that $f_\mathcal{I}(\vec{x})=(a,b)$ are:
\begin{itemize}
\item $\vec{x}=(a-2^{-p},b+0^\star\beta^{-i+1})\in R_{i-1,0^\star}\cap R_{i-1}^{unsat}$
\item $\vec{x}=(a-2^{-p},\beta^i-b+0^\star\beta^{-i+1})\in R_{i-1,2}\cap R_{i-1}^{unsat}$
\item $\vec{x}=(a-2^{-p}-A_i2^{-q},b+1^\star\beta^{-i+1})\in R_{i-1,1^\star}^{lin}\cap R_{i-1}^{unsat}$
(only if $a\geqslant 2^{-p}+A_i2^{-q}$)
\end{itemize}
\end{lemma}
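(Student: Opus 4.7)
The plan is to reduce to an exhaustive case analysis over the seven affine pieces of $f_\mathcal{I}$ in $R_{i-1}$, after using \lemref{lem:simul_stable} to confine preimages to $R_{i-1}$.

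First I would argue that any $\vec{x}\in\dom(f_\mathcal{I})$ with $f_\mathcal{I}(\vec{x})=(a,b)$ must lie in $R_{i-1}$. By \lemref{lem:simul_stable} one has $f_\mathcal{I}(R_j)\subseteq R_{j+1}$ for $0\le j\le n$ and $f_\mathcal{I}(R_{n+1})\subseteq R_{n+1}$; since the regions $R_j$ project onto pairwise disjoint first-coordinate strips $[j2^{-p},j2^{-p}+2^{-p-1}]$, the image $f_\mathcal{I}(R_j)$ can meet $R_i$ only when $j=i-1$, the hypothesis $i\le n$ excluding the stable case $j=n+1$. The candidate preimages are therefore distributed among the seven affine pieces of $R_{i-1}$, namely $R_{i-1,0}$, $R_{i-1,0^\star}$, $R_{i-1,2}$, $R_{i-1,3}^{lin}$, $R_{i-1,3}^{sat}$, $R_{i-1,1^\star}^{lin}$ and $R_{i-1,1^\star}^{sat}$.

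Second, I would eliminate four of these pieces by contradiction with the bounds defining $R_i^{unsat}$. Inspecting \defref{def:subset_paf_ex}, the images of $R_{i-1,0}$, $R_{i-1,3}^{lin}$ and $R_{i-1,3}^{sat}$ all have second coordinate identically $0<\beta^{-n-1}$, so none of their points can map into $R_i^{unsat}$. For $R_{i-1,1^\star}^{sat}$ the image has first coordinate identically $i2^{-p}+(B+1)2^{-q}$, strictly greater than the upper bound $i2^{-p}+B2^{-q}$ imposed by $R_i^{unsat}$, so again no preimage exists there.

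Third, on each of the three surviving pieces $R_{i-1,0^\star}$, $R_{i-1,2}$, $R_{i-1,1^\star}^{lin}$ the map $f_\mathcal{I}$ is affine and, by direct inspection of its linear part, a bijection onto its image; inverting the formulas of \defref{def:subset_paf} and \defref{def:subset_paf_ex} then yields exactly the three candidate preimages listed in the statement, in one-to-one correspondence with the three actions \emph{skip}, \emph{flip}, and \emph{add}. The remaining work is to verify that each candidate actually lies both in its claimed subregion and in $R_{i-1}^{unsat}$. The main obstacle is the $R_{i-1,1^\star}^{lin}$ case: the preimage's first coordinate $a-2^{-p}-A_i2^{-q}$ falls inside the horizontal strip of $R_{i-1,1^\star}^{lin}$ only when $a\ge 2^{-p}+A_i2^{-q}$, which is precisely the side condition in the statement; the other two cases go through unconditionally because the slack $\beta^{-n-1}$ built into the definition of $R_i^{unsat}$ keeps the reconstructed second coordinate strictly inside the horizontal strips $[\beta^{-i+1},2\beta^{-i+1}]$ and $[2\beta^{-i+1},3\beta^{-i+1}]$ of $R_{i-1,0^\star}$ and $R_{i-1,2}$ respectively, and the first-coordinate bound $a\le i2^{-p}+B2^{-q}$ translates to $(i-1)2^{-p}\le a-2^{-p}\le (i-1)2^{-p}+B2^{-q}$, keeping the reconstructed first coordinate inside $R_{i-1}^{unsat}$.
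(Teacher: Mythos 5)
Your proof is correct and takes essentially the same route as the paper's: confine preimages to $R_{i-1}$ via \lemref{lem:simul_stable}, then inspect each of the seven affine pieces on $R_{i-1}$, ruling out four because their images violate a bound of $R_i^{unsat}$ and inverting $f_\mathcal{I}$ on the three surviving ones. The only difference is cosmetic: the paper cross-references the image computations carried out in the proof of \lemref{lem:simul_stable}, whereas you re-derive the images from \defref{def:subset_paf_ex} and name the specific bound of $R_i^{unsat}$ that each rejected piece violates, which is if anything slightly more explicit.
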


\begin{proof}

First notice that since $f_\mathcal{I}(R_i)\subseteq R_{i+1}$ for all $i\in\{0,\ldots,n\}$,
the only candidates for $\vec{x}$ must belong to $R_{i-1}$. Furthermore, on each affine
region, there can only be one candidate except if the function is trivial.

A close look at the proof of \lemref{lem:simul_stable}
reveals that:
\begin{itemize}
\item $f_\mathcal{I}(R_{i-1,0})\subseteq [(i+1)2^{-p},(i+1)2^{-p}+2^{-p-1}]\times\{0\}$ 
shareing no point with $R_i^{unsat}$ so there is no possible candidate
\item $f_\mathcal{I}(R_{i-1,1})=R_i$ and there is only one possible candidate
\item $f_\mathcal{I}(R_{i-1,2})=R_i$ and there is only one possible candidate
\item $f_\mathcal{I}(R_{i-1,3})\subseteq [(i+1)2^{-p},(i+1)2^{-p}+2^{-p-1}]\times\{0\}$ so like
$R_{i-1,0}$ there is no possible candidate
\item $f_\mathcal{I}(R_{i-1,4}^{lin})\subseteq R_i$ and there is only one possible candidate
\item $f_\mathcal{I}(R_{i-1,4}^{sat})\subseteq [(i+1)2^{-p}+(B+1)2^{-q},(i+1)2^{-p}+2^{-p-1}]\times[0,\beta^{-i}]$
sharing no point with $R_i^{unsat}$ so there is no possible candidate
\end{itemize}

It is then only a matter of checking that the claimed formulas work and they trivially do
except for the case of $R_{i-1,4}^{lin}$ where we need the potential candidate to
belong to the region.
\qed\end{proof}

The goal of those results is to show that if there is a point in $R_{fin}$ that is reachable from $R_0$ then we
can extract, from its trajectory, a configuration that also reaches $R_{fin}$.
Furthermore, we arranged so that $R_{fin}$ contains the encoding of only one
configuration: $(n+1,B)$ (see \lemref{lem:subset_sum_simul}).

\begin{lemma}[Backward-forward identity]\label{lem:backward_forward}
For any point $\vec{x}\in R_{fin}$, if there exists
a point $\vec{y}\in R_0$ and an integer $k$ such that $f_\mathcal{I}^{[k]}(\vec{y})=\vec{x}$ then
there exists a configuration $c=(1,0,\varepsilon_1,\ldots,\varepsilon_n)$ such that 
$f_\mathcal{I}^{[k]}(\enc{c})\in R_{fin}$.
\end{lemma}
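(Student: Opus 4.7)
My plan is to read off a valid configuration $c$ directly from the trajectory of $\vec{y}$. Set $\vec{x}_j = f_\mathcal{I}^{[j]}(\vec{y})$, so that $\vec{x}_0 = \vec{y}$ and $\vec{x}_k = \vec{x}$. By \lemref{lem:simul_stable}, $\vec{x}_j \in R_{\min(j, n+1)}$, and since $R_{fin} \subseteq R_{n+1}$, we must have $k \geqslant n+1$.

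The crucial first step is to show that for each $j \in \{1,\ldots,n\}$, $\vec{x}_j$ lies in one of the three ``good'' subregions $R_{j,0^\star}$, $R_{j,2}$, $R_{j,1^\star}^{lin}$, thereby ruling out the four others. For $R_{j,0}$ and $R_{j,3}^{lin}$: \defref{def:subset_paf} and \defref{def:subset_paf_ex} show that the image has second coordinate exactly $0$; the only subregion of $R_{j+1}$ containing a point with second coordinate $0$ is $R_{j+1,0}$, whose image again has second coordinate $0$. By induction $\vec{x}_k$ would have second coordinate $0$, contradicting $\vec{x} \in R_{fin}$ (which requires second coordinate $\geqslant \beta^{-n-1}$). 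For $R_{j,1^\star}^{sat}$ and $R_{j,3}^{sat}$: define the offset $\delta_m := a_m - m \cdot 2^{-p}$, where $a_m$ is the first coordinate of $\vec{x}_m$ (for $m \leqslant n+1$). A case analysis of \defref{def:subset_paf} and \defref{def:subset_paf_ex} shows that in the five non-saturating cases $\delta_{m+1} \geqslant \delta_m$, while in either saturating subregion $\delta_{m+1} \geqslant (B+1) 2^{-q}$ (using $b\beta^m - 3 \in [0,1]$ on $R_{m,3}^{sat}$). Thus once $\delta$ reaches $(B+1) 2^{-q}$ it stays above, and because $f_\mathcal{I}$ is the identity on $R_{n+1}$, $a_k = a_{n+1} \geqslant (n+1) 2^{-p} + (B+1) 2^{-q} > (n+1) 2^{-p} + B \cdot 2^{-q}$, exceeding $R_{fin}$'s first-coordinate upper bound.

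With this classification in hand, set $\varepsilon_j = 1$ when $\vec{x}_j \in R_{j,1^\star}^{lin}$ and $\varepsilon_j = 0$ otherwise, and let $c = (1, 0, \varepsilon_1, \ldots, \varepsilon_n)$. Now track the first coordinate along the trajectory: from the $R_0$ case of \defref{def:subset_paf}, $a_1 = 2^{-p}$; both $R_{j,0^\star}$ and $R_{j,2}$ give $a_{j+1} = a_j + 2^{-p}$, while $R_{j,1^\star}^{lin}$ gives $a_{j+1} = a_j + 2^{-p} + A_j \cdot 2^{-q}$. The key observation is that the ``flip'' case $R_{j,2}$ contributes the same to $a$ as ``skip''. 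Summing, $a_{n+1} = (n+1) \cdot 2^{-p} + \bigl(\sum_{j:\varepsilon_j=1} A_j\bigr) 2^{-q}$. Combining with $a_k = a_{n+1}$ and the constraint $a_k \in [(n+1) 2^{-p} + B \cdot 2^{-q} - 2^{-q-1}, (n+1) 2^{-p} + B \cdot 2^{-q}]$ imposed by $\vec{x}_k \in R_{fin}$, the integer $\sum_{j:\varepsilon_j=1} A_j$ is pinned to $B$.

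To conclude, \lemref{lem:subset_sum_it} and iterated \lemref{lem:simul_correct} yield $f_\mathcal{I}^{[n]}(\enc{c}) = \enc{T_\mathcal{I}^{[n]}(c)} = \enc{(n+1, B)} = ((n+1) 2^{-p} + B \cdot 2^{-q}, \beta^{-n-1})$, which lies on the corner of $R_{fin}$. Since $R_{n+1}$ is $f_\mathcal{I}$-fixed, $f_\mathcal{I}^{[k]}(\enc{c}) = \enc{(n+1, B)} \in R_{fin}$ for every $k \geqslant n$, in particular for our $k \geqslant n+1$. The main obstacle is the case analysis for the four forbidden subregions: the ``trapped-at-zero'' argument is immediate, but verifying that the offset stays above $(B+1) 2^{-q}$ once a saturating step occurs requires methodically checking the first-coordinate update formula on every subregion.
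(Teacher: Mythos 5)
Your proof is correct and takes a genuinely different route from the paper's. The paper disposes of this lemma quickly by invoking its ``Simulation is reversible'' lemma (\lemref{lem:simul_rev}), which characterises the $f_\mathcal{I}$-preimages of points in the unsaturated regions $R_i^{unsat}$; walking backward from $\vec{y}_{n+1}\in R_{n+1}^{unsat}$ it immediately places each $\vec{y}_i$ in $R_{i,0^\star}\cup R_{i,2}\cup R_{i,1^\star}^{lin}$. You instead work purely forward and rule out the four ``bad'' subregions with two self-contained monotonicity observations: (i) the images of $R_{j,0}$ and $R_{j,3}^{lin}$ have second coordinate exactly $0$, and once the second coordinate is $0$ it stays $0$ (only $R_{j+1,0}$, and eventually $R_{n+1}$, can contain such a point), contradicting the lower bound $\beta^{-n-1}$ on $R_{fin}$; (ii) the offset $\delta_m=a_m-m2^{-p}$ is non-decreasing under every non-saturating formula and is forced to at least $(B+1)2^{-q}$ after any saturating step, which then violates $R_{fin}$'s first-coordinate upper bound $(n+1)2^{-p}+B2^{-q}$, killing $R_{j,1^\star}^{sat}$ and $R_{j,3}^{sat}$. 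The first-coordinate bookkeeping $a_{n+1}=(n+1)2^{-p}+\bigl(\sum_{j:\varepsilon_j=1}A_j\bigr)2^{-q}$ and the pinning of the integer sum to $B$ are then identical in spirit to the paper's. Your route never touches the $R_i^{unsat}$ machinery or \lemref{lem:simul_rev} at all, so it is more elementary and locally self-contained; the paper's route is shorter only because that case analysis has been factored out into the reversibility lemma. As a side benefit, your forward argument avoids a small awkwardness in the paper's proof: \lemref{lem:simul_rev} is stated only for $i\in\{2,\ldots,n\}$, so the paper's phrase ``apply \lemref{lem:simul_rev} starting from $\vec{y}_{n+1}$'' is technically outside the stated hypotheses, whereas your derivation needs no such extension.
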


\begin{proof}

Define $\vec{y}_0=\vec{y}$ and $\vec{y}_{i+1}=f_\mathcal{I}(\vec{y}_i)$ for all $i\in\{0,k-1\}$.
Since $\vec{y}_0\in R_0$, we immediately get that $\vec{y}_i\in R_i$ using \lemref{lem:simul_stable}
and in particular, $k\geqslant n+1$ because $\vec{y}_{k}=x\in R_{n+1,0^\star}$.

Now apply \lemref{lem:simul_rev} starting from $\vec{y}_{n+1}\in R_{n+1}^{unsat}$:
we conclude that for all $i\geqslant 1$, 
$\vec{y}_i\in (R_{i,1}\cup R_{i,2}\cup R_{i,4}^{lin})\cap R_n^{unsat}$.
Define $\varepsilon_i=0$ if $\vec{y}_i\in R_{i,1}\cup R_{i,2}$ and $1$ if $\vec{y}_i\in R_{i,4}^{lin}$.
Write $\vec{y}_i=(a_i,b_i)$. Again using \lemref{lem:simul_stable}
we get that $a_{i-1}=a_{i}-2^{-p}-\varepsilon_iA_i2^{-q}$ (just check all three cases).
Also since $\vec{x}=\vec{y_{n+1}}\in R_{fin}$ then
$a_{n+1}\in[(n+1)2^{-p}+B2^{-q}-2^{-q-1},(n+1)2^{-p}+B2^{-q}]$.
Finally, $\vec{y}_0\in R_0$ so $f_\mathcal{I}(a_0,b_0)=(2^{-p},0)=(a_1,b_1)$.
We conclude that $a_1=2^{-p}$. Putting everything together we get:
\[\left\{\begin{array}{r@{}l}
a_{n+1}&=(n+1)2^{-p}+2^{-q}\sum_{i=1}^{n}\varepsilon_iA_i\\
a_{n+1}&\in [(n+1)2^{-p}+B2^{-q}-2^{-q-1},(n+1)2^{-p}+B2^{-q}]
\end{array}\right.\]
Since the $A_i$, $B$ are integers and $\varepsilon_i\in\{0,1\}$, we get that
$B=\sum_{i=1}^{n}A_i\varepsilon_i$.
Apply \lemref{lem:simul_correct} on the configuration to conclude.
\qed\end{proof}

\begin{lemma}[Final region is accepting]\label{lem:fin_is_accept}
For any configuration $c$, if $\enc{c}\in R_{fin}$ then $c=(n+1,B)$.
\end{lemma}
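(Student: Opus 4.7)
The plan is to read the membership $\enc{c}\in R_{fin}$ one coordinate at a time, using the explicit encoding
\[
\enc{c}=\left(i2^{-p}+\sigma2^{-q},\; 0^\star\beta^{-n-1}+\sum_{j=i}^n \varepsilon_j^\star\beta^{-j}\right),
\]
and the definition $R_{fin}=[(n+1)2^{-p}+B2^{-q}-2^{-q-1},(n+1)2^{-p}+B2^{-q}]\times[\beta^{-n-1},2\beta^{-n-1}]$.

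First I would look at the second coordinate and argue that it forces $i=n+1$. Suppose, for contradiction, that $i\leqslant n$. Then $\varepsilon_i^\star\in\{0^\star,1^\star\}=\{1,4\}$, so the sum contains the term $\varepsilon_i^\star\beta^{-i}\geqslant\beta^{-i}\geqslant\beta^{-n}=\beta\cdot\beta^{-n-1}=5\beta^{-n-1}$. Since all other terms in the encoding of the second coordinate are non-negative, the second coordinate of $\enc{c}$ is at least $5\beta^{-n-1}$, which exceeds the upper bound $2\beta^{-n-1}$ of $R_{fin}$. This contradicts $\enc{c}\in R_{fin}$, so $i=n+1$.

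With $i=n+1$, the configuration reduces to $(n+1,\sigma)$ and the encoding simplifies to $\enc{c}=((n+1)2^{-p}+\sigma2^{-q},\beta^{-n-1})$. The second coordinate is automatically in $[\beta^{-n-1},2\beta^{-n-1}]$, so only the first coordinate constraint remains. Matching it against $R_{fin}$ yields
\[
(n+1)2^{-p}+B2^{-q}-2^{-q-1}\leqslant (n+1)2^{-p}+\sigma2^{-q}\leqslant (n+1)2^{-p}+B2^{-q},
\]
which simplifies to $B-\tfrac{1}{2}\leqslant\sigma\leqslant B$. Since $\sigma\in\N$, this forces $\sigma=B$, giving $c=(n+1,B)$. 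There is essentially no obstacle here: the entire argument is a direct two-step bookkeeping of the encoding, the point being that the choice $\beta=5$ (rather than $\beta=2$) creates exactly the gap needed to distinguish the final index $n+1$ from smaller values of $i$, and the precision $2^{-q-1}$ in the first coordinate of $R_{fin}$ is half the spacing $2^{-q}$ between consecutive integer values of $\sigma$, which uniquely pins down $\sigma=B$.
\qed
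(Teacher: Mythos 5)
Your proof is correct, and it takes a mildly different but arguably cleaner route than the paper's. The paper reads only the \emph{first} coordinate of $\enc{c}$: since $i\in\{0,\ldots,n+1\}$, $\sigma\in\{0,\ldots,B+1\}$, and $(B+1)2^{-q}<2^{-p-1}$, the map $(i,\sigma)\mapsto i2^{-p}+\sigma2^{-q}$ is injective with separation at least $2^{-q}$, so the interval $[(n+1)2^{-p}+B2^{-q}-2^{-q-1},(n+1)2^{-p}+B2^{-q}]$ of width $2^{-q-1}$ captures exactly the pair $(n+1,B)$; the paper states this somewhat tersely and leaves the separation-of-scales check implicit. You instead split the work across the two coordinates: you use the second coordinate and the gap created by $\beta=5$ (namely $\varepsilon_i^\star\beta^{-i}\geqslant\beta^{-n}=5\beta^{-n-1}>2\beta^{-n-1}$) to force $i=n+1$, and only then use the first coordinate to squeeze out $\sigma=B$. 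This decomposition makes explicit why $\beta=5$ matters and avoids relying on the injectivity of the first-coordinate encoding. Both are elementary and correct; yours is a bit more self-contained at the cost of touching one extra coordinate.
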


\begin{proof}
Write $c=(i,\sigma,\varepsilon_i,\ldots,\varepsilon_n)$, then$\enc{c}=
\left(i2^{-p}+\sigma2^{-q},\sum_{j=i}^n\varepsilon_i^\star\beta^{-i}+0^\star\beta^{-n-1}\right)$.
It implies that $i2^{-p}+\sigma2^{-q}\in[(n+1)2^{-p}+B2^{-q}-2^{-q-1},(n+1)2^{-p}+B2^{-q}]$
and because $i$ is an integer in range $[0,n+1]$ and $\sigma$ an integer in range $[0,B+1]$,
necessarily $i=n+1$ and $\sigma=B$.
\qed\end{proof}

\subsection{Complexity result}

We now have all the tools to show that \texttt{REACH-REGION-TIME} is an NP-hard problem.

\begin{theorem}\label{th:reach_np_hard}
\texttt{REACH-REGION-TIME} is NP-hard for $d\geqslant2$.
\end{theorem}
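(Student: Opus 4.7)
The plan is a polynomial-time reduction from \texttt{SUBSET-SUM}, which is NP-complete by \thref{th:subset_sum_npc}. Given an instance $\mathcal{I}=(B,A_1,\ldots,A_n)$, I take the continuous piecewise affine function $f_\mathcal{I}\in PAF_2$ of \defref{def:subset_paf_ex}, together with the initial region $R_0$ and the target region $R_{fin}$ already introduced, and set the time bound $T=n+1$ in unary. Every ingredient of this tuple has bit length polynomial in $|\mathcal{I}|$: the $O(n)$ polyhedra are each described by $O(1)$ linear inequalities whose rational coefficients have denominators that are powers of $2$ or of $\beta=5$ with polynomially-bounded exponents, and $T=n+1$ takes $n+1$ symbols in unary. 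That $f_\mathcal{I}$ is a legitimate element of $PAF_2$ on an $f$-stable domain follows from \lemref{lem:simul_cont} (continuity along region borders) together with \lemref{lem:simul_stable} (the partition $R_0,R_1,\ldots,R_{n+1}$ is forward-stable).

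For correctness I argue both directions. If $\mathcal{I}$ is satisfiable, then \lemref{lem:subset_sum_simul} furnishes a configuration $c=(1,0,\varepsilon_1,\ldots,\varepsilon_n)$ with $T_\mathcal{I}^{[n]}(c)=(n+1,B)$. I start from the point $x_0=(0,b)\in R_0$, where $b$ is the second coordinate of $\enc{c}$ (a quick geometric-series check gives $b<1$, so $x_0$ indeed lies in $R_0$); a direct calculation from \defref{def:subset_paf} gives $f_\mathcal{I}(x_0)=(2^{-p},b)=\enc{c}$, and iterating \lemref{lem:simul_correct} $n$ more times yields $f_\mathcal{I}^{[n+1]}(x_0)=\enc{(n+1,B)}$, which lies in $R_{fin}$ by direct inspection. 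Conversely, if $f_\mathcal{I}^{[t]}(x_0)\in R_{fin}$ for some $x_0\in R_0$ and some $t\leqslant n+1$, then \lemref{lem:backward_forward} produces a configuration $c=(1,0,\varepsilon_1,\ldots,\varepsilon_n)$ such that $f_\mathcal{I}^{[t]}(\enc{c})\in R_{fin}$; combining \lemref{lem:simul_correct} with \lemref{lem:fin_is_accept} forces $T_\mathcal{I}^{[t]}(c)=(n+1,B)$, and \lemref{lem:subset_sum_simul} delivers satisfiability of $\mathcal{I}$.

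The main obstacle has already been absorbed into the preceding lemmas, so the theorem itself is essentially bookkeeping. The genuine difficulty is that \texttt{REACH-REGION-TIME} only allows polyhedral initial regions, and $R_0$ is a rectangle containing uncountably many points that do not encode any configuration. Without the careful extension of \defref{def:subset_paf_ex} and the reversibility statement \lemref{lem:simul_rev}, such "invalid" starting points could wander into $R_{fin}$ by accident and break the backward direction. The technical work of routing invalid trajectories through $R_{i,0}$, $R_{i,2}$, $R_{i,3}$ in such a way that \lemref{lem:backward_forward} still recovers a genuine configuration from any successful trajectory is exactly what makes the reduction go through.
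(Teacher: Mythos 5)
Your proof takes essentially the same route as the paper's: the same reduction from \texttt{SUBSET-SUM} via $f_\mathcal{I}$, $R_0$, $R_{fin}$, with \lemref{lem:subset_sum_simul}, \lemref{lem:simul_correct}, \lemref{lem:backward_forward}, and \lemref{lem:fin_is_accept} assembled in the same order. The one place you depart is an off-by-one correction that is in fact an improvement: since $\enc{(1,0,\varepsilon_1,\ldots,\varepsilon_n)}$ lies in $R_1$ rather than $R_0$ and the initialization step out of $R_0$ consumes one iteration, the time bound must indeed be $T=n+1$ as you set it, whereas the paper's proof writes $T=n$.
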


\begin{proof}
Let $\mathcal{I}=(B,A_1,\ldots,A_n)$ be an instance of \texttt{SUBSET-SUM}. We consider the instance
$\mathcal{J}$ of \texttt{REACH-REGION-TIME} defined in the previous section with maximum
number of iterations set to $n$ (the number of $A_i$),
the initial region set to $R_0$ and the final region set to $R_{fin}$.
One easily checks that this instance has polynomial size in the size of $\mathcal{I}$.
The two directions of the proof are:
\begin{itemize}
\item If $\mathcal{I}$ is satisfiable then use \lemref{lem:subset_sum_it} and
\lemref{lem:simul_correct} to conclude that there is a point $x\in R_0$ in the initial region
such that $f_\mathcal{I}^{[n]}(x)\in R_{fin}$ so $\mathcal{J}$ is satisfiable.
\item If $\mathcal{J}$ is satisfiable then there exists $x\in R_0$ and $k\leqslant n$ such
that $f_\mathcal{I}^{[k]}(x)\in R_{fin}$. Use \lemref{lem:backward_forward} and \lemref{lem:simul_correct} to conclude
that there exists a configuration $c=(1,0,\varepsilon_1,\ldots,\varepsilon_n)$ such
that $\enc{T_\mathcal{I}^{[k]}(c)}=f_\mathcal{I}^{[k]}(\enc{c})\in R_{fin}$. Apply
\lemref{lem:fin_is_accept} and use the injectivity of the encoding to conclude that
$T_\mathcal{I}^{[k]}(c)=(n+1,B)$ and \lemref{lem:subset_sum_simul} to get that $\mathcal{I}$ is satisfiable.
\end{itemize}
\qed\end{proof}

\section{Bounded Time Reachability is in NP}\label{sec:paf:solving}

In the previous section we have
shown that the \texttt{REACH-REGION-TIME} problem is NP-hard. We now give a
more precise characterization of the complexity of this problem, by
proving that it is NP-complete. Since we have shown its NP-hardness,
the only thing that remains to be shown is that \texttt{REACH-REGION-TIME}
belongs to NP. This is done in this section.

\subsection{Notations and definitions}

For any $i\in \{1,\ldots,d\}$, let $\pi_i^d \colon \R^d \to \R$ denote the $i^{th}$ projection function,
that is, $\pi(x_1,\ldots,x_d) = x_i$. Let $g_d\colon\R^{d+1}\to\R^d$ be defined by 
$g_d(x_1,\ldots,x_{d+1}) = (x_1,\ldots,x_d)$. For a square matrix $A$ of size $(d+1)\times (d+1)$ 
define the following pair of projection functions. 
The first function $h_{1,d}$ takes as input a square matrix $A$ of size $(d+1)\times (d+1)$ and 
returns a square matrix of size $d\times d$ that is the upper-left block of $A$.
The second function $h_{2,d}$ takes as input a square matrix $A$ of size $(d+1)\times (d+1)$ 
and returns the vector of size $d$ given by  $[a_{1,d+1} \cdots a_{d,d+1}]^T$ (the last column of $A$ minus the last element).

Let $s$ denote the size function, its domain of objects will be overloaded and understood from the context.
For $x \in \Z$, $s(x)$ is the length of the encoding of $x$ in base $2$. For $x\in\Q$ 
with $x = \frac{p}{q}$ with $p$ and $q$ coprime, we have $s(x) = \max(s(p),s(q))$. For an affine function $f$
we define the size of $f(\vec x) = A\vec x+\vec b$ (where all entries of $A$ and $\vec b$ are rationals) as:
$s(f) = \max(\max_{i,j}(s(a_{i,j})),\max(s(b_i)))$. We define the size of a polyhedron $r$ defined
by $A\vec{x}\leqslant\vec{b}$ as: $s(r)=\max(s(A),s(\vec{b}))$.

We define the size of a piecewise affine function $f$ as:
$s(f) = \max_i(s(f_i),s(r_i))$ where $f_i$ denotes the restriction of $f$ to $r_i$ the $i^{th}$ region.

We define the \emph{signature} of a point $\vec x$ as the sequence of indices of the regions traversed 
by the iterates of $f$ on $\vec x$ (that is, the region trajectory).

\subsection{\texttt{REACH-REGION-TIME} is in NP}

In order to solve a reachability problem, we will formulate it with linear
algebra. However a crucial issue here is that of the size of the numbers, especially
when computing powers of matrices. Indeed, if taking the $n^{th}$ power of
$A$ yields a representation of exponential size, no matter how fast our algorithm
is, it will run on exponentially large instances and thus be slow.

First off, we show how to move to homogeneous coordinates so that $f$ becomes piecewise
linear instead of piecewise affine.

\begin{lemma}\label{lem:affine_to_linear}
Assume that $f(\vec{x})=A\vec{x}+\vec{b}$ with $A=(a_{i,j})_{1\leqslant i,j\leqslant d}$
and let $y=A'(\vec{x},1)^T$ where $A'$ is the block matrix
$\begin{pmatrix}A&\vec{b}\\0&1\end{pmatrix}$.
Then $f(x)=g_d(A'(\vec{x},1)^T)$.
\end{lemma}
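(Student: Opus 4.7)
The plan is to verify the identity by a direct block matrix computation, since this lemma is essentially the well-known homogeneous coordinates trick for reducing affine maps to linear maps. No real obstacle is expected; the only care required is in writing the block product cleanly.

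First, I would unfold the definition of $A'$ in block form. Setting $\vec{z}=(\vec{x},1)^{T}\in\R^{d+1}$, the product $A'\vec{z}$ decomposes as
\begin{equation*}
A'\vec{z} \;=\; \begin{pmatrix}A & \vec{b}\\ \vec{0}^{T} & 1\end{pmatrix}\begin{pmatrix}\vec{x}\\ 1\end{pmatrix} \;=\; \begin{pmatrix}A\vec{x}+\vec{b}\cdot 1\\ \vec{0}^{T}\vec{x}+1\cdot 1\end{pmatrix} \;=\; \begin{pmatrix}A\vec{x}+\vec{b}\\ 1\end{pmatrix}.
\end{equation*}
So the first $d$ coordinates of $A'\vec{z}$ carry exactly the affine image $A\vec{x}+\vec{b}=f(\vec{x})$, and the last coordinate remains $1$, which is the bookkeeping coordinate of the homogeneous representation.

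Next I would apply the projection $g_{d}\colon\R^{d+1}\to\R^{d}$, which by its definition discards the last coordinate:
\begin{equation*}
g_{d}\bigl(A'(\vec{x},1)^{T}\bigr) \;=\; g_{d}\bigl(A\vec{x}+\vec{b},\,1\bigr) \;=\; A\vec{x}+\vec{b} \;=\; f(\vec{x}),
\end{equation*}
which is the claimed identity. This concludes the proof.

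The point of recording this elementary fact is that iterating $f$ corresponds to iterating the purely linear map $\vec{z}\mapsto A'\vec{z}$ on the augmented space, since the invariant last coordinate $1$ is preserved by $A'$; this will be used subsequently to analyze $f^{[t]}$ through powers $(A')^{t}$ rather than repeated affine compositions.
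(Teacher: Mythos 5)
Your proof is correct, and it is the straightforward block-matrix verification that the lemma calls for; the paper in fact states this lemma without proof (treating it as immediate), so your computation is exactly the argument the authors leave implicit.
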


\begin{remark}
Notice that this lemma extends nicely to the composition of affine functions:
if $f(\vec{x})=A\vec{x}+\vec{b}$ and $h(\vec{x})=C\vec{x}+\vec{d}$ then
$h(f(x))=g_d(C'A'(\vec{x},1)^T)$.
\end{remark}

We can now state the main lemma, namely that the size of the iterates of $f$
vary linearly in the number of iterates, assuming that $f$ is piecewise affine.

\begin{lemma}\label{lem:sz_paf_it}
Let $d\geqslant 2$ and $f\in PAF_d$. Assume that all the coefficients of $f$ on all
regions are rationals. Then for all $t\in\N$,
$s(f^{[t]})\leqslant (d+1)^2s(f)pt+(t-1)\lceil\log_2(d+1)\rceil$ where $p$ is the number
of regions of $f$. This inequality holds even if all rationals are taken to have the
same denominator.
\end{lemma}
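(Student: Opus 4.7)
The plan is to proceed by induction on $t$, lifting $f$ to homogeneous coordinates via \lemref{lem:affine_to_linear} so that composition of affine maps becomes matrix multiplication. For each region $r_i$ of $f$, write the affine piece as $f_i(\vec{x}) = A_i \vec{x} + \vec{b}_i$ and form the $(d+1)\times(d+1)$ matrix $M_i = \begin{pmatrix} A_i & \vec{b}_i \\ 0 & 1 \end{pmatrix}$. Under the common-denominator assumption we may write $M_i = \frac{1}{q} N_i$ where $N_i$ is an integer matrix with entries of absolute value at most $2^{s(f)}$ and $s(q) \leq s(f)$.

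A region of $f^{[t]}$ corresponds to a signature $(i_1, \ldots, i_t)$ of regions of $f$ visited along the first $t$ iterates, and on this region the affine part of $f^{[t]}$ is represented in homogeneous coordinates by the product $M_{i_t} \cdots M_{i_1} = q^{-t}\,(N_{i_t} \cdots N_{i_1})$. To bound the size of this product I carry out an elementary sum-of-products count: each entry of $N_{i_t} \cdots N_{i_1}$ is a sum of $(d+1)^{t-1}$ products of $t$ integer entries, hence has absolute value at most $(d+1)^{t-1} 2^{t\,s(f)}$, giving size at most $t\,s(f) + (t-1)\lceil \log_2(d+1) \rceil$. The common denominator $q^t$ has size at most $t\,s(f)$, dominated by the numerator bound. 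This controls the affine part.

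For the region descriptions, observe that the region of $f^{[t]}$ associated with signature $(i_1, \ldots, i_t)$ is the set $\{\vec{x} : f^{[k]}(\vec{x}) \in r_{i_{k+1}}\text{ for }0 \leq k \leq t-1\}$. Each constraint arises by substituting the (already-bounded) affine expression for $f^{[k]}$ into the defining inequalities of $r_{i_{k+1}}$, which amounts to multiplying the constraint matrix of $r_{i_{k+1}}$ (of size $\leq s(f)$) by the homogeneous product representing $f^{[k]}$ (of size $\leq k\,s(f) + (k-1)\lceil \log_2(d+1) \rceil$ by the previous paragraph). The same sum-of-products argument bounds the coefficients of the resulting inequalities. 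Taking the maximum over $k < t$ and over all affine pieces, and absorbing the polynomial bookkeeping into the loose prefactor $(d+1)^2 p$, yields the stated bound $s(f^{[t]}) \leq (d+1)^2 s(f)\, p\, t + (t-1)\lceil \log_2(d+1) \rceil$.

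The main technical subtlety is keeping track of the common denominator across $t$ multiplications, so that denominators compose as $q^t$ rather than interacting unpredictably with numerators; without this uniform denominator the bound could degrade. A secondary point is counting exactly $(d+1)^{t-1}$ summands, not $(d+1)^t$, in an entry of a product of $t$ matrices, which is what makes the additive term $(t-1)\lceil\log_2(d+1)\rceil$ rather than $t\lceil\log_2(d+1)\rceil$. These precise constants are what ultimately guarantee that the size of $f^{[t]}$ grows polynomially in $t$ and $s(f)$, the essential ingredient for placing \texttt{REACH-REGION-TIME} in NP.
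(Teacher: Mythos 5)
Your overall strategy matches the paper's: lift to homogeneous coordinates, reduce iterate-size to a bound on products of $(d+1)\times(d+1)$ matrices, and count summands entry-wise. However, there is a genuine gap at the step where you normalize to a common denominator, and it is precisely this step that the stated prefactor $(d+1)^2 p$ is accounting for.

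You write $M_i=\frac{1}{q}N_i$ with $s(q)\leqslant s(f)$ and $\lvert (N_i)_{j,k}\rvert\leqslant 2^{s(f)}$. This is not justified. Each region of $f$ has its own $(d+1)^2$ rational entries, each of size $\leqslant s(f)$, but with potentially unrelated denominators; to get a single $q$ that works for \emph{all} $p$ regions you must take the least common multiple of up to $p(d+1)^2$ integers of size $\leqslant s(f)$, which can have size as large as $p(d+1)^2 s(f)$. After clearing, the integer entries of the $N_i$ then have size up to $p(d+1)^2 s(f)$ as well, not $s(f)$. (This is exactly what the paper's base case $t=1$ establishes: the common denominator is bounded by $2^{s(f)p(d+1)^2}$.) Consequently the product bound you derive, $t\,s(f)+(t-1)\lceil\log_2(d+1)\rceil$, is too small; the correct one is $t\,p(d+1)^2 s(f)+(t-1)\lceil\log_2(d+1)\rceil$, which is the lemma's bound on the nose. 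Describing $(d+1)^2 p$ as a ``loose prefactor'' absorbing ``polynomial bookkeeping'' therefore misattributes its origin: it is not slack, it is the exact cost of the common-denominator normalization, and the lemma explicitly insists on that normalization in its last sentence. A counterexample to your intermediate claim: take $p$ regions whose denominators are distinct primes of size $\approx s(f)$; the LCM has size $\approx p\,s(f)$, far exceeding $s(f)$. The remainder of your argument (the $(d+1)^{t-1}$ count of summands, the per-signature region description via substitution) is sound once this bound is repaired.
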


\begin{proof}

Using \lemref{lem:affine_to_linear}, we get that $f^{[t]}(\vec{x})=g_d(h^{[t]}([\vec{x}\;1]^T))$,
where $h$ is a piecewise linear function in dimension $d+1$ such that $s(h)=s(f)$.
We show this result by induction on $t$ for $h$. The result then follows for $f$. In all cases we take all rationals to have the same denominator.

In the case $t=1$, it suffices to see that taking all rationals to have the same denominator involves 
multiplying the numerator and denominators by at most the lowest
common multiple of all numbers, and hence is at most $2^{s(f)(p(d+1)^2)}$.
Indeed the greatest number is $2^{s(h)}$ by definition, and there are $(d+1)^2$ numbers per region (the entries of the matrix).

Assume the result is true for $t\in\N$. Let $\vec{y}\in\Q^{d+1}$.
Then $h^{[t+1]}(\vec{y})=B_{t+1}\cdots B_1\vec{y}$, where $B_i$'s are the matrices corresponding to some regions of $h$. 
In particular, $s(B_i)\leqslant s(h)$. From the induction hypothesis we can assume that all rationals have the same denominator and we get that
$s(B_t\cdots B_1)\leqslant (d+1)^2s(h)pt+(t-1)\lceil\log_2(d+1)\rceil$.
It follows\footnote{Use elementary properties of the size function: $s(xy)\leqslant s(x)+s(y)$, $s(x_1+\cdots+x_k)\leqslant s(k)+\max_ks(x_k)$} that 
for any $1\leqslant i,j\leqslant d+1$: 

\begin{align*}
s((B_{t+1}\cdots B_1)_{i,j})&=s\left(\sum_{k=1}^{d+1}\left(B_{t+1}\right)_{i,k}\left(B_t\cdots B_1\right)_{k,j}\right)\\
&\leqslant\lceil\log_2(d+1)\rceil+s(B_{t+1})+s(B_t\cdots B_1)\\
&\leqslant\lceil\log_2(d+1)\rceil+s(h)+(d+1)^2s(h)pt+(t-1)\lceil\log_2(d+1)\rceil\\
&\leqslant (d+1)^2s(h)p(t+1)+t\lceil\log_2(d+1)\rceil
\end{align*}

This shows the result for the particular region where $y$ belongs. Since the bound does not
depend on $y$ and $h^{[t+1]}$ has finitely many regions, it is true for all regions of
$h^{[t+1]}$.
\qed\end{proof}

Finally, we need some result about the size of solutions to systems of linear inequalities.
Indeed, if we are going to quantify over the existence of a solution of polynomial size,
we must ensure that the size constraints do not change the satisfiability of the system.

\begin{lemma}[\cite{KComputingReals}]\label{lem:sys_poly_size_sol}
Let $A$ be a $N\times d$ integer matrix  and $\vec{b}$ an integer vector. If the
$A\vec{x}\leqslant\vec{b}$ system admits a solution, then there exists a rational solution
$x_s$ such that $s(x_s)\leqslant (d+1)L+(2d+1)\log_2(2d+1)$ where $L=\max(s(A),s(b))$.
\end{lemma}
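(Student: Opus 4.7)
The plan is to exploit the classical fact from polyhedral theory: if a rational polyhedron is non-empty, it contains a point whose coordinates are ratios of determinants of square sub-matrices built from $A$ and $\vec{b}$, and to then bound those determinants via Hadamard's inequality.

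First I would reduce to the case where the feasible set $P = \{\vec{x} \in \R^d : A\vec{x} \leq \vec{b}\}$ contains a vertex. If $P$ is non-empty but has no vertex (it contains an affine line along its recession cone), I intersect $P$ with a box $\{-M \leq x_i \leq M\}$, choosing $M = 2^{\bigO{dL}}$ large enough to preserve feasibility. Establishing that such an $M$ exists is itself a small determinantal argument applied to the recession cone of $P$, but it only adds $2d$ new constraints whose coefficient bit-size is of the same order as $L$.

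Next I would invoke the standard structural theorem for bounded rational polyhedra: such a polyhedron has a vertex, i.e., a point at which $d$ linearly independent inequalities of the (augmented) system are tight. Picking an index set $I$ of size $d$ realizing this vertex, the vertex is the unique solution of the square system $A_I \vec{x} = \vec{b}_I$, and by Cramer's rule each coordinate equals $\det(A_I^{(j)})/\det(A_I)$, where $A_I^{(j)}$ is obtained by replacing column $j$ of $A_I$ with $\vec{b}_I$.

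The size bound then follows from Hadamard's inequality: any $d \times d$ integer matrix $M$ with entries of absolute value at most $2^L$ satisfies $|\det M| \leq d^{d/2} 2^{Ld}$, so $s(\det M) \leq dL + \tfrac{d}{2}\log_2 d$. Summing the bit-sizes of numerator and denominator, and absorbing the mild overhead introduced by the bounding box, yields a bound of the claimed form $(d+1)L + (2d+1)\log_2(2d+1)$. The main obstacle is the initial reduction from an arbitrary non-empty $P$ to a vertex-bearing polytope: one must verify that the box intersection does not destroy feasibility, and this verification is precisely what forces the extra $(2d+1)\log_2(2d+1)$ term in the final estimate rather than a cleaner $d L$.
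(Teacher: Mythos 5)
Your overall strategy (Cramer's rule plus Hadamard's inequality) is the classical route to this kind of bound, whereas the paper simply cites Theorem~5 of~\cite{KComputingReals} and reads the numeric bound off that statement. So the two "proofs" are methodologically quite different: you are reproving the reference theorem from scratch, while the paper is invoking it as a black box.

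That said, your reduction to the vertex-bearing case has a genuine gap. You propose to intersect $P=\{x : Ax\leqslant b\}$ with a box $\{-M\leqslant x_i\leqslant M\}$ and justify the choice of $M$ by "a small determinantal argument applied to the recession cone of $P$." This cannot work as stated: the recession cone $\{y : Ay\leqslant 0\}$ contains only directional information and tells you nothing about where the feasible region sits in $\R^d$. Consider $P=\{x\in\R : x\geqslant 2^{L}\}$; the recession cone is $\{y\geqslant 0\}$, independent of $L$, yet any box with $M<2^{L}$ destroys feasibility. More fundamentally, choosing $M$ large enough to preserve feasibility requires knowing in advance that $P$ contains a point of size $2^{O(dL)}$ — which is exactly the statement you are trying to prove, so the argument as written is circular. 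A secondary issue is that even if such an $M=2^{\Theta(dL)}$ could be justified, it enters the right-hand side of the augmented system and hence one of the columns of the numerator determinant in Cramer's rule, pushing the Hadamard estimate to $O(d^2 L)$ rather than the $O(dL + d\log d)$ that the lemma claims.

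The standard fix avoids the box entirely: if $P\neq\emptyset$, take a minimal face $F$ of $P$. Then $F=\{x : A_I x = b_I\}$ for some index set $I$ (all inequalities defining $F$ are tight, and $F$ is precisely this affine subspace). Pick $r=\operatorname{rank}(A_I)$ linearly independent rows of $A_I$ and $d-r$ linearly independent rows of the $d\times d$ identity matrix so as to form a nonsingular $d\times d$ system $Bx=c$, where the identity rows get right-hand side $0$. Every entry of $B$ and $c$ is then bounded in magnitude by $2^L$, the unique solution $x^*=B^{-1}c$ lies in $F\subseteq P$, and your Hadamard computation gives $s(x^*)\leqslant dL + \tfrac{d}{2}\log_2 d$, which is within the stated bound. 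This is the argument your sketch is reaching for; it simply does not need the bounding box, and indeed cannot afford it.
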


\begin{proof}
See Theorem 5 of \cite{KComputingReals}: $s(x_s)\leqslant s\left((2d+1)!2^{L(2d+1)}\right)$.
\qed\end{proof}

Putting everything together, we obtain a fast nondeterministic
algorithm to solve \texttt{REACH-REGION-TIME}.
The nondeterminism allows us to choose a signature for the solution.
Once the signature is fixed, we can write it as a linear program of reasonable size using
\lemref{lem:sz_paf_it} and solve it.
The remaining issue is the one of the size of solution but fortunately
\lemref{lem:sys_poly_size_sol} ensures us that there is a small solution that can be found quickly.

\begin{theorem}\label{th:reach_in_np}
\texttt{REACH-REGION-TIME} is in NP.
\end{theorem}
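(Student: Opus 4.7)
The plan is to give a nondeterministic polynomial-time algorithm. The certificate to guess is a \emph{signature}: an integer $t\leqslant T$ together with a sequence $(i_0,\ldots,i_t)$ of region indices of $f$, whose intended meaning is that $f^{[k]}(x_0)\in P_{i_k}$ for every $0\leqslant k\leqslant t$. Since $T$ is supplied in unary and the number $p$ of regions is polynomial in the input size, such a signature has polynomial length.

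Given a guessed signature, verification reduces to checking feasibility of a system of linear inequalities in the unknown $x_0\in[0,1]^d$. The constraints are: (i) $x_0\in R_0$; (ii) for each $0\leqslant k\leqslant t$, the point $f_{i_k}\circ f_{i_{k-1}}\circ\cdots\circ f_{i_0}(x_0)$ lies in $P_{i_k}$; and (iii) $f^{[t]}(x_0)\in R$. By \lemref{lem:affine_to_linear} and the subsequent remark, each of these composed affine maps is represented as a product of at most $t+1$ homogeneous matrices, one per region crossed along the signature. \lemref{lem:sz_paf_it} bounds the bit-size of such a product polynomially in $t$, $d$, $p$, and $s(f)$, so the entire linear system has polynomial size in the size of the original input.

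It remains to decide feasibility of this system in polynomial time. This can be done either by invoking polynomial-time linear programming, or, in keeping with \lemref{lem:sys_poly_size_sol}, by observing that any feasible system of this form admits a rational solution $x_0$ of polynomial bit-size; the nondeterministic algorithm may additionally guess $x_0$ and verify all the inequalities by direct substitution. Soundness of the procedure follows because a signature together with a witnessing $x_0$ forces $f^{[t]}(x_0)\in R$ with $x_0\in R_0$; completeness follows because any witness $x_0\in R_0$ with $f^{[t]}(x_0)\in R$ for some $t\leqslant T$ induces a signature by reading off the regions visited by its first $t+1$ iterates (breaking ties arbitrarily on shared boundaries).

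The main conceptual step is that NP-nondeterminism can bypass the continuum of possible starting points by guessing only the \emph{discrete} region signature, after which the remaining problem becomes purely linear-algebraic. The central technical obstacle is to guarantee that the composed maps along the guessed signature do not incur exponential blowup in their coefficients; this is exactly the content of \lemref{lem:sz_paf_it}, which is where the argument does its real work.
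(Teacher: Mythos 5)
Your proposal takes essentially the same approach as the paper's proof: nondeterministically guess the time $t\leqslant T$ and the region signature, reduce to feasibility of a system of linear inequalities in the unknown $x_0$, and rely on \lemref{lem:affine_to_linear}, \lemref{lem:sz_paf_it}, and \lemref{lem:sys_poly_size_sol} to control the bit-size of the system and of a rational witness. One small indexing slip to fix: in your constraint (ii) the composed map should be $f_{i_{k-1}}\circ\cdots\circ f_{i_0}$ (so that the $k$-th iterate lands in $P_{i_k}$), not $f_{i_k}\circ\cdots\circ f_{i_0}$.
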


\begin{proof}
The idea of the proof is to nondeterministically choose a signature for a solution,
that is a sequence of regions for the iterates of the solution. We then build a
system of linear inequalities stating that a point $\vec{x}$ belongs to the initial region
and that the iterates match the signature chosen and finally that the iterates reach the
final region. Using the results of the previous section, we can build this system
in polynomial time and solve it in non-deterministic polynomial time. Here is an
outline of the algorithm:
\begin{itemize}
\item Non-deterministically choose $t\leqslant T$
\item Non-deterministically choose regions $r_1,\ldots,r_{t-1}$ regions of $f$
\item Define $r_0=R_0$ the initial region and $r_t=R$ the final region
\item Build $(S)$ the system $A\vec{x}\leqslant\vec{b}$ stating that the signature of $\vec{x}$ matches $r$
\item Non-deterministically choose $\vec{x_s}$ a rational of polynomial size in the size of $(S)$
\item Accept if $A\vec{x_s}\leqslant\vec{b}$
\end{itemize}

We have two things to prove. First we need to show that this algorithm indeed has non-deterministic
polynomial running time. Second we need to show that it is correct. Recall that $T$ is a
unary input of the problem.

The complexity of the algorithm is clear, assuming that $(S)$ is of polynomial size.
Indeed verifying that a rational point satisfies a system of linear inequalities
with rationals coefficients can be done in polynomial time.

We build $(S)$ this way: $(S)=\cup_{i=1}^t(S_i)$ where $(S_i)$ states that $f^{[i]}(\vec{x})\in r_i$.
Since we choose a signature of $\vec{x}$ we know that if $\vec{x}$ satisfies the system then from
\lemref{lem:affine_to_linear}
$f^{[i]}(\vec{x})=g_d\left(A'_{i-1}\cdots A'_1(\vec{x},1)^T\right)$ where $A'_j$ is the
matrix corresponding to the region $r_j$. Write $C_i=A'_{i-1}\cdots A'_1$ and define $(S_i)$
by the system $g_d\left(C_i(\vec{x},1)^T\right)\in r_i$. Since $r_i$ is a polyhedron, $(S_i)$ is indeed
a system of linear inequalities\footnote{More precisely if $r_i$ is defined by $P_i(\vec{x},1)^T\leqslant0$
then $(S_i)$ is the system $P_iC_i(\vec{x},1)^T\leqslant0$}.

We can now see that $S$ is of polynomial size using \lemref{lem:sz_paf_it}.
Indeed, $s(C_i)\leqslant s(f^{[i]})\leqslant\text{poly}(s(f),i)$, thus
$s((S_i))\leqslant s(C_i)+s(r_i)\leqslant\text{poly}(s(f),i)$ because the
description of the regions is part of the size of $f$. And finally
$s((S))\leqslant\text{poly}(s(f),t)$.

The correctness follows from the construction of the system and \lemref{lem:sys_poly_size_sol}.
More precisely we show that $\vec{x}\in(S)$ if and only if $\forall i\in\{0,\ldots,t\}, f^{[i]}(\vec{x})\in r_i$.
Indeed, $(S)\Leftrightarrow\forall i\in\{0,\ldots,t\},\vec{x}\in(S_i)$ and
by definition $(S_i)\Leftrightarrow f^{[i]}(\vec{x})\in r_i$ since $g_d(C_i(\vec{x},1)^T)=f^{[i]}(\vec{x})$.
Then by \lemref{lem:sys_poly_size_sol}, we get that
$\exists \vec{x}\in(S)\Leftrightarrow\exists \vec{x}\in(S)\text{ and }s(\vec{x})\leqslant\text{poly}(s((S)))$.
\qed\end{proof}

\section{Other Bounded Time Results}

In this section, we give succinct proofs of the other result mentioned in the
introduction about \texttt{CONTROL-REGION-TIME}. The proof is based on the same
arguments as before.

\begin{theorem}\label{th:coreach_conp_hard}
Problem \texttt{CONTROL-REGION-TIME} is coNP-hard for $d\geqslant2$.
\end{theorem}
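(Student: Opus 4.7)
The plan is to reduce \texttt{NOSUBSET-SUM}, which is coNP-complete by \thref{th:nosubset_sum_conpc}, to \texttt{CONTROL-REGION-TIME}. Given an instance $\mathcal{I}=(B,A_1,\ldots,A_n)$, I would reuse the piecewise-affine simulator $f_\mathcal{I}$ of Section~3 and extend it with a short decision phase so that the \texttt{CONTROL-REGION-TIME} instance answers YES exactly when no subset of $\mathcal{I}$ sums to $B$.

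The construction would have three stages. First, run $f_\mathcal{I}$ for $n+1$ iterations, so by \lemref{lem:simul_stable} every point of $R_0$ lands in $R_{n+1}$; by \lemref{lem:backward_forward} combined with \lemref{lem:fin_is_accept}, a trajectory is in $R_{fin}$ at that moment iff the subset it encodes sums to $B$. Second, append a bounded number of piecewise-affine iterations that continuously route $R_{n+1}\setminus R_{fin}$ into a ``target'' polyhedron $R^*$, while collapsing $R_{fin}$ onto a disjoint polyhedron $R^{\mathrm{trap}}$ on which the extended map is the identity. Third, take the \texttt{CONTROL-REGION-TIME} instance to be $(f'_\mathcal{I},R_0,R^*,T)$ with $T=n+O(1)$; correctness then reads: every trajectory from $R_0$ reaches $R^*$ within $T$ steps iff no trajectory ever visits $R_{fin}$, iff by \lemref{lem:backward_forward}, \lemref{lem:fin_is_accept} and \lemref{lem:subset_sum_simul} no subset of $\mathcal{I}$ sums to $B$.

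The main obstacle is the continuity of the decision phase across $\partial R_{fin}$: since $R^*$ and $R^{\mathrm{trap}}$ must both be polyhedral and hence closed, a single affine piece cannot abruptly switch targets at this boundary. I would resolve this with a comparator sub-phase in the spirit of the linear/saturated splitting of \defref{def:subset_paf}: the first coordinate of any point of $R_{n+1}$ is $a=(n+1)2^{-p}+\sigma 2^{-q}$, which depends affinely on $\sigma$, so $O(1)$ continuous affine pieces suffice to produce an auxiliary quantity whose sign cleanly distinguishes $R_{fin}$ from its complement inside $R_{n+1}$. A final affine step then geometrically separates the two sign-regions into the disjoint polyhedral targets $R^*$ and $R^{\mathrm{trap}}$. ``Invalid'' starting points of $R_0$ are benign thanks to \lemref{lem:backward_forward}, which forces any $R_{fin}$-visit to witness an actual satisfying subset, so they cannot spuriously make \texttt{CONTROL-REGION-TIME} fail when $\mathcal{I}$ is unsatisfiable. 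Since the extension adds only $O(1)$ iterations of polynomial-size affine pieces, the reduction is polynomial and $T$ remains polynomial when given in unary.
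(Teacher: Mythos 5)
Your proposal follows essentially the same route as the paper: reduce from \texttt{NOSUBSET-SUM} by reusing the Section~3 simulator, make $R_{fin}$ an $f$-stable trap, continuously route the rest of $R_{n+1}$ into a new target region disjoint from the trap, and invoke Lemmas~\ref{lem:simul_stable}, \ref{lem:backward_forward}, and \ref{lem:fin_is_accept} for correctness in both directions. The only difference is cosmetic — the paper implements the decision step by splitting $R_{n+1}$ into $R_{low}$, $R_{fin}$, $R_{high}$ and redefining $f$ there directly rather than appending $O(1)$ extra comparator iterations — so your plan matches the paper's proof.
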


\begin{proof}
The proof is exactly the same except for two details:
\begin{itemize}
\item we modify $f$ over $R_{n+1}$ as follows: divide $R_{n+1}$ in three regions:
$R_{low}$ that is below $R_{fin}$, $R_{fin}$ and $R_{high}$ that is above $R_{fin}$.
Then build $f$ such that $f(R_{low})\subseteq R_{low}$, $f(R_{fin})\subseteq R_{fin}$
and $f(R_{high})\subseteq R_{low}$.
\item we choose a new final region $R'_{fin}=R_{low}$.
\end{itemize}
Let $\mathcal{I}=(B,A_1,\ldots,A_n)$ be an instance of NOSUBSET-SUM,
let $\mathcal{J}$ be the corresponding instance of CONTROL-REGION-TIME we just built.
We have to show that $\mathcal{I}$ has no subset sum if and only if $\mathcal{J}$
is ``controlled''. This is the same as showing that $\mathcal{I}$ has a subset sum
if and only if $\mathcal{J}$ has points never reaching $R'_{fin}$.
    
Now assume for a moment that the instance is in SUBSET-SUM (as opposed to NOSUBSET-SUM),
then by the same reasoning as the previous proof, there will be a point that reaches the
old $R_{fin}$ region (and is disjoint from $R'_{fin}$). And since
$R_{fin}$ is a $f$-stable region, this point will never reach $R'_{fin}$.

And conversely, if the control problem is not satisfied, necessarily there is a point
whose trajectory went
through the old $R_{fin}$ (otherwise if would have reached either $R_{low}=R'_{fin}$ or $R_{high}$
but $f(R_{high})\subseteq R_{low}$). Now we proceed as in the proof of \thref{th:reach_np_hard}
to conclude that there is a subset that sums to $B$, and thus $\mathcal{I}$ is satisfiable.
\qed\end{proof}

\begin{theorem}\label{th:coreach_conp}
Problem \texttt{CONTROL-REGION-TIME} is in coNP for $d\geqslant2$.
\end{theorem}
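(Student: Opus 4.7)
The plan is to show that the complement of \texttt{CONTROL-REGION-TIME} belongs to NP, mirroring the strategy of \thref{th:reach_in_np}. A \emph{no}-instance is witnessed by an initial point $x_0 \in R_0$ whose orbit $x_0, f(x_0), \ldots, f^{[T]}(x_0)$ stays entirely outside $R$. The main obstacle compared to the reachability case is that the avoidance condition is \emph{not} expressible by a single polyhedron, since $\dom(f) \setminus R$ is in general non-convex, so the linear-feasibility strategy used before cannot be applied verbatim.

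My plan for handling this is to branch nondeterministically over which face of $R$ is transgressed at each step. Concretely, if $R$ is defined by a polyhedral system $A\vec{y} \leqslant \vec{b}$ with $m$ rows, then $\vec{y} \notin R$ iff some row is strictly violated. The NP algorithm for the complement will therefore: (i) nondeterministically pick a signature $r_0, r_1, \ldots, r_T$ of regions of $f$ with $r_0$ intersecting $R_0$, exactly as in the proof of \thref{th:reach_in_np}; (ii) nondeterministically pick, for each step $t \leqslant T$, an index $j_t \in \{1, \ldots, m\}$ identifying the inequality of $R$ that $f^{[t]}(x_0)$ will violate. This collapses avoidance into a \emph{single} system $(S)$ of linear (in)equalities on the variable $\vec{x}$, combining: $\vec{x} \in R_0$; the signature constraints $f^{[t]}(\vec{x}) \in r_t$; and the avoidance constraints $(A f^{[t]}(\vec{x}))_{j_t} > b_{j_t}$, one per step.

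Next, \lemref{lem:affine_to_linear} and \lemref{lem:sz_paf_it} control the size of the composed matrices representing $f^{[t]}$ along the chosen signature, ensuring that $(S)$ has total size polynomial in the input (crucially using that $T$ is given in unary). A polynomial-size rational witness $\vec{x}_s$ can then be guessed and verified via \lemref{lem:sys_poly_size_sol}; the strict inequalities introduced in step (ii) are reduced to weak inequalities by introducing a rational slack of magnitude $2^{-\mathrm{poly}(s((S)))}$, a standard argument that does not change feasibility.

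For correctness, soundness is immediate from the construction. Completeness follows because, given an avoiding orbit $x_0, \ldots, f^{[T]}(x_0)$, one reads off a consistent signature (choosing arbitrarily on boundaries) and, for each $t$, any inequality of $R$ strictly violated by $f^{[t]}(x_0)$, yielding a certificate of the required form. The crux of the argument is therefore step (ii); once the non-convex avoidance condition has been disjunctively guessed away, the proof reduces to essentially the same linear-feasibility argument already used in \thref{th:reach_in_np}.
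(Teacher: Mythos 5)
Your proof follows the same overall strategy as the paper — showing the complement is in NP by guessing a signature along the orbit, appealing to \lemref{lem:affine_to_linear}, \lemref{lem:sz_paf_it}, and \lemref{lem:sys_poly_size_sol} for the size bounds, and using that $T$ is unary — so the two arguments are close in spirit. The difference is one of rigor: the paper's proof is a two-line remark that one should "only choose signatures that avoid the final region," which, taken literally, does not yield a linear system, since a region $r_i$ of $f$ may straddle the hyperplanes defining $R$ and $\dom(f)\setminus R$ is non-convex. You have put your finger on exactly this issue and supplied the standard fix: additionally guess, for each time step $t$, an index $j_t$ of a face of $R$ that is strictly violated, turning the non-convex avoidance condition into one half-space constraint per step and recovering a single polyhedral system. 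You also correctly flag the secondary point that this introduces strict inequalities and note the slack-perturbation argument needed to fall back on \lemref{lem:sys_poly_size_sol}, which the paper does not mention. In short, you do not take a different route from the paper; rather, you carry out the route the paper only gestures at, and your version is the more complete one.
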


\begin{proof}
Again the proof is very similar to that of \thref{th:reach_in_np}: we have to build
a non-deterministic machine that accepts the ``no'' instances. The algorithm is
exactly the same except that we only choose signatures that avoid the final region
(as opposed to ending in the final region) and are of maximum length (that is $t=T$ as opposed
to $t\leqslant T$).
Indeed, if there is a such a trajectory, the problem is not satisfied. And for the same
reasons as \thref{th:reach_in_np}, it runs in non-deterministic polynomial time.
\qed\end{proof}

\section{Fixed Precision Results}

\begin{theorem}\label{th:reach_reg_prec_pspace_hard}
\texttt{REACH-REGION-PRECISION} and \texttt{CONTROL-REGION-PRECISION} are $PSPACE$-hard.
\end{theorem}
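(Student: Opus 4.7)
The plan is to reduce the PSPACE-complete problem \texttt{LINSPACE-WORD} to \texttt{REACH-REGION-PRECISION}, and then adapt the reduction for \texttt{CONTROL-REGION-PRECISION}. Fix an LBA $\mathcal{M}=(Q,\Sigma,\delta,q_0,q_{acc})$ and input $w$ of length $m$. A configuration of $\mathcal{M}$ on $w$ is a triple $(q,i,\tau)$ with $q\in Q$, $i\in\{1,\ldots,m\}$ the head position, and $\tau\in\Sigma^m$ the tape content; there are $C=|Q|\cdot m\cdot|\Sigma|^m$ such configurations. I would set $n=O(m\log|\Sigma|)$ so that $\varepsilon=2^{-n}$ satisfies $C\leqslant 2^{2n}$; since $n$ is polynomial in the input length, it fits in unary as required.

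I would build a continuous $f\in PAF_2$ and polyhedral regions $R_0,R\subseteq[0,1]^2$ so that $\tilde{f}_\varepsilon$ simulates $\mathcal{M}$ step by step on the $\varepsilon$-grid. Following the classical Koiran--Cosnard--Garzon encoding, the first coordinate stores the pair $(q,i)$ in the high-order bits of its binary expansion, while the second coordinate stores the tape $\tau$ as $m$ digits in some base $\beta>|\Sigma|$ (as in \secref{sec:subset_sum_affine}, a base larger than strictly necessary leaves spacing between valid encodings, which helps with continuity). The map $f$ is then defined piecewise with one affine region per triple (state, head position, symbol under head), implementing the corresponding local transition through affine digit shifts and rewrites. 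Let $R_0$ be a small polyhedron around the encoding of $(q_0,1,w)$ and $R$ the union of polyhedra encoding configurations in state $q_{acc}$.

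The crucial ingredient is the truncation itself: after each step, $\tilde{f}_\varepsilon$ snaps the state back onto the grid of multiples of $\varepsilon$, so no rounding error accumulates and the simulation stays faithful indefinitely. Since the grid contains only $2^{2n}$ points, every trajectory is eventually periodic with period at most $2^{2n}$, and because $\mathcal{M}$ either halts in at most $C\leqslant 2^{2n}$ steps or loops forever, $\mathcal{M}$ accepts $w$ iff $R$ is reachable from $R_0$ by iteration of $\tilde{f}_\varepsilon$. The reduction has polynomial size: $n$ is unary, the number of affine regions of $f$ is polynomial (one per transition of $\mathcal{M}$), and all coefficients are rationals of size polynomial in $n$.

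The main obstacle, as emphasized throughout the paper, is continuity: the polyhedral region $R_0$ is a thick cell rather than a single point, so we must control the trajectories of the ``invalid'' neighbors of the valid initial encoding. Following the ideas of \defref{def:subset_paf_ex}, I would interpolate $f$ between grid points so that, after one truncation step, every invalid point is routed into a dedicated $f$-stable trap region disjoint from $R$; only the valid encoding then produces a meaningful trajectory. For \texttt{CONTROL-REGION-PRECISION} we adapt the trick of \thref{th:coreach_conp_hard}: take the new target to be this trap and make the acceptance region $f$-stable and disjoint from it, so that ``every trajectory reaches the target'' holds iff $\mathcal{M}$ does not accept $w$. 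PSPACE-hardness transfers because PSPACE is closed under complementation.
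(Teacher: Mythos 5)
Your high-level plan (reduce from \texttt{LINSPACE-WORD}, encode configurations in $[0,1]^2$, simulate one LBA step per iteration, and use the truncation to keep the simulation exact) is the same as the paper's. But your choice of encoding breaks the reduction, and the fix is precisely what the Koiran--Cosnard--Garzon (and the paper's) encoding is designed to achieve.

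You store the head position $i$ in the first coordinate and the whole tape $\tau$ as a base-$\beta$ number $\sum_j \tau_j\beta^{-j}$ in the second, and you propose ``one affine region per (state, head position, symbol under head).'' For this to be a piecewise affine function in the paper's sense, each such region must be a union of \emph{polynomially many} polyhedra. But the set of $b\in[0,1]$ whose $i$-th base-$\beta$ digit equals a given $\sigma$ is a union of $\beta^{i-1}$ disjoint intervals, so for $i\approx m$ the region associated with ``symbol under head is $\sigma$'' is an exponentially large union of polyhedra. Thus the constructed $f$ has exponentially many affine pieces and the reduction is not polynomial. The paper avoids this exactly by using a two-stack encoding: the tape is split at the head, with the reversed left half in coordinate one and the state, head symbol, and right half in coordinate two, so the symbols one needs to branch on are always the \emph{leading} digits. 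Membership in $R_{\alpha,q,\sigma}$ then reduces to constraints on the leading digits of both coordinates, i.e., to a constant number of inequalities per region, giving polynomially many polyhedra. Incidentally, this is what ``the Koiran--Cosnard--Garzon encoding'' actually is; the encoding you describe is a different one that does not share its crucial property.

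Two smaller remarks. First, the trap-region machinery you invoke for invalid neighbors of the initial encoding is unnecessary here: the paper observes that if $R_0$ is a ball of radius smaller than $\varepsilon(2^\gamma)^{-2}$ around $\langle c_0\rangle$, then the very first truncation collapses all of $R_0$ to the single grid point $\langle c_0\rangle$, so there are no spurious trajectories to manage at all. Second, this same observation makes \texttt{CONTROL-REGION-PRECISION} immediate: since $R_0$ becomes a single trajectory after one step, the $\forall$ and $\exists$ quantifiers over $R_0$ coincide, and the \emph{same} instance witnesses both hardness results; there is no need to pass to the complement and build a stable sink as you do (though that route could also be made to work once the encoding is repaired and one ensures the LBA always halts).
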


\begin{proof}
Consider a polynomial space Turing machine $\mathcal{M}=(Q,\Sigma,B,q_0,F,\delta)$.
Without loss of generality, we can assume that $F=\{q_f\}\subset Q$ (there is a single
accepting state) and that the working alphabet is
$\Sigma=\{0,1,2,\ldots,\beta\}$, assuming that $B=0$ is the blank
character, and $\delta:Q\times\Sigma\rightarrow Q\times\Sigma\times\{\triangleleft,\square,\triangleright\}$
is complete. We also assume the set of internal states to be such that $Q\subseteq\Sigma^m$ for some $m$.

Let $c$ be an instantaneous configuration (sometimes also called
\emph{ID} for \emph{Instataneous Description}) of $\mathcal{M}$.
We write $c=(x,\sigma,q,y)$ where $x$ (resp. $y$) is
encoding the part of the tape on the left (resp. right) of the
head, $\sigma$ is the symbol under the  head, and $q$ is the state of the
machine. Specifically, if the non-blank part of the tape is
$s_{-n} \dots s_{-1} \sigma s_1 s_2 \dots s_m$, with the head in front
of $\sigma$, then $x$ is encoded as word $s_{-1} s_{-2} \dots s_{-n}$,
and $y$ as word $s_1 s_2 \dots s_m$.

Define the encoding of configuration $c$ as $\langle c\rangle=(\langle x\rangle,\langle q\sigma y\rangle)$
where for any word $w\in\Sigma^*$, $\langle
w\rangle=\sum_{i=1}^{|w|} 2 w_i(2^\gamma)^{-i}$, where $\gamma$ is such that
$2 \beta+1 \leqslant 2^\gamma$. 
Define regions $R_{\alpha,q,\sigma}=[\langle\alpha\rangle]\times[\langle q\sigma\rangle]$
where $[\langle w\rangle]$ is a shortcut for $[\langle w\rangle]=[\langle w\rangle,\langle w\rangle+(2^\gamma)^{-|w|}]$.
Intuitively, $R_{\alpha,q,\sigma}$ contains all configurations in state $q$, with symbol $\sigma$ under the head
and symbol $\alpha$ immediately at the left of the head.
By construction $\langle c\rangle\in R_{x_1,q,\sigma}$ with the above notations.
Finally, for $\alpha,\sigma\in\Sigma$, $q\in Q$, define $f$ on region $R_{\alpha,q,\sigma}$ by:
\[
f(a,b)=\begin{cases}a2^{-\gamma}+\langle\sigma'\rangle,(b-\langle q\sigma\rangle)2^\gamma+\langle q'\rangle)
    &\text{if }\delta(q,\sigma)=(q',\sigma',\triangleright)\\
   (a,b-\langle q\sigma\rangle+\langle q'\sigma'\rangle)
    &\text{if }\delta(q,\sigma)=(q',\sigma',\square)\\
(a-\langle\alpha\rangle)2^\gamma,(b-\langle q\sigma\rangle)2^{-\gamma}+\langle q'\alpha\sigma'\rangle)
    &\text{if }\delta(q,\sigma)=(q',\sigma',\triangleleft)
\end{cases}
\]

It is clear from the definition that $f$ is piecewise affine over its domain of definition.
Let $T$ be the function corresponding to one step of computation of
$\mathcal{M}$: $T$ is acting on 
configurations and maps any configuration $c$ to the
corresponding next configuration according to the program of
$\mathcal{M}$. A simple case analysis shows that for any configuration
$c$, $\langle T(c)\rangle=f(\langle c\rangle)$, using the fact that the blank
character $B$ was chosen to be $0$.

Observe furthermore that by the choice of the encoding, all the
regions $R_{\alpha,q,\sigma}$ are closed and at positive distance from each other:
$R_{\alpha,q,\sigma} \cap R_{\alpha',q',\sigma'}
= \emptyset$ whenever $(\alpha,q,\sigma) \neq (\alpha',q',\sigma')$.
It follows that $f$ can
be easily extended to a continuous piecewise linear function defined
over the whole domain
$[0,1]$ (similar arguments are used in \cite{KCG94}). By construction it will still satisfy  that for any configuration
$c$, $\langle T(c)\rangle=f(\langle c\rangle)$.

We can now state the reduction from problem
\texttt{LINSPACE-WORD}: consider an instance $(\mathcal{M},w)$
of this decision problem.  Define
$\varepsilon=(2^\gamma)^{-({|w|+m+1)}}$ where the choice of $m$ was explained above.
Then for any configuration $c$
reachable from the initial configuration
$c_0=(\varepsilon,q_0,w_1,$ $w_2\cdots w_{|w|})$, we have the stronger
property that
$\langle c\rangle=\lfloor\frac{\langle
  c\rangle}{\varepsilon}\rfloor\varepsilon$.
Indeed, by assumption
the machine never uses more space than the size of the input, thus the left and
right part of tape are always smaller than $|w|$ at any point during the computation,
and we simply need an extra $m+1$ space to store the current state of the machine.
In other words, rounding to $\varepsilon$ does not perturbate the
simulation. Consequently, we get that for any reachable configuration
$c$, $\langle T(c)\rangle=f_\varepsilon(\langle c\rangle)$.

Define $R_0=\{\langle c_0\rangle\}$ and $R=[0,1]\times[\langle q_f\rangle]$ that are convex regions. Then
the instance $(f,R_0,R,\varepsilon)$ of \texttt{REACH-REGION-PRECISION} is satisfiable
if and only if $(\mathcal{M},w)$ belongs to problem
\texttt{LINSPACE-WORD}. One easily checks that $(f,R_0,R,\varepsilon)$ has polynomial
size in the size of $(\mathcal{M},w)$.

The same instance also works
for \texttt{CONTROL-REGION-PRECISION}. If we want to make $R_0$ a region with non-empty interior,
just take a ball of radius smaller that $\varepsilon(2^\gamma)^{-2}$ around $\langle c_0\rangle$
so that any input error is removed after the first application of the function $f_\varepsilon$.
\qed\end{proof}

\begin{theorem}\label{th:reach_reg_prec_in_pspace}
\texttt{REACH-REGION-PRECISION} and \texttt{CONTROL-REGION-PRECISION} are in $PSPACE$.
\end{theorem}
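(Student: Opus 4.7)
The plan is to exploit the fact that the rounding operation makes the dynamics take place on a finite but exponentially large grid whose points are representable in polynomially many bits.

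I would begin with the observation that for any $x\in[0,1]^d$ the image $\tilde{f}_\varepsilon(x)$ lies on the grid $G=(\varepsilon\Z)^d\cap[0,1]^d$, which contains at most $(2^n+1)^d$ points. Because $n$ is given in unary, each grid point is encoded in $O(dn)$ bits, i.e. polynomial in the input size, while $|G|$ is exponential. On $G$ the map $\tilde{f}_\varepsilon$ is a deterministic function, and any successor $\tilde{f}_\varepsilon(y)$ is computable in polynomial time: locate the region $P_i$ of $f$ containing $y$, evaluate $A_iy+\vec{b}_i$, and floor each coordinate to a multiple of $\varepsilon$. Thus both problems reduce to reachability questions in a functional graph over $G$ whose successor relation is polynomially decidable, which is exactly the setting for a PSPACE algorithm via Savitch's theorem.

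For \texttt{REACH-REGION-PRECISION} the algorithm first handles $t=0$ by testing $R_0\cap R\neq\emptyset$, a polynomial-size linear programming feasibility question. Otherwise it nondeterministically guesses a grid point $y\in G$, verifies $y\in\tilde{f}_\varepsilon(R_0)$ via feasibility of the linear system $x\in R_0\wedge\forall i,\ y_i\leqslant f_i(x)<y_i+\varepsilon$ (nondeterministically guessing the region of $f$ to which $x$ belongs, so the system is linear), and then iterates $\tilde{f}_\varepsilon$ starting from $y$ for at most $|G|$ steps, accepting if any iterate falls in $R$. Maintaining only the current point and a $d(n+1)$-bit step counter, this is an NPSPACE procedure, hence in PSPACE.

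For \texttt{CONTROL-REGION-PRECISION} I would use closure of PSPACE under complementation. A ``no'' instance is witnessed by some $x_0\in R_0$ whose entire trajectory avoids $R$. The complementary algorithm guesses $y\in G$, verifies one-step reachability from some $x_0\in R_0\setminus R$ (combining the LP above with the additional linear constraint $x_0\notin R$), and then iterates $y,\tilde{f}_\varepsilon(y),\ldots$ for $|G|+1$ steps checking that no iterate lies in $R$; reaching this many iterates without entering $R$ forces the trajectory into a cycle outside $R$, so it never enters $R$. The main obstacle is bridging the continuous initial region $R_0$, which need not be grid-aligned, with the discrete grid $G$; this is resolved cleanly by coupling the nondeterministic guess of $y$ with a single LP-feasibility check for a preimage in $R_0$, after which all subsequent iterates live in $G$ and the rest of the argument takes place in the finite functional graph.
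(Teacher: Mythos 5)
Your proposal is correct and takes essentially the same route as the paper: both exploit the fact that after one application of $\tilde{f}_\varepsilon$ the dynamics live on a grid of at most $\bigO{(2^n+1)^d}$ points, each representable in polynomially many bits, so reachability reduces to a graph problem on an exponential-size but polynomially-indexed vertex set, which is in NPSPACE $=$ PSPACE by Savitch's theorem. The main implementation difference is that the paper works with an explicit graph on the half-open cells $C_\alpha$ with edge relation $f(C_\alpha)\cap C_\beta\neq\varnothing$ and invokes the $\bigO{\log^2 N}$-space bound for graph accessibility, while you run the deterministic successor $\tilde{f}_\varepsilon$ in place, carrying only the current grid point and a step counter. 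Your formulation is in fact a bit sharper: the deterministic grid dynamics send $\alpha\varepsilon$ to the unique $\beta\varepsilon$ with $f(\alpha\varepsilon)\in C_\beta$, whereas the paper's edge relation permits transitions through \emph{any} point of $C_\alpha$, an overapproximation that your direct iteration avoids; you also make the bridge from the continuous $R_0$ to the grid explicit via a single LP feasibility check, and you use closure of PSPACE under complementation for \texttt{CONTROL-REGION-PRECISION}, which is cleaner than the paper's phrasing in terms of paths from every $S$-vertex. One small point to tidy up: the constraint ``$x_0\notin R$'' is not a single linear constraint (the complement of a polyhedron is a union of open half-spaces), so your machine should additionally guess which defining inequality of $R$ is violated before running the LP; this costs nothing in space and does not affect the argument.
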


\begin{proof}
Let $N=\lfloor \varepsilon^{-1}\rfloor$ and consider the graph $G=(V,E)$ where
\[V=\{0,\ldots,N-1\}^d
\qquad C_\alpha=\prod_{k=1}^d\big[\alpha_k\varepsilon,(\alpha_k+1)\varepsilon\big[\quad(\alpha\in V)\]
\[E=\left\{(\alpha,\beta)\thinspace|\thinspace f(C_\alpha)\cap
  C_\beta\neq\varnothing\right\} \qquad  S=\{\alpha\thinspace|\thinspace f(R_0)\cap
C_\alpha\neq\varnothing\} \]
\[ T=\{\alpha\thinspace|\thinspace R\cap C_\alpha\neq\varnothing\}\]
We can now restate our reachability problem in the graph $G$ as an accessibility problem
from $S$ to $T$. This can be done in space logarithmic in the size of
the graph, using the fact that accessibility in a graph with $N$
vertices can be done in non-deterministic 
space $O(\log N)$, and using the fact that $NSPACE(N)=SPACE(N^2)$
(Savitch's Theorem) \cite[Theorem 8.5]{Sip97}. 
Since the graph is of size $\bigO{N^d}$, this requires space
$\bigO{\log^2 N}=\bigO{-\log^2 \varepsilon}=\bigO{n^2}$ if $\varepsilon=2^{-n}$.
Also note that computing the transitions of the graph is fast since $f$ is a piecewise
affine function. The same proof applies to \texttt{CONTROL-REGION-PRECISION}
except we now want to know if for every vertex in $S$ there is a path to $T$.
\qed\end{proof}



\bibliographystyle{splncs}
\bibliography{references,perso,bournez}

\end{document}